\newcommand{\distr}{\mathcal{U}}
\newcommand{\distrD}{\mathcal{D}}
\newcommand{\distrP}{\mathcal{P}}
\newcommand{\Prand}{\distrP_{\textsf{rand}}}
\newcommand{\Ex}{\operatorname*{\mathbb{E}}}
\newtheorem{theo}{Theorem}[section]
\newtheorem{lemma}[theo]{Lemma}
\newtheorem{claim}{Claim}
\newtheorem{cor}[theo]{Corollary}
\theoremstyle{definition}
\newtheorem{defi}[theo]{Definition}
\theoremstyle{plain}
\newtheorem{rem}[theo]{Remark}
\newtheorem{ft}[theo]{Fact}
\newenvironment{proofof}[1]{\begin{proof}[Proof of #1]}{\end{proof}}
\newenvironment{reminder}[1]{\bigskip
	\noindent {\bf Reminder of #1  }\em}{\smallskip}
\renewcommand{\epsilon}{\varepsilon}
\newcommand{\WH}{\widehat}
\newcommand{\OL}{\overline}
\newcommand{\odistr}{\OL{\distr}}
\newcommand{\BCAST}{\textsf{BCAST}}
\newcommand{\event}{\mathcal{E}}
\newcommand{\Ffullrank}{F_{\textsf{full-rank}}}
\newcommand{\distrA}{\mathcal{A}}
\newcommand{\Arand}{\distrA_{\sf rand}}
\newcommand{\distrS}{\mathcal{S}}
\newcommand{\polylog}{\operatorname{polylog}}
\def\ShowAuthNotes{1}
\newcommand{\authnote}[2]{\ \\ \textcolor{red}{\parbox{0.9\linewidth}{[{\footnotesize {\bf #1:} { {#2}}}]}}\newline}
\newcommand{\authnote}[2]{}
\date{}
\title{Broadcast Congested Clique: Planted Cliques and Pseudorandom Generators}
\author{Lijie Chen\thanks{Supported by NSF CCF-1741615 (CAREER: Common Links in Algorithms and Complexity). This work was done in part while the authors were visiting the Simons Institute for the Theory of Computing.}\\ MIT \and Ofer Grossman\thanks{Supported by the Fannie and John Hertz Foundation Fellowship, an NSF Graduate Research Fellowship, NSF CNS-1413920,
DARPA 491512803, 
Sloan 996698, and 
MIT/IBM W1771646.} \\ MIT}
\begin{document}
	\maketitle
    
\begin{abstract}
\normalsize
We develop techniques to prove lower bounds for the \BCAST($\log n$) Broadcast Congested Clique model (a distributed message passing model where in each round, each processor can broadcast an $O(\log n)$-sized message to all other processors). Our techniques are built to prove bounds for natural input distributions. So far, all lower bounds for problems in the model relied on constructing specifically tailored graph families for the specific problem at hand, resulting in lower bounds for artificially constructed inputs, instead of natural input distributions.

One of our results is a lower bound for the directed planted clique problem. In this problem, an input graph is either a random directed graph (each directed edge is included with probability $1/2$), or a random graph with a planted clique of size $k$. That is, $k$ randomly chosen vertices have all of the edges between them included, and all other edges in the graph appear with probability $1/2$. The goal is to determine whether a clique exists. We show that when $k = O(n^{1/4 - \epsilon}),$ this problem requires a number of rounds polynomial in $n$.

Additionally, we construct a pseudo-random generator which fools the Broadcast Congested Clique. This allows us to show that every $k$ round randomized algorithm in which each processor uses up to $n$ random bits can be efficiently transformed into an $O(k)$-round randomized algorithm in which each processor uses only up to $O(k \log n)$ random bits, while maintaining a high success probability. The pseudo-random generator is simple to describe, computationally very cheap, and its seed size is optimal up to constant factors. However, the analysis is quite involved, and is based on the new technique for proving lower bounds in the model.


The technique also allows us to prove the first average case lower bound for the Broadcast Congested Clique, as well as an average-case time hierarchy. We hope our technique will lead to more lower bounds for problems such as triangle counting, APSP, MST, diameter, and more, for natural input distributions. 
\end{abstract}

\section{Introduction}




Recently there has been a surge of lower bound results in the CONGEST, and Broadcast Congested Clique models (\cite{approx, power, korhonen2017deterministic, cliqueCONGEST} to list a few, though there are many more). In general, these results take the following approach: a specific carefully chosen family of graphs is constructed, and the vertices are partitioned into two parts (or, in some rare cases, three parts), and a reduction from classical two-party communication complexity is used.

A main downside of this approach is that it merely proves worst case lower bounds. For all we know, for more natural input distributions (instead of the artificially constructed graph families used for these communication complexity reductions), many problems which are worst-case hard may become easy. One of our main goals in this paper is to develop techniques that prove lower bounds for more natural input distributions.


The distributed model we consider is the Broadcast Congested Clique model (\BCAST(1))\footnote{Note that every lower bound for \BCAST(1) can be extended to a lower bound for \BCAST($\log n$) with only a $\log n$ factor loss in the number of rounds}. In this model, there are $n$ processors, each with unlimited local computational power. Computation proceeds in rounds, and in each round each processor broadcasts a single bit\footnote{It is standard to use messages of size $O(\log n)$, but for our purposes it is more natural to consider single-bit messages. All of our results generalize to the setting of logarithmic sized messages.} to all other processors (within a single round, a processor must broadcast the same bit to all other processors).

Our results include a lower bound for the planted clique problem. Additionally, we show how to use our techniques to construct a pseudo-random generator that fools the Broadcast Congested Clique model. This is the first pseudo-random generator which fools a distributed message passing model. As simple corollaries, we obtain an average-case lower bound for the model, and an average-case time hierarchy for the model.

\subsection{Our Approach}

\paragraph{How to deal with distributions}
Consider the (directed) planted clique problem\footnote{Note that in the broadcast congested clique model, as opposed to the CONGEST or unicast congested clique model, it is not possible to reduce from directed to undirected in one round.}. In this problem, the input is a directed graph which is either a random graph (each directed edge is included with probability $1/2$), or a random graph where $k$ of the vertices are chosen at random and all edges within this set of $k$ vertices are included (these $k$ vertices are called the planted clique). The goal is to distinguish between these two distributions (or, one can consider the search version of the problem, in which a graph with a planted clique is given, and the goal is to find the clique). For now, one can think of $k$ as approximately $n^{1/4}$.


When trying to prove a lower bound for the problem, one's first approach may be to try reducing to two-party communication complexity: split the vertices into two parts, and argue that a lot of information must pass between the parts. This approach will not work for the problem, since no matter how the graph is split up, at least one of the parts will be able to detect the presence of the clique, since at least one of the parts must have many of the clique's vertices.

One's next approach may be the following: show that for any Congested Clique algorithm, after $t$ rounds, the distribution of transcripts of the Congested Clique algorithm (the ``transcript" is the history of the algorithm; that is, the ``transcript" is a list of all messages sent so far as well as who sent which message and when the message was sent) if the input were uniform is close to the distribution of transcripts if the input had a planted clique. To prove this, one may take an inductive approach: show that if the transcripts were similar for $t-1$ rounds, the next round can distinguish between the distributions with only low probability. In this round, all vertices are broadcasting, so one may try to use another approach: show that each vertex reveals little information about whether the graph has a planted clique or not, and therefore, the whole round reveals little information about whether there is a clique. This way, we would only need to analyze a single broadcast at a time, a much simpler task than considering a whole round at once.

There is an issue, however. The problem is that the inputs of different nodes may not be independent. Therefore, it is possible that while each processor's broadcast on its own will not reveal substantial information about whether the graph contains a clique, when we consider many processors' broadcasted bits the information revealed may be more than the sum of the information of the individual broadcasts.

To get around this, we instead split the planted clique distributions into a sum of many distributions, such that for each of these distributions, all processors' inputs are independent. Specifically, we can write the planted clique distribution as a sum over all possible cliques $C$ of a random graph with a clique planted at $C$. Notice that after fixing $C$, each vertex's input is independent of all other vertices' inputs. Now, when considering whether an algorithm distinguishes between a random graph with a clique at $C$ and a truly random graph, we can consider each node's output on its own, instead of trying to deal with all nodes at the same time. This is one of our main high level ideas: splitting the distribution into many distributions where in each one, different vertices' inputs are independent. This greatly simplifies the analysis by letting us avoid having to deal with many messages at once, and is what makes proving the lower bounds possible.

\paragraph{Statistical Inequalities}
The idea of partitioning a distribution into distributions with independent vertex inputs makes proving the lower bounds possible, but there is still lots of technical work to do. Specifically, we now have a bunch of distributions, and we need to show that any algorithm can distinguish only few of those distributions from uniform. At the high level, the idea to do this is to show that for any algorithm, for almost all cliques, when a vertex broadcasts a bit, the probability of that bit being broadcast with the clique vs without the clique is similar. This is basically a problem about Boolean functions: if we let $f$ be the function which takes in the node's input, and outputs what the node will broadcast, we wish to show that for almost all cliques $C$, when $f$'s input is uniform, the output distribution is similar to the distribution when $f$'s input is chosen with a planted clique at $C$. The inequalities we show are at the high level similar to this, but many more technical issues arise. For example, since we have to prove a multi-round lower bound, we have to condition on what a node broadcasted in previous rounds, so instead of proving the statistical inequalities for total functions, we instead have to prove the inequalities for functions defined on only part of $\{0, 1\}^n$. These inequalities are not true for all partial functions, but we manage to prove that they are true for all functions which are defined on a large enough subset. So then, there is another challenge of proving that with high probability, after conditioning on a transcript, the set of possible inputs to a node is large. 

\subsection{Our Results}




\paragraph{Lower bounds for the planted clique problem}

One of the problems we consider is the (directed) planted clique problem. In this problem, the input is a directed graph which is either a random graph (each directed edge is included with probability $1/2$), or a random graph where $k$ of the vertices are chosen and all edges within this set of $k$ vertices are added. The goal is to distinguish between these two distributions (or, one can consider the search version of the problem, in which a graph with a planted clique is given, and the goal is to find the clique). Because a random graph contains cliques of size $\Theta(\log n)$, the problem makes sense for larger values of $k$. Once $k$ goes substantially above $\sqrt{n}$, it is possible to find the clique by considering the vertices with highest degree. Hence, the interesting values of $k$ are between approximately $\log n$ and $\sqrt{n}$.

In the classical (non-distributed) setting, the planted clique problem is very well studied. There exists a spectral algorithm solving the problem when $k = O(\sqrt{n})$, and it remains a major open problem in complexity theory to understand whether the problem is hard when $k$ is smaller.

It has been shown that proving lower bounds in the unicast CONGESTED-CLIQUE\footnote{In the unicast model, in each round each processor can send one bit to each other processor without the requirement that the same bit is broadcast to all other processors; that is, within a single round a processor may choose to send the message ``1'' to some processors, and ``0'' to others.} would imply some strong circuit lower bounds \cite{power}, so finding lower bounds for the problem in this setting is quite a challenge. Finding upper bounds for planted clique in the unicast CONGESTED-CLIQUE model is an interesting problem, but it seems difficult -- maybe impossible -- to improve upon simple sampling-based algorithms for the problem. The next model one can look at is the broadcast CONGESTED-CLIQUE model, which is the model we consider. Specifically, we prove that for cliques of size $O(n^{1/4 - \varepsilon})$, the planted clique problem requires polynomially many rounds:	
\begin{theo}[Planted Clique lower bound]
	When $k = n^{1/4 - \epsilon}$ for a constant $\epsilon$, no $n^{o(1)}$ round $\BCAST(1)$ protocol $\Pi$ can distinguish between $\distrA_{\sf rand}$ and $\distrA_{k}$ with advantage\footnote{An algorithm distinguishing between two distributions $D_1$ and $D_2$ with advantage $\epsilon$ is an algorithm $A$ which, when given a random sample $s$ which with probability $1/2$ is drawn from $D_1$ and probability $1/2$ is drawn from $D_2$, then $A$ can successfully guess from which distribution $s$ was drawn with probability $1/2 + \epsilon$.} $\Omega(1)$.
\end{theo}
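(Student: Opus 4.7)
My plan is to follow the approach sketched in the introduction. First, I decompose $\distrA_k = \Ex_C [\distrA_k^C]$, where $C$ ranges over $k$-subsets of the vertex set and $\distrA_k^C$ is the random directed graph with the clique planted exactly at $C$. Under $\distrA_k^C$ the vertex inputs $x_1,\ldots,x_n$ are mutually independent: each $v\in C$ has the $(k-1)$ coordinates in $C\setminus\{v\}$ forced to $1$ and the remaining coordinates uniform, while each $v\notin C$ is fully uniform. By convexity of total variation distance the theorem reduces (via Le Cam) to showing that $\Ex_C[\mathrm{TV}(\tau^{\Pi}(\Arand),\tau^{\Pi}(\distrA_k^C))]=o(1)$, where $\tau^{\Pi}$ denotes the transcript random variable of the protocol $\Pi$.

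For a fixed $C$ I would run a round-by-round hybrid argument. After any number of rounds, conditioned on the transcript $\tau$ observed so far, each vertex $v$'s input lies in a ``transcript-consistent'' set $\Sigma_v(\tau)\subseteq\{0,1\}^n$ determined by $\tau$ and $\Pi$ alone. Under $\Arand$, $x_v$ is uniform on $\Sigma_v(\tau)$; under $\distrA_k^C$, $x_v$ is uniform on $\Sigma_v(\tau)\cap H_v^C$, where $H_v^C=\{x:x_{C\setminus\{v\}}=1^{k-1}\}$ if $v\in C$ and $H_v^C=\{0,1\}^n$ otherwise. Since the $x_v$'s remain independent given $\tau$ under either distribution, the chain rule for TV combined with subadditivity of TV under independent products lets me bound the contribution of the next round by $\sum_{v\in C}\bigl|\Ex[g_v(U_{\Sigma_v})]-\Ex[g_v(U_{\Sigma_v\cap H_v^C})]\bigr|$, where $g_v$ is that vertex's broadcast function in the round.

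The technical core is a statistical inequality for partial Boolean functions: if $g:\Sigma\to\{0,1\}$ with $|\Sigma|/2^n\ge 2^{-n^{o(1)}}$ and $T$ is a uniformly random $(k-1)$-subset of $[n]$, then $\Ex_T\bigl|\Ex[g(U_\Sigma)]-\Ex[g(U_{\Sigma\cap\{x_T=1\}})]\bigr|$ is polynomially small. I would prove this with Fourier analysis: expand the density-corrected bias in the $\{\pm 1\}$-Fourier basis, note that averaging over $T$ annihilates every character not entirely contained in $T$, and exploit the fact that $(k/n)^{|T|}$ is polynomially small at every level $|T|\ge 1$. A back-of-the-envelope calculation on the dominant level-$1$ contribution already pins the threshold at $k\approx n^{1/4}$ that appears in the theorem: Parseval bounds $\sum_i \hat{g}(\{i\})^2\le 1$, Cauchy--Schwarz then yields an expected squared bias of order $k^2/n$, and summing the resulting $n^{-1/4}$ bound over $k$ clique vertices and $n^{o(1)}$ rounds is $o(1)$ exactly when $k=n^{1/4-\epsilon}$.

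The main obstacle I anticipate is maintaining $|\Sigma_v(\tau)|\ge 2^{n-n^{o(1)}}$ with overwhelming probability throughout the execution. The Fourier inequality degrades sharply as the density $|\Sigma|/2^n$ shrinks (the total Fourier mass is controlled by $1/\Pr[\Sigma]$), so this lower bound must be preserved round after round. Since each vertex broadcasts only $n^{o(1)}$ bits in total, an averaging argument shows that a ``typical'' $\Sigma_v$ is large under $\Arand$. The delicate part is transferring this guarantee to $\distrA_k^C$ in a non-circular fashion, since the TV bound being proved is what would naively justify such a transfer. I would resolve this by defining a ``good transcript'' event on which $|\Sigma_v|$ stays large, verifying it directly under $\Arand$, and absorbing the small failure probability into the TV bound before closing the induction on rounds.
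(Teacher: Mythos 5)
Your high-level framework matches the paper's: decompose $\distrA_k$ into the row-independent distributions $\distrA_C$, run a turn/round-by-round hybrid via the conditioning inequality (Lemma~\ref{lm:dist}), reduce to a statistical inequality for a Boolean function restricted to the transcript-consistent set, and verify largeness of that set only under $\Arand$ to avoid circularity (this is exactly Claim~\ref{claim:large-D-p-clique}). The gap is in your proof of the core statistical inequality. You propose a Fourier/Parseval argument for the partial-function version, but the quantity you must control is a ratio $\Ex[g\,1_\Sigma\,1_{x_T=1}]/\Ex[1_\Sigma\,1_{x_T=1}]$, and a Parseval bound on the deviations of numerator and denominator only gives additive errors of order $\sqrt{\mu}\cdot k/\sqrt{n}$ where $\mu=|\Sigma|/2^n$; to keep the conditional density (the denominator) within a constant factor of $\mu$ you would need $\mu \gg k^2/n$. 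After $j$ rounds the consistent set typically has density $2^{-\Theta(j)}$, so for $j=n^{o(1)}$ (indeed already for $j=\omega(\log n)$) the Fourier bound is vacuous: the loss is polynomial in $1/\mu$, while the target error is only polynomially small in $n$. This is precisely why the paper abandons Fourier analysis for planted clique and proves Lemma~\ref{lm:close-subset-full} information-theoretically (entropy deficiency $t=\log(2^n/|D|)$, sub-additivity, Pinsker), obtaining $O(k\sqrt{t/n})$ with only \emph{logarithmic} dependence on $1/\mu$; the Fourier route with $\mathrm{poly}(1/\mu)$ loss is affordable only in the PRG setting (Lemma~\ref{lm:fourier2}), where the target error is exponentially small in the seed length.

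A second, related omission: once you prove the inequality one planted coordinate at a time (which is what an entropy argument, or any repaired argument, will force you into), you must show that after planting a random prefix of the clique coordinates the conditioned set $\Sigma\cap\{x_{a_1}=\dots=x_{a_\ell}=1\}$ remains large with high probability over the choice of the clique — otherwise the per-coordinate bound degrades multiplicatively. This is Claim~\ref{claim:DS-whp-large} in the paper (the subset-tree/good-edge argument), and it is the most technical step of the whole proof; your proposal maintains largeness only across rounds of the protocol, not across the planting steps inside the key lemma. So the architecture is right, but the core inequality needs the information-theoretic proof (or some other argument whose density dependence is logarithmic) together with the intermediate-largeness claim, not the Parseval computation you sketch.
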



\paragraph{Pseudo-random generators for distributed computation:}

Historically, pseudo-random generators were used to fool adversaries modeled as Turing Machines or Circuits. One of our main contributions is constructing the first pseudo-random generator which fools a distributed message passing setting. That is, we show how within the \BCAST(1) Congested Clique model, the processors can each sample a small random seed, and end with each processor having a longer string than it started with, such that these longer strings \textit{look random} to the system. That is, no low-round \BCAST(1) protocol can distinguish between these pseudo-random strings and truly uniformly random strings within few rounds (except with some low probability). Thus, any algorithm can use a pseudo-random string instead of true randomness, thereby saving random bits.

Pseudo-random generators have been used to derandomize specific problems in message passing models, for example in \cite{meravprg}. Our construction is the first pseudo-random generator which fools \textit{all} low-round algorithms in the \BCAST  model (as opposed to just fooling a specific algorithm).

\begin{defi}[\BCAST(1) pseudo-random generator]
	A $(k,m,n,\ell)$ \BCAST(1) pseudo-random generator (PRG) is an $n$-processor \BCAST(1) protocol $\Pi$ such that:
	\begin{itemize}
		\item At the beginning every processor independently gets $k$ private uniform and independent random bits as input.
		\item After participating in protocol $\Pi$, each processor outputs $m$ bits (these bits are not broadcasted: they are the node's pseudo-random bits).
		\item The joint distribution of all processors' output bits cannot be distinguished (with better than $\frac{1}{n}$ probability\footnote{All of our constructions in the paper can achieve arbitrarily low inverse polynomial probability.}) from a truly uniform random distribution by any $\ell$ round \BCAST(1) protocol. Specifically, the statistical distance between the distribution of the protocol's transcript when using a pseudo-random generator and the distribution of the transcript when using true randomness is small.
	\end{itemize}
\end{defi}





\begin{theo}\label{theo:PRG-formal}
	For all $m = O(n)$ and $k = \Omega(\log n)$, there exists an $(O(k), m, n, \Theta(k))$ \BCAST(1) PRG that can be constructed within $O(k)$ rounds. In particular, the PRG works as follows
	
	\begin{itemize}
		\item Each processor gets $k + k \cdot \frac{(m-k)}{n} = O(k)$ private random bits.
		
		\item Then in $O\left(\frac{m-k}{n} \cdot k \right) = O(k)$ rounds, all processors broadcast their last $k \cdot \frac{(m-k)}{n}$ random bits. And they use that to construct a random matrix $M \in \{0,1\}^{k \times (m-k)}$.
		
		\item Each processor's output is simply the concatenation of its first $k$ random bits $x$ and $x^{T} M$.
	\end{itemize}
\end{theo}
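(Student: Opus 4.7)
The plan is to adapt the general framework sketched in Section 1.1---distributional splitting into independent-input pieces, per-round statistical inequalities on partial functions, and control of consistent-input-set sizes---to the PRG setting. Fix any $\Theta(k)$-round \BCAST(1) distinguisher $\Pi$, whose private input on node $i$ is either the PRG output $y_i := (x_i, x_i^T M) \in \{0,1\}^m$ or a truly uniform $m$-bit string. Since $M$ is built from the broadcast bits of $n$ independent processors it is itself uniform. Conditioning on $M$, in both worlds the $n$ inputs are independent across $i$: under PRG, i.i.d.\ uniform over the $k$-dimensional graph subspace $\mathrm{im}(\phi_M) := \{(x, x^T M) : x \in \{0,1\}^k\}$; under uniform, i.i.d.\ uniform over $\{0,1\}^m$. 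This is the PRG-analog of the planted-clique decomposition into independent-input pieces, with the random $M$ playing the role of the random clique.

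Next I would unfold the transcript TV distance round by round via the chain rule. Conditioning on the prior transcript $T_{<t}$ and on $M$, each processor $i$'s round-$t$ broadcast is a Bernoulli whose bias depends on the subset $A_i^{T_{<t}} \subseteq \{0,1\}^m$ of inputs consistent with $i$'s own broadcasts so far (cut out by $t-1$ binary constraints imposed by $\Pi$ and the transcript) together with the round-$t$ decision function $h$ restricted to $A_i^{T_{<t}}$. Since the round-$t$ bits are conditionally independent across $i$, the round's contribution to TV distance is at most the sum over processors of
\begin{equation*}
\left| \frac{|A_i^{T_{<t}} \cap h^{-1}(1) \cap \mathrm{im}(\phi_M)|}{|A_i^{T_{<t}} \cap \mathrm{im}(\phi_M)|} - \frac{|A_i^{T_{<t}} \cap h^{-1}(1)|}{|A_i^{T_{<t}}|} \right|.
\end{equation*}

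The core lemma is then a partial-function statistical inequality: for uniformly random $M$, with probability $1 - \operatorname{poly}(1/n)$, for every sufficiently large subset $A \subseteq \{0,1\}^m$ (say $|A| \geq 2^{m - k + C \log n}$) and every partial function $h : A \to \{0,1\}$, the two ratios above agree up to $1/\operatorname{poly}(n)$. Intuitively, $\mathrm{im}(\phi_M)$ is a random graph-form $k$-subspace of $\mathbb{F}_2^m$ which, for typical $M$, samples large sets nearly uniformly. Expanding $\mathbb{1}_A$ in the Fourier basis, the discrepancy between averaging over $A$ and averaging over $A \cap \mathrm{im}(\phi_M)$ reduces to character sums of the form $\sum_{\alpha \ne 0 :\, \alpha_1 = M \alpha_2} \hat{\mathbb{1}}_A(\alpha)$, and the pairwise independence of $\{x^T M\}_{x \ne 0}$ over a uniform $M$, combined with the largeness hypothesis on $|A|$, gives the second-moment concentration.

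Finally I would argue the sizes $|A_i^{T_{<t}}|$ exceed the threshold with high probability. A simple counting (using $\sum_{T_{i,<t}} |A_i^{T_{<t}}| = 2^m$) yields $\Pr_{T_{i,<t}}[|A_i^{T_{<t}}| < 2^{m - \ell - C\log n}] \leq n^{-C}$ in the uniform world; by maintaining the TV bound inductively, the same holds up to $O(1/n)$ slack under the PRG. Choosing the constant in $\ell = \Theta(k)$ small enough and using $k = \Omega(\log n)$, this lower bound comfortably exceeds the threshold $2^{m - k + C \log n}$, so summing the per-round, per-processor deviations over $n\ell$ pairs delivers the required $O(1/n)$ TV distance. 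The hardest step will be the statistical inequality itself, specifically achieving a failure probability over $M$ small enough to survive the union bound over all the relevant $(i, t, T_{<t})$. A naive second-moment bound is too weak; I expect either higher-moment concentration (leveraging the near-$d$-wise independence of $\{x^T M\}_x$), or an in-expectation bound against the protocol's own transcript distribution converted via Markov, to close the gap. This is precisely the partial-function statistical-inequality machinery advertised for the planted-clique argument, adapted from random cliques to the random linear map $\phi_M$.
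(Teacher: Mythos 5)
Your high-level architecture does track the paper's: decompose the PRG world by conditioning on $M$ (the analogue of conditioning on the planted clique $C$) so that rows become independent, unfold the transcript distance turn by turn, control the size of the transcript-consistent input sets $D_p$ by a counting argument, and prove a statistical inequality for functions restricted to large sets via pairwise independence/Fourier. However, your core lemma is stated with the quantifiers in the wrong order, and in that form it is simply false: for \emph{every} $M$, taking $A=\{0,1\}^m$ and $h$ the indicator function of $\mathrm{im}(\phi_M)$ gives ratios $1$ versus $2^{k-m}$, i.e.\ deviation essentially $1$. The problem is that in the application the pair $(A,h)$ is determined by the transcript, which in the PRG world depends on $M$, so you are forced into a ``w.h.p.\ over $M$, for all large $A$ and all $h$'' statement — and no moment method can rescue a statement that fails for every $M$. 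Nor can you retreat to a union bound over only the transcript-realizable pairs $(i,t,T_{<t})$: a $\Theta(k)$-round protocol has up to $2^{\Theta(kn)}$ transcript prefixes, while $M$ carries only $k(m-k)=O(kn)$ bits of randomness, so per-event failure probabilities cannot be made small enough to survive. This is a genuine gap, and your closing paragraph essentially concedes that the key step is unresolved.

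The workable route — and the one the paper takes — never proves a high-probability-over-$M$ statement. It fixes the set $D$ and the function $f$ first and bounds the deviation \emph{in expectation over $M$} (Lemma~\ref{lm:fourier4}, which needs $|D|\ge 2^{m-k/2}$ and $m\le 2^{k/20}$, and is proved by a column-by-column hybrid, Lemma~\ref{lm:fourier3}, reducing to the single-bit case of Lemma~\ref{lm:fourier}, together with the conditional-support-size correction of Claim~\ref{claim:close2}). These per-turn bounds are then combined through the progress function $\Ex_M\|\Prand^{(t)}-\distrP_M^{(t)}\|$ using Lemma~\ref{lm:dist}, in which the expectation over transcript prefixes is taken under the \emph{uniform} world $\Prand^{(t-1)}$; there the sets $D^{(t-1)}_p$ are independent of $M$, so Fubini lets the fixed-$D$, expectation-over-$M$ lemma be applied legitimately, with Claim~\ref{claim:big-D-p-2} supplying the largeness of $D_p$ exactly as in your counting step. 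Your final sentence (``an in-expectation bound against the protocol's own transcript distribution'') points at this, but it is the heart of the argument rather than a patch, and also note that handling $m-k$ pseudorandom output coordinates (rather than one) requires the hybrid over columns and the $N_M$ versus $N_D/2^{m-k}$ correction, neither of which appears in your sketch; the size threshold you propose, $|A|\ge 2^{m-k+C\log n}$, is also far weaker than the $2^{m-k/2}$ the quantitative bounds actually need.
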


That is, with $k$ random bits per processor as a seed, for every constant $c$, within $O(k)$ rounds we can turn them into $cn$ pseudo-random bits per processor which require $\Omega(k)$ rounds to be distinguished from random.



So, for example, within $O(\log^2(n))$ round (In the $\BCAST(\log n)$ model, $O(\log n)$ rounds would suffice), one can construct a pseudo-random generator which is indistinguishable from random for any $\log^2 n$-round algorithm, except with low probability (the seed size for each processor would be $O(\log^2 n)$, while the size of the pseudo-random string is $\Theta(n)$).

The PRG is very simple to describe (we describe it here with seed size $2k$): first each processor shares $k$ random bits to create $k$ public random elements from $\mathbb{F}_2^n$ (i.e., $k$ random $n$-bit vectors). Then, each processor uses its remaining $k$ random bits to pick a random linear combination of those vectors (which requires $k$ bits to sample), and the result of this linear combination is the node's pseudo-random bits. So, essentially the pseudo-random generator is a distribution of low-rank matrices (which is very close to the uniform distribution matrices of rank up to $k$). Although the description and construction of the PRG are simple, and the seed size is tight up to a constant factor, the analysis is quite technical and involves new techniques. We remark that the pseudo-random generator has the additional nice property that it is computationally cheap; the only operations done by the processors is computing dot products of vectors over $\mathbb{F}_2$.


\paragraph{Efficiently saving random bits}

It is possible to show that in the broadcast congested clique there is a randomized-deterministic separation (by reductions from two-player communication complexity for equality). That is, there are certain problems with faster randomized algorithms than the best possible deterministic algorithms. So, there is no hope for a general derandomization theorem. However, one can ask:  in general, what is the fewest number of random bits needed to efficiently solve problems in the Broadcast Congested Clique model? Using a technique of Newman \cite{newman1991private} from communication complexity, it is possible to show that for any protocol with output size $k$ per processor, $O(k)$ random bits per processor is enough (see Appendix \ref{newman}). The main downside of Newman's technique is that it is computationally inefficient: it holds in the case where all the processors have unbounded computational power, and is not a practical tool for saving random bits.

In this work, we ask the following question: In the Broadcast Congested Clique with \textit{computationally bounded (polynomial time)} processors, how many random bits are needed to perform general randomized computation? We can use our pseudo-random generator to show that every $k$-round algorithm where $k = \Omega(\log n)$ in which every processor uses up to $O(n)$ random bits can be transformed into an $O(k)$-round algorithm in which every processor uses up to $k$ random bits (that is, we can show that each processor needs to use at most $1$ random bit per round, while only increasing the run-time by a constant factor). Furthermore, this transformation is efficient. That is, if in the original algorithm all processors work in polynomial time, they also work in polynomial time in the new algorithm (in fact, there is only an \textit{additive} overhead of $O(kn)$ computation time for each processor. This overhead is the time required to compute the pseudo-random bits from the seed).

Stated in the setting of \BCAST($\log n$), where messages are of size $O(\log n)$ instead of $1$, we show that every $k$ round randomized algorithm using up to $n$ random bits per processor can be transformed into an $O(k)$ round algorithm using up to $O(k \log n)$ random bits per processor.

\paragraph{First \BCAST(1) Average Case Lower Bound and Hierarchy:} As a simple corollary of our PRG construction, we prove a $\BCAST(1)$ average case lower bound, which is the first average case lower bound proven in the model. Specifically, we show that when each processor receives a row of a sample from a certain close-to-uniform distribution of $n \times n$ matrices of rank $n-1$, this cannot be distinguished from each processor receiving $n$ uniformly random bits. We show that this implies that determining whether an input has rank $n$ or not is hard (it takes $\Omega(n)$ rounds), even when the input is chosen uniformly at random: 

\begin{theo}\label{theo:average-lowb}
	Let $n$ be a large enough integer and $\Ffullrank : \{0,1\}^{n \times n} \to \{0,1\}$ be the indicator function which indicates whether the given matrix has full rank. Suppose there are $n$ processors, and the $i$-th processor is given the $i$-th row of the input matrix. For all $n/20$-round \BCAST(1) protocols and all processors $i$ in it, $i$ cannot compute $F$ correctly with probability better than $0.99$ when the input is a uniformly chosen random matrix from $\{0,1\}^{n \times n}$.
\end{theo}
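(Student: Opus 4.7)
The plan is to derive the theorem as a corollary of the PRG construction of Theorem~\ref{theo:PRG-formal}. First, instantiate the PRG with $m = n$ and $k$ chosen so that $n/20 < \Theta(k) < n$, say $k = n/10$. Its output is the $n \times n$ matrix $[X \mid XM]$ with $X \in \{0,1\}^{n \times k}$, which has rank at most $k < n$; let $\distrD$ denote the induced joint distribution on the processors' rows. By Theorem~\ref{theo:PRG-formal}, $\distrD$ is indistinguishable from the uniform distribution on $\{0,1\}^{n \times n}$ by every $n/20$-round \BCAST(1) protocol, with transcript statistical distance at most $1/n$. Note in particular that $\Ffullrank \equiv 0$ on the support of $\distrD$.

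Next, let $p := \prod_{j=1}^n(1 - 2^{-j})$, the probability that a uniform $n \times n$ matrix has full rank; for large $n$ it is a universal constant in $(0.28, 0.30)$. Suppose for contradiction that some $n/20$-round protocol $\Pi$ has a processor $i$ whose output $\Pi_i$ satisfies $\Pr_{M \sim \mathrm{unif}}[\Pi_i(M) = \Ffullrank(M)] > 0.99$. Since $\Ffullrank \sim \mathrm{Bern}(p)$ under uniform input, this forces $\Pr[\Pi_i = 1 \mid \mathrm{unif}] \in [p - 0.01, p + 0.01]$. Treating $\Pi_i$ as the final single-bit output of an $n/20$-round distinguisher and invoking the PRG indistinguishability, $\Pr[\Pi_i = 1 \mid \distrD]$ lies in the same interval inflated by $1/n$; in particular it exceeds $0.26$, so the error rate of $\Pi_i$ on $\distrD$ is at least $0.26$.

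To close the argument, I plan to introduce a dual PRG-like distribution $\distrD'$ supported on full-rank matrices and likewise $1/n$-indistinguishable from uniform; a natural candidate is obtained by appending a short random rank-completing block to the PRG output, with a brief hybrid argument verifying that the perturbation preserves indistinguishability. Then $\mu := p\,\distrD' + (1-p)\,\distrD$ is $O(1/n)$-close to uniform, so the hypothesized $0.99$-accuracy of $\Pi_i$ on uniform transfers (up to an $O(1/n)$ loss) to $\mu$. But the indistinguishability bounds force $\Pr[\Pi_i = 1 \mid \distrD'] \approx \Pr[\Pi_i = 1 \mid \distrD] \approx p$, and since $\Ffullrank \equiv 1$ on $\distrD'$ and $\equiv 0$ on $\distrD$, a direct computation bounds the accuracy of $\Pi_i$ on $\mu$ by $p^2 + (1-p)^2 = 1 - 2p(1-p) \approx 0.59$, flatly contradicting $0.99$. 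The main technical obstacle is engineering $\distrD'$ so that its indistinguishability from uniform holds in the same low-round regime; I expect this to follow by adapting the proof technique underlying Theorem~\ref{theo:PRG-formal}, but the constants in the round budget and in the $1/n$ slack must be reconciled carefully.
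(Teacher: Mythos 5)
There is a genuine gap at the heart of your plan to ``close the argument.'' The claim that $\mu := p\,\distrD' + (1-p)\,\distrD$ is $O(1/n)$-close to the uniform distribution cannot be right for the distributions you propose: $\distrD$ is supported on matrices of rank at most $k$ and $\distrD'$ on a structured family of full-rank matrices, so the support of $\mu$ has uniform measure $2^{-\Omega(n)}$ and the total-variation distance to uniform is $1-o(1)$. If instead you only mean that $\mu$ is transcript-indistinguishable from uniform, then the step ``the $0.99$-accuracy on uniform transfers to $\mu$'' fails, because correctness $\Pr[\Pi_i(M)=\Ffullrank(M)]$ is a joint function of the input and the transcript, not of the transcript alone: the constant-$0$ protocol has accuracy $1$ on $\distrD$ and about $0.71$ on uniform even though its transcripts are identical under both. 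This is exactly the subtlety the paper's proof confronts, and it resolves it differently: it takes $k=n-1$ (Theorem~\ref{theo:multi-round}), so the pseudorandom matrices have rank exactly $n-1$, shows the protocol must err with probability about $Q_0\approx 0.29$ under $\distr_B$, and then transfers this error back to the uniform distribution via a \emph{pointwise likelihood ratio}: whenever the first $n-1$ columns have rank at least $n-3$ (a $\distr_B$-probability $\ge 1-0.006$ event), $\distr_A(M)/\distr_B(M)\ge 2^{-3}=1/8$, giving uniform error $>0.05>0.01$. No second, full-rank-supported distribution $\distrD'$ is needed.

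Two further problems compound the gap. First, your parameters: Theorem~\ref{theo:PRG} fools only $j\le k/10$ rounds, so $k=n/10$ yields indistinguishability against $n/100$-round protocols, not $n/20$; you would need $k\ge n/2$ (the paper uses $k=n-1$). Second, and more importantly, your low-rank choice forecloses the density-ratio repair: for the PRG output of rank at most $k\ll n-1$, the ratio of uniform to pseudorandom density at a support point is $2^{-(n-k)(m-k)}=2^{-\Omega(n^2)}$, so no analogue of the paper's $1/8$ bound is available, whereas with $k=n-1$ the ratio degrades only like $2^{-s}$ in the rank deficiency $s$ of the first $n-1$ columns, which is $O(1)$ with high probability. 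Finally, the construction and analysis of $\distrD'$ is itself left as an unproved component (``I expect this to follow by adapting the proof technique''), but even granting it, the mixture step above would still not go through.
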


We also obtain an average-case time hierarchy theorem for the model:

\begin{theo}\label{theo:time-H}
	For any $\omega(\log n) \le k \le n$, there is a function $F$ such that a $k$-round $\BCAST(1)$ protocol can compute exactly, while any $k/20$-round $\BCAST(1)$ protocols cannot compute $F$ correctly with probability $0.99$ over the uniform distribution.
\end{theo}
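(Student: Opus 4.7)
The plan is a simple padding reduction from Theorem~\ref{theo:average-lowb} applied at problem size $k$. Define $F \colon (\{0,1\}^{n})^{n} \to \{0,1\}$ as follows: view the first $k$ bits of the inputs of processors $1,\ldots,k$ as a matrix $M \in \{0,1\}^{k \times k}$, and set $F = 1$ iff $M$ has full $\mathbb{F}_2$-rank; the remaining bits of the input are irrelevant to $F$.

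The upper bound is immediate: in round $r \in \{1,\ldots,k\}$, each of processors $1,\ldots,k$ broadcasts the $r$-th bit of its input. After $k$ rounds every processor knows $M$ in full and locally outputs $F(M)$. Thus $F$ is computable exactly by a $k$-round $\BCAST(1)$ protocol.

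For the lower bound, suppose for contradiction that some $(k/20)$-round $n$-processor $\BCAST(1)$ protocol $\Pi$ computes $F$ with probability at least $0.99$ on the uniform input distribution. We may assume $\Pi$ is deterministic, by fixing its coins to the best value via averaging. By averaging again, there exists a fixing $\alpha$ of the input bits outside the $k \times k$ essential block (the last $n-k$ bits of processors $1,\ldots,k$ together with all bits of processors $k+1,\ldots,n$) such that the conditional success probability, over uniform $M$, remains $\ge 0.99$. I then simulate $\Pi$ by a $(k/20)$-round $\BCAST(1)$ protocol $\Pi'$ on $k$ processors, where processor $i$ receives row $i$ of $M$ and $\alpha$ is hardwired as shared advice. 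In each simulated round, real processor $i$ computes its $\Pi$-broadcast from row $i$, the $\alpha$-portion of its input, and the simulated transcript so far, and broadcasts that single bit. The $n-k$ ``fake'' processors never broadcast in $\Pi'$, but each real processor locally computes their round-$r$ broadcasts from $\alpha$ and the transcript through round $r-1$, and appends these to its maintained copy of the simulated transcript. The output of $\Pi'$ is taken to be that of $\Pi$, yielding success probability $\ge 0.99$ on uniform $M \in \{0,1\}^{k \times k}$. Since $k = \omega(\log n)$ is large enough, this contradicts Theorem~\ref{theo:average-lowb} applied at parameter $k$.

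The conceptual crux is the local-simulation step, which relies on the broadcast nature of the model: each real processor can reconstruct every fake processor's complete view from $\alpha$ and the public transcript, so no extra communication is needed to carry the simulated broadcasts. Beyond that the argument is bookkeeping: round counts and $0.99$ success thresholds match on the nose, and the ``large enough'' hypothesis of Theorem~\ref{theo:average-lowb} is absorbed by the assumption $k = \omega(\log n)$.
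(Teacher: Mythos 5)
Your proposal is correct and is essentially the paper's own route: the paper proves Theorem~\ref{theo:time-H} by the one-line observation that one should take $F$ to be ``does the top $k\times k$ sub-matrix have full rank,'' which is exactly your padded function, with the $k$-round upper bound by broadcasting the block and the lower bound by invoking Theorem~\ref{theo:average-lowb} at size $k$. Your additional details (fixing the irrelevant input bits by averaging and locally simulating the idle processors, which is possible precisely because the model is broadcast) are the correct way to make that reduction explicit, so this matches the intended argument.
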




\subsection{Toy Example: One Round Lower Bound for Planted Clique}\label{subsec:PlantedClique-Toy}
As a toy example to illustrate our proof framework, in the following we prove that the planted clique problem is hard for one-round $\BCAST(1)$ protocols when $k = o(n^{1/4})$. We begin with some notations.

\paragraph*{Notations.} 
Let $\distr_{m}$ denote the uniform distribution on $\{0,1\}^m$. For a function $f : \{0,1\}^{*} \to \{0,1\}^{*}$ and a distribution $\distrD$ on $\{0,1\}^*$, we use $f(\distrD)$ to denote the distribution of the output of $f$ when the input is drawn from $\distrD$. For two distributions $\distrD_1$ and $\distrD_2$, we use $\| \distrD_1 - \distrD_2\| = \frac12 \sum_{x \in \{0,1\}^*} |\distrD_1(x) - \distrD_2(x)|$ to denote their statistical distance (where $\distrD(x)$ is the probability that a sample from $\distrD$ equals $x$).

Let $\distrA^n_{\sf rand}$ be the distribution on $\{0,1\}^{n \times n}$ such that for a sample $A$ from $\distrA^n_{\sf rand}$, for all $i \ne j$, $A_{i,j}$ is an independent uniform random bit in $\{0,1\}$, and $A_{i,i}$ is always $0$ for all $i$. Let $C$ be a subset of $[n]$. We use $\distrA^{n}_{C}$ to denote the conditional distribution of $\distrA^n_{\sf rand}$ on the event that for all $i,j \in C$ and $i \ne j$, $A_{i,j} = 1$ (that is, $C$ is a clique). We also use $\distrA^{n}_{k}$ to be the mixed distribution of $\distrA^n_{C}$'s when $C$ is a uniform random subset of $[n]$ of size $k$. When the meaning is clear, we often drop the superscripts of the aforementioned distributions for simplicity.

So, to summarize, $\distrA^n_{\sf rand}$ is the uniform distribution over a random directed graph, $\distrA^{n}_{C}$ is the distribution where the vertices of $C$ are in a clique, and the rest of the edges are uniformly random, and $\distrA^{n}_{k}$ is the distribution where a random $k$ vertices are chosen to be a clique, and the rest of the edges are chosen uniformly at random.

By Yao's principle \cite{yaoprinciple}, we can assume all processors are deterministic as we are trying to prove a lower bound for distinguishing two input distributions. Processor $i$ can then be defined by a function $f_i : \{0,1\}^{n} \times \{0,1\}^{*} \to \{0,1\}$, such that $f_i(z,p)$ is the bit that processor $i$ outputs when it gets the input $z$ and transcript $p$.\footnote{In a zero-round protocol, the processor's $f_i$ does not take in an input $p$, since there is no transcript yet, just an input. However, in our proof, we assume that the processors broadcast their messages sequentially (that is, first the first processor speaks, then the second, and so on). In this stronger model, all but one of the processors do see a transcript before they broadcast their first bit.} We use $f_i^{|p}$ to denote the function $f_i(\cdot,p)$ for simplicity. If transcript $p$ is incompatible with processor $i$ having input $z$, then we set $f_i(z, p)$ arbitrarily.

Given a $\BCAST(1)$ protocol $\Pi$ and an input distribution $\distrD$, we use $\distrP(\Pi,\distrD)$ to denote the distribution of the transcripts of the protocol $\Pi$ running on a input drawn from $\distrD$ (that is, given a matrix $A$ which is drawn from the distribution $\distrD$, the processor $i$ gets the $i$-th row of $A$, and all processors act according to the protocol $\Pi$).

In this section we prove the following theorem.
\begin{theo}\label{theo:clique-lower-bound-one-round}
	Let $n$ be the number of processors and $k$ be an integer. For any one round \BCAST(1) protocol $\Pi$, we have
	\[
	\| \distrP(\Pi,\distrA_{\sf rand}) - \distrP(\Pi,\distrA_k) \| \le O\left(\frac{k^2}{\sqrt{n}}\right).
	\]
\end{theo}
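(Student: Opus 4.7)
The plan is to partition the planted distribution over cliques so that conditioning on a fixed clique $C$ turns the input into a product distribution over rows, then bound each per-processor contribution via a single-function Fourier inequality. By convexity of statistical distance,
$\| \distrP(\Pi,\distrA_{\sf rand}) - \distrP(\Pi,\distrA_k) \| \le \mathbb{E}_C \| \distrP(\Pi,\distrA_{\sf rand}) - \distrP(\Pi,\distrA_C) \|$,
with $C$ a uniform $k$-subset of $[n]$. The critical observation is that under both $\distrA_{\sf rand}$ and $\distrA_C$ the $n$ rows of $A$ are independent — the clique constraint only forces entries within a single row (for $i\in C$, the $k-1$ positions indexed by $C\setminus\{i\}$ are set to $1$, with the rest uniform). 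Since the one-round transcript is $\bigl(f_1(A_1),\ldots,f_n(A_n)\bigr)$, it is a product distribution in both cases, so subadditivity of statistical distance on products gives
$\| \distrP(\Pi,\distrA_{\sf rand}) - \distrP(\Pi,\distrA_C) \| \le \sum_{i\in C}|\Delta_i(C\setminus\{i\})|$,
where $\Delta_i(S):=\Pr[f_i(U)=1\mid U_S=1] - \Pr[f_i(U)=1]$ for $U$ uniform; the $i\notin C$ contributions vanish because row $i$'s marginal is unchanged.

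Taking expectation over $C$ turns the right-hand side into $(k/n)\sum_{i}\mathbb{E}_S[|\Delta_i(S)|]$, with $S$ a uniform $(k-1)$-subset of $[n]\setminus\{i\}$. So the theorem reduces to a single-function claim: for any $f:\{0,1\}^{n-1}\to\{0,1\}$,
$\mathbb{E}_{|S|=k-1}\bigl|\Pr[f(U)=1\mid U_S=1]-\Pr[f(U)=1]\bigr| = O(k/\sqrt{n})$
in the nontrivial regime $k\le\sqrt{n}/4$. Plugging back then gives the target $O(k^2/\sqrt{n})$; for larger $k$ the target bound already exceeds the trivial per-term estimate $|\Delta_i|\le 1$ summed over $|C|=k$, and nothing further is needed.

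To establish the single-function inequality I would use the $\pm 1$ Fourier expansion $f=\sum_T \hat f(T)\chi_T$. A direct computation — every character with support not contained in $S$ is killed by the conditioning — yields $\Delta_f(S)=\sum_{\emptyset\ne T\subseteq S}\hat f(T)(-1)^{|T|}$. Then the triangle inequality, the estimate $\Pr_S[T\subseteq S]\le(2k/n)^{|T|}$, Cauchy-Schwarz at each Fourier level giving $\sum_{|T|=t}|\hat f(T)|\le\sqrt{\binom{n}{t}}$, and Parseval $\sum_T\hat f(T)^2\le 1$ combine into
$\mathbb{E}_S|\Delta_f(S)|\le\sum_{t\ge 1}(2k/n)^t\sqrt{\binom{n}{t}}\le\sum_{t\ge 1}\frac{(2k/\sqrt{n})^t}{\sqrt{t!}}$,
which is $O(k/\sqrt{n})$ for $k\le\sqrt{n}/4$ since the $t=1$ term dominates a geometrically summable tail.

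The main obstacle will be the Fourier bookkeeping: confirming that the higher-level Fourier contributions do not overwhelm the $t=1$ term is what implicitly forces the nontrivial regime $k\ll\sqrt{n}$. The conceptual heart of the argument, however, is the convexity split over cliques $C$ — it turns the highly correlated planted input into a per-processor product distribution, allowing a clean function-by-function analysis. This same partitioning idea, with the additional care of conditioning on prior transcripts so that each $f_i$ becomes a partial function, should extend to the multi-round lower bound and the PRG analysis, as the paper's introduction foreshadows.
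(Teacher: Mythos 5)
Your proposal is correct, and it reaches the same $O(k^2/\sqrt{n})$ bound via the same high-level decomposition as the paper (splitting $\distrA_k$ into the fixed-clique distributions $\distrA_C$ and using convexity of statistical distance), but it differs in two substantive ways. First, instead of the paper's sequential-turns relaxation and turn-by-turn induction via Lemma~\ref{lm:dist}, you exploit directly that for one round the transcript is a product distribution over processors (rows are independent under both $\distrA_{\sf rand}$ and each $\distrA_C$) and apply subadditivity of total variation over product measures; for one round this is a cleaner shortcut, though the paper's sequential setup is precisely the scaffolding that survives to the multi-round argument, where each $f_i$ must be conditioned on its own past messages. Second, your proof of the single-function inequality (the analogue of Lemma~\ref{lm:close-subset-full-one-round}) is Fourier-analytic: $\Delta_f(S)=\sum_{\emptyset\ne T\subseteq S}\widehat{f}(T)(-1)^{|T|}$, level-wise Cauchy--Schwarz with Parseval, and $\Pr_S[T\subseteq S]\le (2k/n)^{|T|}$, giving $O(k/\sqrt n)$ in one shot for all $k\lesssim \sqrt n$ --- a wider range than the paper's $k\le n^{1/4}$, obtained without the one-coordinate-at-a-time hybrid. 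The paper instead proves the lemma by entropy subadditivity, KL divergence, and Pinsker (Lemma~\ref{lm:close-subset-one-one-round}), and its footnote explains why: that information-theoretic route generalizes to functions defined only on a large subset $D\subseteq\{0,1\}^n$ (Lemma~\ref{lm:close-subset-one}), which is what the multi-round planted-clique lower bound needs after conditioning on transcripts, whereas your exact Fourier identity uses the full product structure of the uniform distribution on $\{0,1\}^{n}$ and does not transfer as directly to such partial domains. Your explicit handling of the large-$k$ regime (the bound is vacuous once $k\ge n^{1/4}$, and your Fourier estimate covers everything below $\sqrt n/4$) is a small but genuine point of care that the paper leaves implicit; the one detail you gloss over is the reduction to deterministic processors (Yao / conditioning on coins), which the paper dispatches in its notation section and which also preserves the product structure you rely on.
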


That is, for any one-round protocol, the distribution of transcripts when run on a uniformly random input is statistically close to the distribution on an input with a planted clique. As a simple corollary, we immediately have:
\begin{cor}
	When $k = o(n^{1/4})$, no one-round $\BCAST(1)$ protocol $\Pi$ can distinguish between $\distrA_{\sf rand}$ and $\distrA_{k}$ with advantage $\Omega(1)$ (that is, any protocol which accepts on $\distrA_{\sf rand}$ with probability $p$, must accept on $\distrA_{k}$ with probability $p \pm o(1)$).
\end{cor}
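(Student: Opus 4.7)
The plan is to reduce the one-round bound to a Fourier-analytic inequality about Boolean functions. Since $\distrA_k$ is the uniform mixture $\Ex_C [\distrA_C]$ over size-$k$ subsets $C \subseteq [n]$, the triangle inequality gives
\[
\| \distrP(\Pi,\Arand) - \distrP(\Pi,\distrA_k) \| \le \Ex_C \| \distrP(\Pi,\Arand) - \distrP(\Pi,\distrA_C) \|,
\]
so it suffices to bound the right-hand side by $O(k^2/\sqrt n)$. The key structural gain of this reduction is that for every fixed $C$ the rows of $A$ remain independent under $\distrA_C$: the graph is directed, so the clique-conditioning fixes disjoint entries in different rows. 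In a one-round protocol each processor's broadcast depends only on its own row, so the transcripts under both $\Arand$ and $\distrA_C$ are product distributions on $\{0,1\}^n$.

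Tensorization of statistical distance therefore yields
\[
\| \distrP(\Pi,\Arand) - \distrP(\Pi,\distrA_C) \| \le \sum_{i=1}^n |p_i - p_i^C|,
\]
where $p_i = \Pr_{z \sim \distr_{n-1}}[f_i(z)=1]$ and $p_i^C$ is the same probability under the distribution on the $i$-th row that forces coordinates $j \in C \setminus \{i\}$ to $1$. For $i \notin C$ the two probabilities coincide, so only processors with $i \in C$ contribute, and taking expectation over $C$ introduces a factor $\Pr[i \in C] = k/n$ per processor. Hence the theorem reduces to the statement that for every Boolean $f : \{0,1\}^{n-1} \to \{0,1\}$,
\[
\Ex_{S} \,|p - p^S| \le O(k/\sqrt n),
\]
where $S$ is a uniformly random $(k-1)$-subset of $[n-1]$, $p = \Ex[f]$, and $p^S = \Ex[f(z) \mid z_j = 1 \text{ for all } j \in S]$; summing $n \cdot (k/n) \cdot O(k/\sqrt n)$ then yields the claimed $O(k^2/\sqrt n)$.

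I would prove this inequality by a short Fourier calculation. Passing to $\pm 1$ variables via $x_j = 1 - 2 z_j$ and expanding $f(x) = \sum_T \hat f(T) \prod_{j \in T} x_j$, a direct computation gives $p^S = \sum_{T \subseteq S} (-1)^{|T|} \hat f(T)$, so $p - p^S = \sum_{\emptyset \ne T \subseteq S} (-1)^{|T|+1} \hat f(T)$. Applying the triangle inequality, exchanging sum and expectation, and using Cauchy--Schwarz yields
\[
\Ex_S |p - p^S| \le \sum_{T \ne \emptyset} |\hat f(T)| \, \Pr[T \subseteq S] \le \sqrt{\sum_T \hat f(T)^2} \cdot \sqrt{\sum_{T \ne \emptyset} \Pr[T \subseteq S]^2}.
\]
Parseval bounds the first factor by $1$. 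For the second, the elementary estimate $\Pr[T \subseteq S] \le (k/n)^{|T|}$ together with $\sum_{t \ge 0}\binom{n-1}{t}(k/n)^{2t} = (1 + k^2/n^2)^{n-1}$ gives $\sum_{T \ne \emptyset} \Pr[T \subseteq S]^2 \le (1+k^2/n^2)^{n-1} - 1 = O(k^2/n)$ (comfortably valid for $k = O(n^{1/4})$), so the second factor is $O(k/\sqrt n)$.

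The main obstacle is conceptual rather than computational: one must resist trying to control a row-level divergence such as the $\chi^2$-distance between the conditional and uniform row distributions, since that quantity is exponentially large in $|C|$ (the conditional row distribution concentrates on a set of density $2^{-(k-1)}$). The savings materialize only after the row is passed through $f_i$ and then averaged over $S$: the average kills every Fourier mode of $f_i$ except those supported inside $S$, and since a large $T$ is very unlikely to fit inside a small random $S$, only low-weight modes survive. This is what delivers the $O(k/\sqrt n)$ gain per row and hence the $O(k^2/\sqrt n)$ bound overall.
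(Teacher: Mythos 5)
Your proposal is correct, but it proves the underlying $O(k^2/\sqrt{n})$ bound by a genuinely different route than the paper. You share the first step — writing $\distrA_k$ as a mixture of the row-independent distributions $\distrA_C$ and moving the expectation over $C$ outside the statistical distance — but from there the paper works in a strictly stronger sequential model (processors speak in turns and may condition on earlier messages), bounds each turn's contribution via Lemma~\ref{lm:dist}, and controls the per-processor term $\Ex_{C}\|f(\distr_n)-f(\distr_n^{C})\|$ by planting the $k$ coordinates one at a time and invoking an information-theoretic lemma (sub-additivity of entropy plus Pinsker, Lemma~\ref{lm:close-subset-one-one-round}). You instead exploit that in a genuinely simultaneous one-round protocol the transcript is a product of independent Bernoullis under both $\Arand$ and each $\distrA_C$, tensorize the total variation distance into $\sum_i |p_i-p_i^C|$, and bound $\Ex_S|p-p^S|$ in one shot by a Fourier computation: $p-p^S=\sum_{\emptyset\ne T\subseteq S}(-1)^{|T|+1}\WH f(T)$, Cauchy--Schwarz, Parseval, and $\Pr[T\subseteq S]\le (k/n)^{|T|}$, giving $O(k/\sqrt n)$ per in-clique row and hence $O(k^2/\sqrt n)$ overall — all steps check out, including the estimate $(1+k^2/n^2)^{n-1}-1=O(k^2/n)$ in the regime $k=O(n^{1/4})$. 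What each approach buys: yours is more elementary and self-contained for the one-round corollary (no entropy machinery, no coordinate-by-coordinate telescoping), and is closer in spirit to the paper's PRG analysis (Lemma~\ref{lm:fourier}) than to its planted-clique argument; the paper's route is what scales, since the tensorization step breaks as soon as broadcasts may depend on earlier messages (even in the one-round sequential relaxation, and certainly multi-round), whereas the entropy/Pinsker lemma generalizes to partial functions defined on large transcript-consistent sets (Lemma~\ref{lm:close-subset-full}), which is exactly what the multi-round lower bound of Section~\ref{sec:clique-lower-bound} needs.
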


That is, there is no way to solve the planted clique problem for cliques of size $o(n^{1/4})$ within one round of the Broadcast Congested Clique.

Let $\distrS_{k}^{T}$ be the uniform distribution on all size-$k$ subsets of $T$. To prove Theorem~\ref{theo:clique-lower-bound-one-round}, we need the following technical lemma, whose proof is deferred to the end of this section. In this lemma $f(x)$ represents the bit broadcasted by a processor, and $x$ represents the input to the processor. The lemma states that when picking a random clique $C$ of size $k$, then any function $f$ behaves similarly when the input is sampled from the uniform distribution vs. when the input is sampled from the uniform distribution with a clique planted on $C$. Basically, this means that for almost all possible cliques, $f$ does not substantially help distinguish between the clique existing, or the input being uniform.

\begin{lemma}\label{lm:close-subset-full-one-round}
	Let $n, k$ be integers such that $k \le n^{1/4}$, and $\distr_n^{C}$ be the uniform distribution on $\{ x : x \in \{0,1\}^n, x_i = 1 \text{ for all $i \in C$} \}$. For all function $f : \{0,1\}^n \to \{0,1\}$, we have
	\[
	\Ex_{C \sim \distrS_{k}^{[n]}} \left[ \left\| f(\distr_n) - f(\distr_n^{C}) \right\| \right] \le O\left( \frac{k}{\sqrt{n}} \right).
	\]
\end{lemma}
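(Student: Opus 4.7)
My plan is to reduce the inequality to a second-moment estimate via Fourier analysis on the hypercube. Since $f$ is Boolean-valued, for any fixed $C$ the total variation $\|f(\distr_n)-f(\distr_n^C)\|$ equals $|p-p_C|$ where $p := \Pr_{\distr_n}[f=1]$ and $p_C := \Pr_{\distr_n^C}[f=1]$. By Cauchy--Schwarz it is then enough to prove
\[
\Ex_{C \sim \distrS_k^{[n]}}\!\left[(p-p_C)^2\right] \;\le\; O\!\left(\tfrac{k^2}{n}\right).
\]

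To compute the right-hand side I would switch to $\pm 1$ variables $y_i = 1-2x_i$ and let $F:\{-1,+1\}^n\to\{0,1\}$ correspond to $f$, with Fourier expansion $F(y)=\sum_S \hat F(S)\prod_{i\in S} y_i$. A direct calculation of $\Ex_y[F(y)\prod_{i\in C}(1-y_i)]$ (which equals $p_C$ after noting that $2^{-|C|}\prod_{i\in C}(1-y_i)$ is the indicator of $y_i=-1$ for all $i\in C$) gives
\[
p_C - p \;=\; \sum_{\emptyset\neq S\subseteq C}(-1)^{|S|}\hat F(S).
\]
I would then split this into the linear piece $L_C := -\sum_{i\in C}\hat F(\{i\})$ and the higher-order piece $H_C := \sum_{|S|\ge 2,\,S\subseteq C}(-1)^{|S|}\hat F(S)$, and use $(p_C-p)^2 \le 2L_C^2+2H_C^2$ to handle them separately.

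For $L_C$, expanding the square and using $\Pr[i\in C]=k/n$ and $\Pr[i,j\in C]\le(k/n)^2$ for $i\neq j$, together with Parseval ($\sum_i \hat F(\{i\})^2 \le 1$) and Cauchy--Schwarz ($(\sum_i \hat F(\{i\}))^2 \le n\sum_i \hat F(\{i\})^2$), I expect $\Ex_C[L_C^2] = O(k^2/n)$. For $H_C$, I would group by level: writing $H_C = \sum_{j=2}^{k}(-1)^j b_j^C$ with $b_j^C := \sum_{|S|=j,\,S\subseteq C}\hat F(S)$, two applications of Cauchy--Schwarz yield $H_C^2 \le k\sum_{j\ge 2}\binom{k}{j}\sum_{|S|=j,\,S\subseteq C}\hat F(S)^2$. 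Taking the expectation over $C$ introduces $\Pr[S\subseteq C]\le(k/n)^j$, and the level-$j$ Fourier weight is at most $1$ by Parseval, so $\Ex_C[H_C^2] \le k\sum_{j\ge 2}\binom{k}{j}(k/n)^j \le k\bigl(e^{k^2/n}-1-k^2/n\bigr) = O(k^5/n^2)$, which is $O(k^2/n)$ whenever $k \le n^{1/3}$, and in particular for $k \le n^{1/4}$. Adding the two bounds gives $\Ex_C[(p_C-p)^2] = O(k^2/n)$, and one final Jensen step proves the claim.

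The main obstacle is the higher-order Fourier part: the coarse Cauchy--Schwarz over levels loses a factor of $k$, and this loss is what forces the hypothesis $k \le n^{1/4}$ rather than, say, $k \le n^{1/2}$. A sharper treatment (for instance hypercontractivity, or a tighter level-by-level accounting that avoids the global $k$ factor) would push the regime further, but for this one-round toy lemma the simple Parseval-based bound above is sufficient.
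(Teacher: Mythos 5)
Your proof is correct, but it takes a genuinely different route from the paper. The paper first proves a single-coordinate statement (its Lemma on $\distr^{[i]}$) by information theory -- subadditivity of entropy plus Pinsker's inequality give $\Ex_{i}\|f(\distr_n)-f(\distr_n^{[i]})\|\le O(1/\sqrt n)$ -- and then telescopes over the $k$ clique coordinates via a hybrid/triangle-inequality argument, picking the clique as an ordered tuple and applying the one-coordinate lemma to the restriction of $f$ with the already-fixed bits set to $1$. You instead compute $p_C-p$ exactly as $\sum_{\emptyset\ne S\subseteq C}(-1)^{|S|}\widehat F(S)$, bound the second moment $\Ex_C[(p_C-p)^2]$ by Parseval and Cauchy--Schwarz (linear level contributing $O(k^2/n)$, higher levels $O(k^5/n^2)$), and finish with Jensen; your calculations check out, with one small point to make explicit in the linear term: the cross-sum $\sum_{i\ne j}\widehat F(\{i\})\widehat F(\{j\})$ can be negative, so one should note that by symmetry its coefficient is exactly $\frac{k(k-1)}{n(n-1)}$ and either drop it when negative or bound it by $(k/n)^2(\sum_i\widehat F(\{i\}))^2\le k^2/n$ when nonnegative. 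What each approach buys: your Fourier argument is self-contained, yields the stronger $L^2$ bound, and is closer in spirit to the paper's PRG analysis (its Lemma~\ref{lm:fourier}); but it needs the full cube $\{0,1\}^n$ and the coarse level-wise Cauchy--Schwarz costs a factor $k$, restricting you to $k\le n^{1/3}$ (enough here), whereas the paper's telescoping loses only $k\cdot O(1/\sqrt n)$ and, more importantly, its information-theoretic one-coordinate lemma generalizes to functions defined only on a large subset $D\subseteq\{0,1\}^n$ (Lemma~\ref{lm:close-subset-one}), which is exactly what the multi-round planted-clique lower bound requires after conditioning on a transcript -- the paper's footnote says this is precisely why the entropy proof was chosen. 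So your route is a valid and arguably cleaner proof of the toy lemma, but it would need additional ideas to replace the paper's machinery in Section~\ref{sec:clique-lower-bound}.
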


We also need the following lemma to bound the increase of the statistical distance when a processor speaks, The proof can be found in the preliminaries (Section \ref{sec:prelim}).

\begin{lemma}\label{lm:dist}
	Let $X$ and $Y$ be two sets, $\distrD$ and $\distrD'$ be two distributions on $X \times Y$. Let $\distrD_{|X}$ and $\distrD'_{|X}$ be the respective marginal distribution of $\distrD$ and $\distrD'$ on set $X$. For $a \in X$, we use $\distrD_{X=a}$ and $\distrD'_{X=a}$ to denote the respective conditional distribution of $\distrD$ and $\distrD'$ on $Y$ conditioning on $X = a$.\footnote{For simplicity, we let $\distrD_{X=a}$ be the uniform distribution on $Y$ if $\Pr_{(x,y) \sim \distrD}[x=a] = 0$.}
	We have
	\[
	\| \distrD - \distrD' \| \le \| \distrD_{|X} - \distrD'_{|X} \| + \Ex_{a \sim \distrD_{|X}} \left[ \| \distrD_{X = a} - \distrD'_{X = a} \| \right].
	\]
\end{lemma}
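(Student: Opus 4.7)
The plan is to prove Lemma~\ref{lm:dist} directly from the definition of statistical distance by expanding each joint probability as (marginal on $X$) $\times$ (conditional on $X$), and then applying the usual add-and-subtract trick to split one difference of products into two pieces that match the two terms on the right-hand side.

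Concretely, I would start from
\[
\|\distrD - \distrD'\| \;=\; \frac{1}{2}\sum_{a \in X}\sum_{b \in Y} \bigl|\distrD(a,b) - \distrD'(a,b)\bigr|,
\]
and write $\distrD(a,b) = \distrD_{|X}(a)\,\distrD_{X=a}(b)$ and $\distrD'(a,b) = \distrD'_{|X}(a)\,\distrD'_{X=a}(b)$. I would then insert the hybrid term $\distrD_{|X}(a)\,\distrD'_{X=a}(b)$ to get
\[
\distrD(a,b) - \distrD'(a,b) \;=\; \distrD_{|X}(a)\bigl(\distrD_{X=a}(b) - \distrD'_{X=a}(b)\bigr) \;+\; \bigl(\distrD_{|X}(a) - \distrD'_{|X}(a)\bigr)\distrD'_{X=a}(b),
\]
and apply the triangle inequality inside the double sum.

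The first piece, summed over $b \in Y$, yields $\distrD_{|X}(a) \cdot 2\|\distrD_{X=a} - \distrD'_{X=a}\|$, and then summing over $a$ and dividing by $2$ gives exactly $\Ex_{a\sim \distrD_{|X}}[\|\distrD_{X=a} - \distrD'_{X=a}\|]$. The second piece, summed over $b$, uses $\sum_b \distrD'_{X=a}(b) = 1$ to collapse to $|\distrD_{|X}(a) - \distrD'_{|X}(a)|$; summing this over $a$ and dividing by $2$ gives $\|\distrD_{|X} - \distrD'_{|X}\|$. Combining the two bounds yields the lemma.

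The only subtle point is the degenerate case flagged in the footnote, namely $a \in X$ with $\distrD_{|X}(a)=0$ where $\distrD_{X=a}$ is defined as uniform. For such $a$ we have $\distrD(a,b)=0$, so the hybrid decomposition still holds termwise (the factor $\distrD_{|X}(a)$ kills the first summand regardless of the convention chosen for $\distrD_{X=a}$), and the expectation on the right-hand side likewise assigns zero weight to these $a$. So the convention does not affect either side of the inequality, and the argument goes through uniformly. This edge case is really the only thing to be careful about; the rest is a direct calculation.
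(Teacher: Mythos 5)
Your proof is correct, and it is essentially the same argument as the paper's: the hybrid term $\distrD_{|X}(a)\,\distrD'_{X=a}(b)$ you insert is exactly the paper's auxiliary distribution $\distrD^{\textsf{aux}}$, and your termwise triangle inequality plays the role of the paper's triangle inequality between $\distrD$, $\distrD^{\textsf{aux}}$, and $\distrD'$. Your handling of the degenerate case $\distrD_{|X}(a)=0$ is also fine, so nothing is missing.
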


Now we are ready to prove Theorem~\ref{theo:clique-lower-bound-one-round}.

\begin{proofof}{Theorem~\ref{theo:clique-lower-bound-one-round}}
	
	Instead of viewing the algorithm as a single round algorithm, we will prove a slightly stronger lower bound. Consider the model where we have $n$ turns. On the $t^{th}$ turn, processor $t$ gets to send a single bit. This model is stronger than one round of the \BCAST(1) model, since it allows the later processors to condition their outputs on earlier the processors' messages. Hence, our lower bound implies a lower bound for the \BCAST(1) model as well.
	
	Let $\Prand^{(t)}$ and $\distrP_{C}^{(t)}$ be the distributions of the transcript of the first $t$ turns when the input is drawn from $\distrA_{\sf rand}$ or $\distrA_C$, respectively. Note that to prove the theorem, it suffices to show that $\Prand^{(n)}$ is close to most $\distrP_{C}^{(n)}$. For this purpose, we are going to prove the following inequality holds for any $t \le n$:
	
	\begin{equation}\label{eq:ex-distance-small-one-round}
	\Ex_{C \sim \distrS_{k}^{[n]}} \left[ \left\| \Prand^{(t)} - \distrP_{C}^{(t)} \right\| \right] \le t \cdot c_1 \cdot \frac{k^2}{n} \cdot \frac{1}{\sqrt{n}},
	\end{equation}
	
	where $c_1$ is a large enough universal constant. This inequality states that when picking a clique at random, the distribution of transcripts is similar to the distribution of transcripts when the input is chosen uniformly at random. It is easy to see that plugging in $t = n$, \eqref{eq:ex-distance-small-one-round} implies the theorem.		
	
	We prove the inequality above inductively. Clearly, \eqref{eq:ex-distance-small-one-round} holds when $t = 0$. So it suffices to show that when it holds for $t-1$, it also holds for $t$.
	
	Let $\distrD_t$ and $\distrD_t^{C}$ be the input distributions to processor $t$ in $\distrA_{\sf rand}$ and $\distrA_{C}$, respectively. For a fixed $C \subseteq [n]$, by Lemma~\ref{lm:dist}, we have:
	\begin{equation}\label{eq:derive-one-step-one-round}
	\left\| \Prand^{(t)} - \distrP_{C}^{(t)} \right\| \le \left\| \Prand^{(t-1)} - \distrP_{C}^{(t-1)} \right\| + \Ex_{p \sim \Prand^{(t-1)}} \left[ \left\|f_t^{|p}(\distrD_t) - f_t^{|p}(\distrD_t^C)  \right\| \right].
	\end{equation}
    
We think of $\| \Prand^{(t)} - \distrP_{C}^{(t)} \|$ as the amount of evidence the transcript is giving us about whether the distribution is uniform or if it has $C$ as a clique. (If this value were $0$, that would mean the transcript gives us no evidence. If the value were $1$, that would mean we have ``full" evidence and could distinguish between the two with no error given the transcript). So, with this interpretation, the inequality above is basically stating that the amount of evidence we have after round $t$ is the sum of the evidence from all rounds up to $t-1$, plus the extra evidence we get from the broadcast of the processor in round $t$.
	
	By definition, $\distrD_t$ is the uniform distribution on the set $\{ x : x \in \{0,1\}^n, x_t = 0 \}$. And $\distrD_t^{C} = \distrD_t$ if $t \notin C$, and is the uniform distribution on the set $\{ x : x \in \{0,1\}^n,x_t = 0,x_{j} = 1 \text{ for all $j \in C\setminus \{t\}$} \}$ otherwise.
	
	We care not about the probability of distinguishing a particular clique existing, but about whether any clique exists, so we take the expected value over all possible cliques of both sides of \eqref{eq:derive-one-step-one-round} gives:
	\begin{equation}\label{eq:need-to-prove-one-round}
	\Ex_{C \sim \distrS_{k}^{[n]}} \left[ \left\| \Prand^{(t)} - \distrP_{C}^{(t)} \right\| \right] \le \Ex_{C \sim \distrS_{k}^{[n]}} \left[ \left\| \Prand^{(t-1)} - \distrP_{C}^{(t-1)} \right\| \right] + \Ex_{p \sim \Prand^{(t-1)}} \Ex_{C \sim \distrS_{k}^{[n]}} \left[ \left\|f_t^{|p}(\distrD_t) - f_t^{|p}(\distrD_t^C)  \right\| \right].
	\end{equation}
	
	We can bound $\Ex_{C \sim \distrS_{k}^{[n]}} \left[ \left\| \Prand^{(t-1)} - \distrP_{C}^{(t-1)} \right\| \right]$ by the inductive hypothesis, so it suffices to bound \\$ \Ex_{p \sim \Prand^{(t-1)}} \Ex_{C \sim \distrS_{k}^{[n]}} \left[ \left\|f_t^{|p}(\distrD_t) - f_t^{|p}(\distrD_t^C)  \right\| \right]$. For $p \sim \Prand^{(t-1)}$, there are two cases:
	
	\begin{itemize}
		\item When $t \notin C$, which happens with probability $1 - \frac{k}{n}$, we have
		\[
		\left\|f_t^{|p}(\distrD_t) - f_t^{|p}(\distrD_t^C)  \right\| = 0,
		\]
		as $\distrD_t^C = \distrD_t$. That is, since $t$ is not in $C$, what $t$ says gives us no information about whether $C$ is a clique (whether there is a clique or not does not affect the input of $t$, and therefore does not affect their message). 
		
		\item When $t \in C$, which happens with probability $\frac{k}{n}$, by Lemma~\ref{lm:close-subset-full-one-round}, we have
		\[
		\Ex_{C' \sim \distrS_{k-1}^{[n] \setminus \{t\} } } \left[ \left\| f_t^{|p}(\distrD_t) - f_t^{|p}(\distrD_t^{C' \cup \{t\}} )  \right\| \right] \le O\left( \frac{k}{\sqrt{n}} \right).
		\]
	\end{itemize}
That is, when $t$ is in $C$, while $t$ might give information about whether $C$ is a clique, there are many cliques that may include $t$, and the inequality states that $t$ cannot give too much information about many of cliques (the expected amount of information revealed about a randomly chosen clique  of size $k$ containing $t$ is bounded by $O\left( \frac{k}{\sqrt{n}} \right)$).
	
	So, now we have bounded how much evidence the $t$-th processor reveals when broadcasting. We know that when the clique is chosen randomly, with probability $1 -k/n$ no information is revealed, and with probability $k/n$ at most $O(k/\sqrt{n})$ information is revealed in expectation. Combining these facts gives:
	\[
	\Ex_{p \sim \Prand^{(t-1)}} \Ex_{C \sim \distrS_{k}} \left[ \left\|f_t^{|p}(\distrD_t) - f_t^{|p}(\distrD_t^C)  \right\| \right]
\le \frac{k}{n} \cdot O \left( \frac{k}{\sqrt{n}} \right),
	\]
	which, plugging into \eqref{eq:need-to-prove-one-round} and using the inductive hypothesis proves inequality \eqref{eq:ex-distance-small-one-round} for $t$.
\end{proofof}

\subsubsection{Proof for Lemma~\ref{lm:close-subset-full-one-round}}

We need the following lemma first, whose proof is based on tools from information theory, and is deferred to the end of this subsection.\footnote{This lemma is standard and can be proved in various ways. We present a proof based on information theory because it can be easily generalized to a proof for Lemma~\ref{lm:close-subset-one}, which is used in Section~\ref{sec:clique-lower-bound}.}

\begin{lemma}\label{lm:close-subset-one-one-round}
	Let $n$ be an integer, and $\distr_n^{[i]}$ be the uniform distribution on $\{ x : x \in \{0,1\}^n, x_i = 1 \}$. For all function $f : \{0,1\}^n \to \{0,1\}$, we have
	\[
	\Ex_{ i \leftarrow [n] } \left[ \left\| f(\distr) - f(\distr^{[i]}) \right\| \right] \le O\left( \frac{1}{\sqrt{n}}  \right).
	\]
\end{lemma}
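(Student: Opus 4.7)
The plan is to reduce the statement to an information-theoretic inequality bounding $\sum_{i=1}^{n} I(Y; X_i)$, where $Y = f(X)$ and $X = (X_1,\ldots,X_n) \sim \distr_n$, and then apply Pinsker's inequality and Cauchy--Schwarz. The key observation is that $Y$ is a single bit, so $H(Y) \le 1$, and we can leverage the independence of the $X_i$'s.

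First I would show that $\sum_{i=1}^n I(Y; X_i) \le H(Y) \le 1$. For this, write $I(Y; X_1,\ldots,X_n) = H(X_1,\ldots,X_n) - H(X_1,\ldots,X_n \mid Y)$. By independence of the $X_i$'s, $H(X_1,\ldots,X_n) = \sum_i H(X_i)$; by subadditivity of entropy (conditioning reduces entropy), $H(X_1,\ldots,X_n \mid Y) \le \sum_i H(X_i \mid Y)$. Subtracting gives $I(Y; X_1,\ldots,X_n) \ge \sum_i I(Y; X_i)$, and of course $I(Y;X_1,\ldots,X_n) \le H(Y) \le 1$.

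Next, for each fixed $i$, I would translate the mutual information bound into a statistical distance bound on the quantity of interest. Since $X_i$ is a uniform bit,
\[
I(Y; X_i) \;=\; \tfrac{1}{2} D_{\mathrm{KL}}\!\bigl(f(\distr_n^{[i]}) \,\|\, f(\distr_n)\bigr) + \tfrac{1}{2} D_{\mathrm{KL}}\!\bigl(f(\distr_n^{[\bar i]}) \,\|\, f(\distr_n)\bigr),
\]
where $\distr_n^{[\bar i]}$ denotes the uniform distribution conditioned on $X_i = 0$. In particular $D_{\mathrm{KL}}(f(\distr_n^{[i]}) \,\|\, f(\distr_n)) \le 2\, I(Y; X_i)$. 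Pinsker's inequality then gives $\bigl\| f(\distr_n^{[i]}) - f(\distr_n) \bigr\|^2 \le \tfrac{1}{2} D_{\mathrm{KL}} \le I(Y; X_i)$.

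Finally, combining the two steps with Cauchy--Schwarz yields
\[
\Ex_{i \leftarrow [n]} \bigl[ \| f(\distr_n) - f(\distr_n^{[i]}) \| \bigr] \;=\; \frac{1}{n} \sum_{i=1}^{n} \| f(\distr_n) - f(\distr_n^{[i]}) \| \;\le\; \frac{1}{n}\sqrt{n \sum_{i=1}^{n} I(Y; X_i)} \;\le\; \frac{1}{\sqrt{n}},
\]
which is the desired $O(1/\sqrt{n})$ bound. The only mildly subtle step is the subadditivity-type bound $\sum_i I(Y; X_i) \le H(Y)$, which crucially uses both the independence of the $X_i$'s (to split the joint entropy as a sum) and the subadditivity of conditional entropy (to bound the conditional joint entropy by a sum); everything else is a direct application of Pinsker and Cauchy--Schwarz. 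This proof is also naturally set up to generalize to the setting of Lemma~\ref{lm:close-subset-full-one-round}, where one conditions on $k$ coordinates at a time.
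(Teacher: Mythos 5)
Your proposal is correct and follows essentially the same route as the paper's proof: bound $\sum_{i} I(X_i; f(X)) \le 1$ via subadditivity of conditional entropy (using that $f(X)$ is a single bit), convert mutual information to KL divergence against the conditional output distributions, and finish with Pinsker; the only cosmetic difference is that you average via Cauchy--Schwarz where the paper uses concavity of $\sqrt{x}$.
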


That is, if we consider a function $f$, suppose that on a uniform distribution, the probability it outputs $1$ is $p$. Then, if we pick a random index and set it to $1$, we still expect that if we randomly pick the rest of the coordinates, the output will be $1$ with probability approximately $p$.

Now we are ready to prove Lemma~\ref{lm:close-subset-full-one-round} (restated below).

\begin{reminder}{Lemma~\ref{lm:close-subset-full-one-round}}
	Let $n, k$ be integers such that $k \le n^{1/4}$, and $\distr_n^{C}$ be the uniform distribution on $\{ x : x \in \{0,1\}^n, x_i = 1 \text{ for all $i \in C$} \}$. For all function $f : \{0,1\}^n \to \{0,1\}$, we have
	\[
	\Ex_{C \sim \distrS_{k}^{[n]}} \left[ \left\| f(\distr_n) - f(\distr_n^{C}) \right\| \right] \le O\left( \frac{k}{\sqrt{n}} \right).
	\]
\end{reminder}

The idea of the proof is that each bit we set to $1$, by Lemma \ref{lm:close-subset-one-one-round}, will change the expected output of $f$ by $O\left(\frac{1}{\sqrt{n}}\right)$. Hence, if we set $k$ of those bits to $1$, that will change the expected outcome by $O\left(\frac{1}{\sqrt{n}}\right)$ at most $k$ times, for a total of $O\left(\frac{k}{\sqrt{n}}\right)$. A formal proof is included below:

\begin{proofof}{Lemma~\ref{lm:close-subset-full-one-round}}
	
	\newcommand{\distrT}{\mathcal{T}}		
	Instead of choosing $C$ from $\distrS_k^{[n]}$, we choose an ordered $k$-tuple of $a = (a_1,a_2,\dotsc,a_k)$ of $k$ distinct elements in $[n]$ uniformly at random. Let the distribution be $\distrT_{k}^{[n]}$.
	
	We have
	\begin{align}
	\Ex_{C \sim \distrS_{k}^{[n]}} \left[ \left\| f(\distr_n) - f(\distr_n^{C}) \right\| \right] =& \Ex_{a \sim \distrT_{k}^{[n]}} \left[ \left\| f(\distr_n) - f(\distr_n^{ \{a_i\}_{i=1}^{k} }) \right\| \right] \notag \\
	\le& \sum_{\ell=1}^{k} \Ex_{a \sim \distrT_{\ell}^{[n]}} \left[ \left\| f(\distr_n^{ \{a_i\}_{i=1}^{\ell-1} }) - f(\distr_n^{ \{a_i\}_{i=1}^{\ell} }) \right\| \right] \notag \\
	\le& \sum_{\ell=0}^{k-1} \Ex_{a \sim \distrT_{\ell}^{[n]}} \Ex_{j \leftarrow [n] \setminus \{a_i\}_{i=1}^{\ell} } \left[ \left\| f(\distr_n^{ \{a_i\}_{i=1}^{\ell} }) - f(\distr_n^{ \{a_i\}_{i=1}^{\ell} \cup \{j\} }) \right\| \right]. \label{line:eq-term-one-round}
	\end{align}
	
	Now we are to bound the right side of \eqref{line:eq-term-one-round} for each $0 \le \ell \le k-1$ separately. Applying Lemma~\ref{lm:close-subset-one-one-round} on the restriction of $f$ such that all bits in $\{a_i\}$ are set to $1$, we have
	\[
	\Ex_{j \leftarrow [n] \setminus \{a_i\}_{i=1}^{\ell} } \left[ \left\| f(\distr_n^{ \{a_i\}_{i=1}^{\ell} }) - f(\distr_n^{ \{a_i\}_{i=1}^{\ell} \cup \{j\} }) \right\| \right] \le O\left(\frac{1}{\sqrt{n - \ell}}\right).
	\]
	
	Plugging the above in \eqref{line:eq-term-one-round} and noting that $k \le n^{1/4}$ completes the proof.
\end{proofof}

Now we prove Lemma~\ref{lm:close-subset-one-one-round}. The proof makes use several tools from information theory, see Section~\ref{sec:info-theory-prelim} for the details.

\begin{proofof}{Lemma~\ref{lm:close-subset-one-one-round}}
	
	Throughout the proof we will assume $X$ is a random variable drawn uniformly from $\{0,1\}^n$. For $i \in [n]$, let $X_i$ be the random variable of the $i$-th bit of $X$.
	
	We have
	\[
	I(X_i;f(X)) = H(X_i) - H(X_i | f(X)) = 1 - H(X_i | f(X)).
	\]
	
	And by the sub-additivity of conditional entropy, we have
	\[
	\sum_{i=1}^{n} H(X_i | f(X)) \ge H(X | f(X)) \ge n-1.
	\]
	Therefore,
	\[
	\sum_{i=1}^{n} I(X_i;f(X)) \le n - (n-1) \le 1.
	\]
	or equivalently,
	\[
	\Ex_{i \leftarrow [n]} I(X_i;f(X)) \le \frac{1}{n}.
	\]
	Note that by Fact~\ref{ft:KL-mutual},
	\[
	I(X_i ; f(X)) := \Ex_{x \sim X_i} D( f(X)_{X_i = x} || f(X) ).
	\]
	Taking expected values over $i$ of both sides and using $\Ex_{i \leftarrow [n]} I(X_i;f(X)) \le \frac{1}{n}$ gives
	\[		
	\Ex_{i \leftarrow [n]} I(X_i;f(X)) = \Ex_{i \leftarrow [n]} \Ex_{x \sim X_i} D( f(X)_{X_i = x} || f(X) ) \le \frac{1}{n}.
	\]
	By Pinsker's inequality (Lemma~\ref{lm:pinsker}) and the fact that $\sqrt{x}$ is a concave function, we have
	\[
	\Ex_{i \leftarrow [n]} \Ex_{x \sim X_i} \| f(X)_{X_i = x} - f(X) \| \le \sqrt{\frac{1}{n}},
	\]
	and
	\[
	\Ex_{i \leftarrow [n]} \frac{1}{2} \cdot \| f(X)_{X_i = 1} - f(X) \| \le \sqrt{\frac{1}{n}}.
	\]		
	
	Note that by definition, $f(X)_{X_i = 1}$ is distributed identically to $f(\distr^{[i]})$, which completes the proof.
\end{proofof}



\subsection{Related Work}



\paragraph{\BCAST(1) Congested Clique:} The specific distributed model we investigate is the Broadcast Congested Clique.
In this model, there are $n$ processors, and computation proceeds in rounds. In each round, each processor broadcasts a short message to all other processors. It has recently been studied in \cite{bcasttounicast, reconstruction, algebraic, erasures, power, morealgebraic, nelson2018optimal, approx, censor2016derandomizing, cyclelocality, MSFconnectivity, montealegre2016deterministic, jurdzinski2017brief}, among others. It has been used to study other areas in computer science such as streaming algorithms \cite{streaming} and mechanism design \cite{mechanism}.

\paragraph{Complexity Theoretic approaches in Distributed Computing:}
Recently, more complexity theoretic approaches and results have been made in the congested clique and distributed computation in general, for example in \cite{complexity, localcomplex, localcomplex2, hierarchy}.

\paragraph{Pseudo-randomness and Distributed Computing:}
In \cite{sharedcoins}, the authors construct a pseudo-random generator which creates additional shared random bits in a distributed system. Specifically, the authors work in a setting where every pair of processors can privately communicate with each other (whereas we work in the broadcast model), and some of the processors may be adversarially faulty. They show how to use few shared random bits, and unlimited private random bits to efficiently compute more shared random bits. In our setting, we are saving on private random bits flipped (when all processors are non-faulty, it is easy to turn a private random bit into a public random bit -- simply broadcast it).

In \cite{INW94}, the authors construct pseudo-randomness for a different distributed system, in which the network has a topology. Their main application is constructing pseudo-randomness that fools all low-space computation.

In \cite{distributedprg}, a different setting than ours is considered, in which the processors are computationally bounded, and a cryptographic pseudo-random function is being evaluated.

In \cite{meravprg}, a pseudo-random generator that fools DNFs is used to deterministically construct spanners in the congested clique. In that work, the pseudo-random generator is used for the specific problem considered, as opposed to being a pseudo-random generator which fools all algorithms in the model.

Pseudo-randomness in the context of complexity theory has been very widely studied. See Vadhan's survey \cite{vadhanpseudorand}.

In~\cite{GHK18}, the authors introduce general methods for derandomizing algorithms in the LOCAL model to obtain better deterministic algorithms. 


\paragraph{Planted Clique:} The planted clique problem (or hidden clique problem) was introduced in~\cite{Jer92} and~\cite{Kuc95}. 
The best known classical algorithm~\cite{FK00,DGP14} can find the hidden clique when its size is $k = \Omega(\sqrt{n})$ in near linear-time. For $k \ll \sqrt{n}$, the na\"{\i}ve algorithm (looking for a clique of size $10\log n$ with brute force, and then extending that clique to the whole clique) can solve it in $n^{O(\log n)}$ time, and the problem is conjectured to be not solvable in polynomial time. However, since it is an average-case problem, it is unlikely that the hardness of this problem can be derived from standard complexity assumptions such as $\textsf{P} \ne \textsf{NP}$~\cite{FF93,BT06}. Therefore, much work has been put into trying to show limitations for certain classes of algorithm on this problem~\cite{FK03,MPW15,DM15,HOP+2018,BHKKMP16}, or showing tight hardness for closely-related worst-case problems under standard assumptions~\cite{BKRW17}.

\paragraph{Distributed Clique lower bounds}
There is some literature on lower bounds for finding cliques in the congested clique model \cite{power}, as well as the standard CONGEST model \cite{cliqueCONGEST}. These lower bounds hold in the worst case, and have no direct implications about the hardness of the planted clique problem.

\section*{Organization of the Paper}
In Section~\ref{sec:prelim}, we introduce the needed preliminaries for this paper. Section~\ref{sec:abs-framework} we present an abstract framework for our approach. In Section~\ref{sec:clique-lower-bound} we prove the lower bound for planted clique in $\BCAST(1)$. In Section~\ref{sec:PRG-overview} we give an overview of the proofs for the PRG construction for $\BCAST(1)$, starting with a one-round toy example. Then in Section~\ref{sec:one-secret}, we show how to create a single pseudo-random bit for each processor, which also implies our average case lower bound for $\BCAST(1)$. Next, in Section~\ref{sec:completePRG} we show how to create many pseudo-random bits. Finally, in Section~\ref{sec:lowerbound} we show our pseudo-random generator's parameters are optimal. 

\section{Preliminaries}\label{sec:prelim}

\newcommand{\distrQ}{\mathcal{Q}}


\subsection{Notations}

Here we summarize some standard notations which are used in this paper. 

For integers $n$ and $m$, we use $\distr_{m}$ to denote the uniform distribution on $\{0,1\}^m$, and $\distr_{n \times m}$ to denote the uniform distribution on $\{0,1\}^{n \times m}$.

Let $X,Y$ be two sets. For a function $f : X \to Y$ and a distribution $\distrD$ on $X$, we use $f(\distrD)$ to denote the distribution of the output of $f$ when the input is drawn from $\distrD$. For two distributions $\distrD_1$ and $\distrD_2$ on a set $X$, we use $\| \distrD_1 - \distrD_2\| = \frac12 \sum_{x \in X} |\distrD_1(x) - \distrD_2(x)|$ to denote their statistical distance (where $\distrD(x)$ is the probability that a sample from $\distrD$ equals $x$).

\subsection{Analysis of Boolean Functions}
\label{sec:analysis-of-boolfunc}

Our proofs make use of some well-known facts from analysis of Boolean functions\footnote{Some nice references can be found in~\cite{de2008brief,o2014analysis}}.

For any function $f : \{0,1\}^{n} \to \mathbb{R}$, its \emph{Fourier coefficient} at a set $S$ is defined as
\[
\WH{f}(S) := \Ex_{x \sim \distr_{n} }  \left[f(x) \cdot (-1)^{\sum_{i \in S} x_i} \right].
\]

Parseval’s Identity states
\[
\Ex_{x \sim \distr_n} \left[f(x)^2\right] = \sum_{S \subseteq [n]} \WH{f}(S)^2.
\]

\subsection{Probability Theory}

The following lemma is standard. We provide a proof here for completeness.

\begin{reminder}{Lemma~\ref{lm:dist}}
	Let $X$ and $Y$ be two sets, and $\distrD$ and $\distrD'$ be two distributions on $X \times Y$. Let $\distrD_{|X}$ and $\distrD'_{|X}$ be the respective marginal distribution of $\distrD$ and $\distrD'$ on set $X$. For $a \in X$, we use $\distrD_{X=a}$ and $\distrD'_{X=a}$ to denote the respective conditional distribution of $\distrD$ and $\distrD'$ on $Y$ conditioning on $X = a$.\footnote{For simplicity, we let $\distrD_{X=a}$ be the uniform distribution on $Y$ if $\Pr_{(x,y) \sim \distrD}[x=a] = 0$.}
	We have
	\[
	\| \distrD - \distrD' \| \le \| \distrD_{|X} - \distrD'_{|X} \| + \Ex_{a \sim \distrD_{|X}} \left[ \| \distrD_{X = a} - \distrD'_{X = a} \| \right].
	\]
\end{reminder}
\newcommand{\aux}{\textsf{aux}}
\begin{proof}
	We first define an auxiliary distribution $\distrD_\aux$ as follows: for $(a,b) \in X \times Y$, if $\distrD'_{|X}(a) > 0$, (we use $\distrD'_{|X}(a)$ to denote the probability that a sample from $\distrD'_{|X}$ equal $a$).
	
	\[
	\distrD^\aux(a,b) := \distrD'(a,b) \cdot \frac{\distrD_{|X}(a)}{\distrD'_{|X}(a)}.
	\]
	
	Otherwise, we set $\distrD^\aux(a,b) := \frac{1}{|Y|} \cdot \distrD_{|X}(a)$. It is easy to verify that $\distrD^\aux_{|X} = \distrD_{|X}$ and for all $a$ $\distrD^\aux_{X = a} = \distrD'_{X = a}$, and therefore it is a distribution.
	
	Now, it is easy to see that
	\[
	\| \distrD^\aux - \distrD' \| =  \| \distrD_{|X} - \distrD'_{|X} \|.
	\]
	
	Moreover, we have
	\[
	\| \distrD^\aux - \distrD \| = \Ex_{a \sim \distrD_{|X}} \left[ \| \distrD_{X = a} - \distrD'_{X = a} \| \right].
	\]
	
	Putting everything together, we have
	\begin{align*}
	\| \distrD - \distrD' \| &\le 	\| \distrD^\aux - \distrD' \| + \| \distrD^\aux - \distrD \|.\\
	&\le \| \distrD_{|X} - \distrD'_{|X} \| + \Ex_{a \sim \distrD_{|X}} \left[ \| \distrD_{X = a} - \distrD'_{X = a} \| \right].
	\end{align*}
	
\end{proof}


\subsection{Information Theory}\label{sec:info-theory-prelim}

In this paper we need some definitions and facts from information theory. For an excellent introduction to information theory, one is referred to the textbook by Cover and Thomas~\cite{CT06}. We consider discrete random variables in this paper.


Let $X,Y$ be random variables in the same probability space $\Omega$. The entropy of $X$, denoted by $H(X)$, is defined as $H(X) := \Pr_{a \sim X} \log \frac{1}{\Pr[X = a]} $. The conditional entropy of $X$ given $Y$, denoted as $H(X|Y)$, is defined as $H(X|Y) := \Ex_{y \sim Y} H(X | Y = y)$.

The mutual information between $X$ and $Y$, denoted by $I(X;Y)$, is defined as $I(X;Y) := H(X) - H(X|Y) = H(Y) - H(Y|X) = H(X) + H(Y) - H(X,Y)$. 

For two distributions $\distrP$ and $\distrQ$ on the same set $S$, their Kullback-Leibler (KL) divergence is defined as
\[
D(\distrP || \distrQ) := \sum_{s \in S} \distrP(s) \cdot \log \frac{\distrP(s)}{\distrQ(s)}.
\]

KL divergence is related to mutual information in the following way.

\begin{ft}\label{ft:KL-mutual}
	\[
	I(X; Y) := \Ex_{x \sim X} D( Y | X =x || Y ),
	\]
\end{ft}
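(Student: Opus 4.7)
The plan is a direct expansion of both sides from their definitions, since this is the standard chain of identities relating mutual information to KL divergence. I would start from the right-hand side and write
\[
\Ex_{x \sim X} D(Y \mid X = x \,\|\, Y) = \sum_{x} \Pr[X=x] \sum_{y} \Pr[Y=y \mid X=x] \log \frac{\Pr[Y=y \mid X=x]}{\Pr[Y=y]},
\]
then absorb $\Pr[X=x] \cdot \Pr[Y=y \mid X=x]$ into the joint probability $\Pr[X=x, Y=y]$ and split the logarithm into a difference of two logarithms.

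The first piece, $\sum_{x,y} \Pr[X=x, Y=y] \log \Pr[Y=y \mid X=x]$, is exactly $-H(Y \mid X)$ by the definition of conditional entropy (grouping the sum over $y$ by fixed $x$ and averaging against $\Pr[X=x]$). The second piece, $-\sum_{x,y} \Pr[X=x, Y=y] \log \Pr[Y=y]$, collapses after summing out $x$ to $-\sum_y \Pr[Y=y] \log \Pr[Y=y] = H(Y)$. Combining these two gives $H(Y) - H(Y \mid X)$, which equals $I(X;Y)$ by the definition of mutual information recalled just above the fact.

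There is essentially no obstacle here; the only things to be slightly careful about are conventions when $\Pr[X=x] = 0$ (in which case the corresponding summand is taken to be zero by convention, matching the footnote-style convention used elsewhere in the preliminaries) and the base of the logarithm (which is consistent between the definitions of $H$ and $D$). I would therefore present the proof as a single display of four equalities: definition of expectation and $D$, merging into the joint, splitting the logarithm, and identifying the two pieces as $-H(Y \mid X)$ and $H(Y)$ respectively.
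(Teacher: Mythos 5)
Your proof is correct: expanding $\Ex_{x\sim X} D(Y\mid X=x\,\|\,Y)$ into the joint distribution, splitting the logarithm, and identifying the two pieces as $-H(Y\mid X)$ and $H(Y)$ gives exactly $H(Y)-H(Y\mid X)=I(X;Y)$, matching the paper's definition of mutual information. The paper itself states this fact without proof, treating it as standard (it is the textbook identity from Cover and Thomas), so your derivation is precisely the canonical argument one would supply, with the right care about zero-probability conditioning events and a consistent logarithm base.
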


\begin{lemma}[Pinsker's Inequality]\label{lm:pinsker}
	For two distributions $\distrD_1$ and $\distrD_2$, we have
	\[
	\| \distrD_1 - \distrD_2 \| \le \sqrt{ \frac{1}{2} \cdot D(\distrD_1 || \distrD_2)}.
	\]
\end{lemma}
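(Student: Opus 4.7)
The plan is to reduce the general inequality to the two-point (Bernoulli) case via a data-processing argument, and then prove the Bernoulli case by elementary calculus. First, let $A = \{x : \distrD_1(x) \ge \distrD_2(x)\}$ and observe that by the definition of statistical distance, $\| \distrD_1 - \distrD_2 \| = \distrD_1(A) - \distrD_2(A)$. Write $p := \distrD_1(A)$ and $q := \distrD_2(A)$, and let $P = (p, 1-p)$, $Q = (q, 1-q)$ be the Bernoulli distributions obtained by pushing $\distrD_1$ and $\distrD_2$ through the indicator $x \mapsto \mathbf{1}[x \in A]$. The goal is to transfer both the statistical distance and (an upper bound on) the KL divergence to this binary setting.

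The first step is to show the data-processing bound
\[
D(P \,\|\, Q) \;\le\; D(\distrD_1 \,\|\, \distrD_2).
\]
This is immediate from the log-sum inequality applied to the two blocks of the partition $\{A, A^c\}$: convexity of $t \mapsto t \log t$ gives $\sum_{x \in A} \distrD_1(x) \log \frac{\distrD_1(x)}{\distrD_2(x)} \ge \distrD_1(A) \log \frac{\distrD_1(A)}{\distrD_2(A)}$, and similarly for $A^c$; summing the two bounds yields the claim. This step is a routine one-line computation.

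The second and main step is the Bernoulli case: for all $p, q \in [0,1]$,
\[
(p - q)^2 \;\le\; \tfrac{1}{2} \left( p \log \tfrac{p}{q} + (1-p) \log \tfrac{1-p}{1-q} \right).
\]
I would fix $q$ and define $h(p) := p \log(p/q) + (1-p) \log((1-p)/(1-q)) - 2(p-q)^2$. A direct computation gives $h(q) = 0$ and $h'(q) = 0$, and $h''(p) = \frac{1}{p(1-p)} - 4$ (up to a $\ln 2$ factor depending on the logarithm convention; the statement implicitly fixes this). Since $p(1-p) \le 1/4$ throughout $[0,1]$, we have $h''(p) \ge 0$, so $h$ is convex on $[0,1]$ with its minimum at $p = q$, giving $h(p) \ge 0$ everywhere.

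Combining the two steps, $\| \distrD_1 - \distrD_2 \|^2 = (p - q)^2 \le \tfrac{1}{2} D(P \,\|\, Q) \le \tfrac{1}{2} D(\distrD_1 \,\|\, \distrD_2)$, and taking square roots finishes the proof. The main obstacle is the Bernoulli inequality, which is routine but requires some care at the boundary cases $p, q \in \{0, 1\}$: when $q \in \{0,1\}$ the divergence may be infinite (in which case the inequality is trivial), and when $p \in \{0,1\}$ one obtains the bound as a limit of interior values using continuity. A very minor secondary issue is keeping the logarithm base consistent between the statement of KL divergence and the constant $1/2$ in Pinsker's bound.
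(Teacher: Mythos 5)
Your proof is correct: the reduction of Pinsker's inequality to the two-point case via the set $A=\{x:\distrD_1(x)\ge\distrD_2(x)\}$ and the log-sum inequality, followed by the second-derivative argument for the Bernoulli case, is the standard textbook argument, and your handling of the boundary cases and of the logarithm base (the paper's bound with constant $\tfrac12$ only becomes weaker if $D(\cdot\|\cdot)$ is measured in bits rather than nats) is adequate. The paper itself does not prove this lemma; it states it as a standard fact from information theory (with Cover--Thomas cited in the preliminaries), so there is no paper proof to compare against, and your argument is a perfectly valid way to supply one.
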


\newcommand{\Ber}{\textsf{Ber}}

For a real $p \in [0,1]$, we use $\Ber(p)$ to denote the binary Bernoulli random variable with expectation $p$. We also use $H(p)$ to denote the $H(\Ber(p))$. We have the following fact.

\begin{ft}\label{ft:H-p-property}
	If $H(p) \ge 0.9$, we have $p \in [0.3,0.7]$, and
	\[
	\frac{1-H(p)}{(p-1/2)^2} \in [2,3].
	\]
\end{ft}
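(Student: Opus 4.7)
\medskip

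\noindent\textbf{Proof plan for Fact~\ref{ft:H-p-property}.} The plan is to reduce both claims to a power series expansion of $1 - H(p)$ around $p = 1/2$. Throughout, write $q := p - 1/2$ and recall that $H(p) = -p\log_2 p - (1-p)\log_2(1-p)$ is continuous, strictly concave, symmetric about $1/2$, strictly increasing on $[0,1/2]$ and strictly decreasing on $[1/2,1]$, with $H(1/2) = 1$.

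\smallskip

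\noindent\textbf{Step 1 (the interval).} I would first establish $H(p)\ge 0.9 \Rightarrow p\in[0.3,0.7]$. By the symmetry $H(p)=H(1-p)$ and the monotonicity properties above, it suffices to check $H(0.3)<0.9$. A direct numerical evaluation gives $H(0.3) = 0.3\log_2(10/3) + 0.7\log_2(10/7) \approx 0.8813 < 0.9$, so any $p$ with $H(p)\ge 0.9$ must satisfy $|p-1/2|<0.2$.

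\smallskip

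\noindent\textbf{Step 2 (a series representation).} The key computation is to expand $1-H(1/2+q)$. Using $\log_2(1\pm 2q) = \frac{1}{\ln 2}\sum_{k\ge 1}(\pm 1)^{k+1}(2q)^k/k$ inside the formula
\[
1 - H(1/2+q) \;=\; (1/2+q)\log_2(1+2q) + (1/2-q)\log_2(1-2q),
\]
the odd-$k$ terms combine via $(1/2+q)-(1/2-q)=2q$ and the even-$k$ terms combine via $-(1/2+q)-(1/2-q)=-1$, so after reindexing one obtains the clean identity
\[
1-H\!\left(\tfrac{1}{2}+q\right) \;=\; \frac{1}{\ln 2}\sum_{m=1}^{\infty}\frac{(2q)^{2m}}{2m(2m-1)}.
\]
Dividing by $q^2$ yields
\[
\frac{1-H(p)}{(p-1/2)^2} \;=\; \frac{1}{\ln 2}\sum_{m=1}^{\infty}\frac{4\,(4q^2)^{m-1}}{2m(2m-1)}.
\]

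\smallskip

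\noindent\textbf{Step 3 (the two-sided bound).} All coefficients in this series are positive, so the ratio is a monotonically non-decreasing function of $q^2$. For the \emph{lower bound}, the $m=1$ term alone contributes $\tfrac{2}{\ln 2}\approx 2.885>2$, and every other term is non-negative, giving the ratio $\ge 2$ for all $q$. For the \emph{upper bound}, monotonicity in $q^2$ means the maximum on $p\in[0.3,0.7]$ is attained at $p=0.3$ (equivalently $p=0.7$). A direct evaluation using the value $H(0.3)\approx 0.8813$ from Step~1 gives $(1-H(0.3))/(0.2)^2 \approx 0.1187/0.04 \approx 2.968 < 3$, completing the upper bound.

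\smallskip

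\noindent\textbf{Main obstacle.} The delicate point is the upper bound, which is genuinely tight: the ratio gets within about $0.03$ of~$3$ at the endpoints $p=0.3,0.7$. If one prefers a non-numerical argument, I would instead bound the tail $\sum_{m\ge 2}$ of the series using $4q^2\le 0.16$ and the elementary estimate $\sum_{m\ge 2}(4q^2)^{m-1}/[2m(2m-1)] \le (4q^2)/12 \cdot \frac{1}{1-4q^2}$, which yields a total well below $3\ln 2$. Either way, the tightness means the constant~$3$ in the statement cannot be improved without also shrinking the interval.
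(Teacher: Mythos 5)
Your proof is correct. Note that the paper itself states Fact~\ref{ft:H-p-property} without proof (it is asserted in the information-theory preliminaries as a standard calculus fact), so there is no in-paper argument to compare against; your Taylor-series derivation $1-H(\tfrac12+q)=\frac{1}{\ln 2}\sum_{m\ge 1}\frac{(2q)^{2m}}{2m(2m-1)}$ is a clean, self-contained verification. All the key steps check out: $H(0.3)\approx 0.881<0.9$ gives the interval; positivity of the series coefficients gives both the lower bound (the $m=1$ term alone is $2/\ln 2>2$) and the monotonicity in $q^2$ that reduces the upper bound to the endpoint $p=0.3$, where the ratio is about $2.97<3$. One small caveat: in your non-numerical fallback, the bound you get is $2+\tfrac{4q^2}{3(1-4q^2)}\le 2.064$, which is below $3\ln 2\approx 2.079$ but only barely, so ``well below'' overstates the slack; the inequality still holds, but if you present that route you should display the two numbers explicitly rather than wave at them.
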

\newcommand{\distrpseduo}{\distrA_{\sf pseudo}}
\newcommand{\distrrand}{\distrA_{\sf rand}}
\newcommand{\progressFunc}{\mathcal{L}_{\sf progress}}
\newcommand{\realDist}{\mathcal{L}_{\sf real\text{-}dist}}
\newcommand{\Ppseduo}{\distrP_{\sf pseudo}}

\section{Abstract Framework}\label{sec:abs-framework}
In this section, we present an abstraction of our framework. Understanding this section is not necessary to understand the rest of the sections of the paper. It is included to make it easier to understand the structure of the proof without having to dig through the problem-specific technical parts. 

In the following we exhibit an abstract framework for showing a certain input distribution $\distrpseduo$ is indistinguishable from the uniform random input distribution $\distrrand$ by a low round $\BCAST(1)$ protocol\footnote{It is not imperative that one of the distributions is uniform. We decide to present the framework with one of the distributions as uniform for the sake of simplicity, and since our two main applications of the framework in this paper involve distinguishing distributions from uniform.}.

For the simplicity of discussion. We assume each of the $n$ processors gets $n$ bits as its input. We also use a matrix $A \in \{0,1\}^{n \times n}$ to denote their inputs collectively, where the $i$-th player gets the $i$-th row of $A$. Then $\distrrand$ is simply the uniform distribution over $\{0,1\}^{n \times n}$.


\paragraph*{Notations.} 
We first recall and introduce some notations. Let $\distr_{m}$ denote the uniform distribution on $\{0,1\}^m$. For a function $f : \{0,1\}^{*} \to \{0,1\}^{*}$ and a distribution $\distrD$ on $\{0,1\}^*$, we use $f(\distrD)$ to denote the distribution of the output of $f$ when the input is drawn from $\distrD$. For two distributions $\distrD_1$ and $\distrD_2$, we use $\| \distrD_1 - \distrD_2\| = \frac12 \sum_{x \in \{0,1\}^*} |\distrD_1(x) - \distrD_2(x)|$ to denote their statistical distance (where $\distrD(x)$ is the probability that a sample from $\distrD$ equals $x$).

By Yao's principle \cite{yaoprinciple}, we can assume all processors are deterministic as we are trying to prove a lower bound for distinguishing two input distributions. Processor $i$ can then be defined by a function $f_i : \{0,1\}^{n} \times \{0,1\}^{*} \to \{0,1\}$, such that $f_i(z,p)$ is the bit that player $i$ outputs when it gets the input $z$ and transcript $p$. We use $f_i^{|p}$ to denote the function $f_i(\cdot,p)$ for simplicity. If transcript $p$ is incompatible with player $i$ having input $z$, then we set $f_i(z, p)$ arbitrarily.

Given a $\BCAST(1)$ protocol $\Pi$ and an input distribution $\distrD$, we use $\distrP(\Pi,\distrD)$ to denote the distribution of the transcripts of the protocol $\Pi$ running on a input drawn from $\distrD$. We also use $\distrP^{(t)}(\Pi,\distrD)$ to denote the distribution of the same transcript in first $t$ turns.

\subsubsection*{A Relaxation}

Instead of viewing the algorithm as a single round algorithm, we will prove a slightly stronger lower bound. Consider the model where we have $j \cdot n$ turns instead of $j$ rounds. On the $t^{th}$ turn, processor $(t-1) \bmod{n} + 1$ gets to send a single bit. This model is stronger than $j$ rounds of the \BCAST(1) model, since it allows the later processors to condition their outputs on earlier processors' messages. Hence, lower bounds for this relaxed model imply lower bounds for the \BCAST(1) model as well.

\subsubsection*{Decomposition into Row-Independent Distributions}

\newcommand{\IndexSet}{\mathcal{I}}

We first write $\distrpseduo$ as an average of many row-independent distributions. Let $\IndexSet$ be an index set, and $\{ \distrA_{I} \}_{I \in \IndexSet}$ be a family of distributions, we need the following two properties:

\begin{itemize}
	\item $\distrpseduo = \frac{1}{|\IndexSet|} \sum_{I \in \IndexSet} \distrA_{I}$. That is, $\distrpseduo$ can be written as an average of all distributions in $\{ \distrA_{I} \}_{I \in \IndexSet}$.
	\item For each $I \in \IndexSet$, $\distrA_I =  \bigoplus_{i=1}^{n} \distrA_I^{[i]} $, where $\oplus$ means concatenation and all $\distrA_I^{[i]}$'s are independent. Equivalently, rows in $\distrA_{I}$ are \emph{independent}. (Each row is a single node's input).
\end{itemize}

\subsubsection*{Progress Function}
We first fix a $\BCAST(1)$ protocol $\Pi$. For simplicity, we define $\Prand^{(t)} = \distrP^{(t)}(\Pi,\distrrand)$, $\Ppseduo^{(t)} = \distrP^{(t)}(\Pi,\distrpseduo)$ and $\distrP^{(t)}_{I} = \distrP^{(t)}(\Pi,\distrA_{I})$ for $I \in \IndexSet$.

 Ideally, we would like to bound
\[
\realDist^{(t)} :=  \left\| \Ppseduo^{(t)} - \Prand^{(t)} \right\|
\]
round by round. But as discussed in the introduction, the above is very hard to work with, so we try to bound the following progress function instead:

\[
\progressFunc^{(t)} := \Ex_{I \leftarrow \IndexSet} \left[ \left\| \distrP^{(t)}_I - \Prand^{(t)} \right\| \right].
\]

It is not hard to see that $\realDist^{(t)} \le \progressFunc^{(t)}$: we know that $\progressFunc^{(t)} = \frac{1}{|\IndexSet|} \sum_{I \in \IndexSet} \left\| \distrP^{(t)}_I - \Prand^{(t)} \right\|$, which by a triangle inequality is greater than or equal to $  \left\| \frac{1}{|\IndexSet|} \sum_{I \in \IndexSet} \left[ \distrP^{(t)}_I - \Prand^{(t)}\right] \right\| = \left\| \Ppseduo^{(t)} - \Prand^{(t)} \right\|$ so showing an upper bound on $\progressFunc^{(t)}$ is sufficient for upper bounding $\realDist^{(t)}$. 

\subsubsection*{Upper Bounding the Progress Made in Turn $t$}

Now suppose we are at the $t$-th turn. Let $j$ be the current round number, and $i$ be the broadcasting processor of this turn. By Lemma~\ref{lm:dist}, for all $I \in \IndexSet$, we have
\begin{equation}\label{eq:simple-eq}
\left\| \distrP^{(t)}_I - \Prand^{(t)} \right\| \le \left\| \distrP^{(t-1)}_I - \Prand^{(t-1)} \right\| + \Ex_{p \sim \Prand^{(t-1)}} \left[ \left\|f_i^{|p}(\distrD_i | p) - f_i^{|p}(\distrD_i^I | p)  \right\| \right].
\end{equation}

In above, $\distrD_i | p$ and $\distrD_i^{I} | p$ are the input distributions to player $i$ conditioning on seeing the transcript $p$ of the previous $t-1$ rounds, in $\distrrand$ and $\distrA_{I}$ respectively. 

First, since $\distrD_i$ is just $\distr_{n}$, we can see $\distrD_i | p$ is simply the uniform distribution on the set of inputs which is consistent with the transcript $p$. Formally, let $t_1,t_2,\dotsc,t_{j-1}$ be the indices of all previous $j-1$ turns with processor $i$ broadcasting, before the current $t$-th turn. For $x \in \{0,1\}^{n}$, we say that $x$ is consistent with transcript $p$, if for all $\ell \in [j-1]$, we have
\[
f_i^{|p^{(t_\ell - 1)}}(x) = p_{t_\ell},
\]
where $p^{(t_\ell - 1)}$ denotes the first $t_\ell - 1$ bits of $p$. That is, simulating $f_i$ with respect to $p$ on $x$ gives the same outputs in $p$.


Let $D^{(t-1)}_p$ denote the set of inputs to $f_i$ which are consistent with the transcript $p$. Then we can see $\distrD_i | p$ is the uniform distribution on $D^{(t-1)}_p$. Similarly, since $\distrA_I$ is row-independent, $\distrD_i^I | p$ is just $\distrA_I^{[i]}$ conditioning on $D^{(t-1)}_p$. We denote this as $\distrA_I^{[i]} | D^{(t-1)}_p$.

Plugging in~\eqref{eq:simple-eq}, and taking an expectation for all $I \in \IndexSet$, we have

\begin{equation}\label{eq:bound-next-step-abs}
\Ex_{I \leftarrow \IndexSet}\left\| \distrP^{(t)}_I - \Prand^{(t)} \right\| \le \Ex_{I \leftarrow \IndexSet} \left\| \distrP^{(t-1)}_I - \Prand^{(t-1)} \right\| + \Ex_{p \sim \Prand^{(t-1)}} \Ex_{I \leftarrow \IndexSet} \left[ \left\|f_i^{|p}(\distr_{D^{(t-1)}_p}) - f_i^{|p}(\distrA_I^{[i]} | D^{(t-1)}_p)  \right\| \right].
\end{equation}

A key observation here is that $\distrD^{(t-1)}_p$ is usually a large set over $p \sim \Prand^{(t-1)}$. The proof of the following claim is essentially the same as the proof for Claim~\ref{claim:large-D-p-clique} in Section \ref{sec:clique-lower-bound}, so we omit it here. 

\begin{claim}
	For all $\varepsilon > 0$,
	\[
	\Pr_{p \sim \Prand^{(t-1)}} \left[ D^{(t-1)}_p \ge 2^{n - j} \cdot \epsilon \right] \ge 1 - \epsilon.
	\]
\end{claim}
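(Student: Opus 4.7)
The plan is to derive the claim from a short counting argument, once I recognize that, under $\Prand^{(t-1)}$, the input of processor $i$ is uniform on $\{0,1\}^n$ and independent of all other processors' inputs. The set $D_p^{(t-1)}$ records exactly which inputs to processor $i$ are compatible with the full prior transcript, so its size measures how many of those $2^n$ possibilities remain live after the first $j-1$ broadcasts by processor $i$. What I want to rule out is the event that very many possibilities have already been eliminated, and the natural obstruction is that only $2^{j-1}$ distinct broadcast patterns are possible in those $j-1$ slots.

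First I would fix the sub-transcript $\tilde p$ consisting of every bit of $p$ broadcast by a processor other than $i$, and let $b = (p_{t_1},\dots,p_{t_{j-1}}) \in \{0,1\}^{j-1}$ vary over the bits broadcast by processor $i$. Because $X_i$ is independent of all other inputs under $\distrrand$, once $\tilde p$ is fixed the entire contribution of processor $i$ to the transcript, namely $b$, is a deterministic function of $X_i \in \{0,1\}^n$. Consequently the preimages $D_b^{\tilde p} := D_{(\tilde p,b)}^{(t-1)}$ partition $\{0,1\}^n$ into at most $2^{j-1}$ classes, and the conditional distribution of $b$ under $\Prand^{(t-1)}$ puts weight exactly $|D_b^{\tilde p}|/2^n$ on each $b$.

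Next I would apply the one-line union/Markov bound
\[
\Pr\!\left[|D_b^{\tilde p}|\le M \,\middle|\, \tilde p\right] \;=\; \sum_{b:\,|D_b^{\tilde p}|\le M} \frac{|D_b^{\tilde p}|}{2^n} \;\le\; 2^{j-1}\cdot\frac{M}{2^n}.
\]
Choosing $M = 2^{n-j}\varepsilon$ makes the right-hand side at most $\varepsilon/2 \le \varepsilon$. Averaging this conditional bound over $\tilde p$ (which also has its distribution inherited from $\Prand^{(t-1)}$) removes the conditioning and yields the desired
\[
\Pr_{p \sim \Prand^{(t-1)}}\!\left[\,|D_p^{(t-1)}| \ge 2^{n-j}\varepsilon\,\right] \ge 1-\varepsilon.
\]

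The main thing to be careful with, and the only place where subtlety enters, is that $D_p^{(t-1)}$ genuinely depends on the \emph{whole} prior transcript $p$, not just on processor $i$'s own broadcasts, because $f_i^{|q}$ depends on the transcript prefix $q$ that processor $i$ has seen. Conditioning on $\tilde p$ before looking at $b$ handles this cleanly, since once every bit broadcast by the other processors is frozen, processor $i$'s bit sequence is determined by $X_i$ alone. After that observation the remaining work is pure counting: at most $2^{j-1}$ bad values of $b$, each carrying mass at most $M/2^n$. I do not expect any real obstacle beyond this bookkeeping, which is exactly parallel to the proof of Claim~\ref{claim:large-D-p-clique} referenced in Section~\ref{sec:clique-lower-bound}.
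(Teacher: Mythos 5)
Your counting strategy is the right one, but the step where you identify the conditional law of $b$ given $\tilde p$ is wrong, and the proof collapses there. You condition on the sub-transcript $\tilde p$ of bits broadcast by processors other than $i$ and assert that, given $\tilde p$, the bit pattern $b$ of processor $i$ has probability exactly $|D_b^{\tilde p}|/2^n$. It is true that $b$ is a deterministic function of $(X_i,\tilde p)$, but that does not yield this formula, because $\tilde p$ is itself correlated with $X_i$: the other processors' later broadcasts depend on processor $i$'s earlier bits. By the rectangle property, $\Pr[p=(\tilde p,b)] = \frac{|D^{(t-1)}_{(\tilde p,b)}|}{2^n}\cdot\Pr\left[X_{-i}\in E_{(\tilde p,b)}\right]$, where $X_{-i}$ denotes the other processors' inputs and $E_{(\tilde p,b)}$ is the set of those inputs whose broadcasts reproduce $\tilde p$ when processor $i$'s bits are $b$; the second factor varies with $b$, so $\Pr[b\mid\tilde p]$ is proportional to $|D_b^{\tilde p}|\cdot\Pr\left[X_{-i}\in E_{(\tilde p,b)}\right]$, not to $|D_b^{\tilde p}|$ alone. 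A two-processor example shows the per-$\tilde p$ bound you rely on is genuinely false: let processor $1$ first broadcast a constant, processor $2$ broadcast the AND of all its input bits, and processor $1$ then echo that bit. Conditioned on the sub-transcript $\tilde p$ whose echo equals $1$, we have $b=1$ with probability one although $|D_1^{\tilde p}|=1$, so $\Pr\left[|D_b^{\tilde p}|\le M \mid \tilde p\right]\le 2^{j-1}M/2^n$ fails for that $\tilde p$ (which occurs with positive probability); only a correctly weighted average over $\tilde p$ rescues the statement, and your argument does not supply those weights.

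The repair is to condition on the other processors' \emph{inputs} $x^{-i}$ rather than on their broadcast bits, which is what the paper does in the proof of Claim~\ref{claim:large-D-p-clique}: given $x^{-i}$, the input $X_i$ remains uniform by independence, the entire transcript is a deterministic function of $X_i$ taking at most $2^{j-1}$ values, and each attainable transcript $p$ has conditional probability exactly $|D^{(t-1)}_p|/2^n$; your union/Markov computation then goes through verbatim, and one averages over $x^{-i}$ at the end. Alternatively, your bookkeeping can be fixed without conditioning at all: write $\Pr\left[|D^{(t-1)}_p|\le M\right]=\sum_{p:|D_p|\le M}\frac{|D_p|}{2^n}\Pr\left[X_{-i}\in E_p\right]\le \frac{M}{2^n}\sum_{b}\sum_{\tilde p}\Pr\left[X_{-i}\in E_{(\tilde p,b)}\right]$, and observe that for each fixed $b$ the events $\{X_{-i}\in E_{(\tilde p,b)}\}$ over distinct $\tilde p$ are disjoint (a fixed $x^{-i}$ together with $b$ determines $\tilde p$), so the double sum is at most $2^{j-1}$. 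Either way, the enumeration must be over the $2^{j-1}$ possible bit patterns of processor $i$ with the randomness of the other inputs handled by independence, not via the claimed conditional law of $b$ given $\tilde p$.
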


Therefore, in order to bound the second term of the right hand side of \eqref{eq:bound-next-step-abs}, we can assume $|D^{(t-1)}_p| \ge 2^{n - \Theta(\beta)}$, where $\beta$ is roughly the round lower bound we wish to prove.

\subsubsection*{Statistical Inequality Task}

Now we are finally able to specify the statistical inequality task we need to prove. We want to show that for almost all $p \sim \Prand^{(t-1)}$, their contribution to the second term of the right side of \eqref{eq:bound-next-step-abs},
\[
\Ex_{I \leftarrow \IndexSet} \left[ \left\|f_i^{|p}(\distr_{D^{(t-1)}_p}) - f_i^{|p}(\distrA_I^{[i]} | D^{(t-1)}_p)  \right\| \right],
\]  
is small. We obviously have no control over the set $D^{(t-1)}_p$ except for that it is large, so we want the following type of statistically inequality.

\bigskip
\noindent {\bf Required Lemma Format. }{\em Let $D \subseteq \{0,1\}^{n}$ with $|D| \ge 2^{n-\beta}$, $\distr_{D}$ be the uniform distribution on $D$. For all function $f : D \to \{0,1\}$ and $i \in [n]$, we have
\[
	\Ex_{I \leftarrow \IndexSet} [ \| f(\distr_D) - f(\distrA_{I}^{[i]} | D) \| ] \le \epsilon(n,\beta).
\]
}

In above $\epsilon(n,\beta)$ is some error function which is increasing in $\beta$.

It will be helpful to observe that Lemma~\ref{lm:close-subset-full-one-round}, Lemma~\ref{lm:close-subset-full}, Lemma~\ref{lm:fourier}, Lemma~\ref{lm:fourier2}, and Lemma~\ref{lm:fourier4} are all instantiations of the above required lemma (for proving the one-round lower bound we can simply assume $D = \{0,1\}^n$).

Once we have the required lemma, then by a simple induction, we have
\[
\progressFunc^{(j \cdot n)} \le \sum_{\ell=1}^{j} \epsilon(n,\Theta(\ell))) \le (j \cdot n) \cdot \epsilon(n,\Theta(j)).
\]

From which we can deduce the needed lower bound, if $(j \cdot n) \cdot \epsilon(n,\Theta(j)) \ll 1$.
\section{Lower Bound for Planted Clique}
\label{sec:clique-lower-bound}

In this section we prove that the planted clique problem is hard for $\BCAST(1)$ when $k = n^{1/4 - \epsilon}$. We encourage the reader to read Section \ref{subsec:PlantedClique-Toy} before this section. That subsection contains a one-round lower bound for the problem, which involves a similar yet much less technical proof.

\paragraph*{Notations.} 
We first recall some notations. Let $\distrA^n_{\sf rand}$ be the distribution on $\{0,1\}^{n \times n}$ such that for a sample $A$ from $\distrA^n_{\sf rand}$, for all $i \ne j$, $A_{i,j}$ is an independent uniform random bit in $\{0,1\}$, and $A_{i,i}$ is always $0$ for all $i$. Let $C$ be a subset of $[n]$. We use $\distrA^{n}_{C}$ to denote the conditional distribution of $\distrA^n_{\sf rand}$ on the event that for all $i,j \in C$ and $i \ne j$, $A_{i,j} = 1$ (that is, $C$ is a clique). We also use $\distrA^{n}_{k}$ to denote the mixed distribution of $\distrA^n_{C}$'s when $C$ is a uniformly chosen random subset of $[n]$ of size $k$. 

For a distribution $\distrA$, we use $\distrA^{[i]}$ to denote it's marginal distribution on the $i$-th row. Note that $\distrA^n_{\sf rand}$ and $\distrA^n_C$ have independent rows\footnote{Fixing a clique $C$, all entries of the distribution $\distrA^n_C$ are independent: each edge outsize of $C$ is an independent coin flip with probability 1/2, and each edge in $C$ is an independent coin flip with probability 1. In particular, note that every two edges in the clique are independent, since they are both 1 with probability 1, and therefore the mutual information between the two entries is 0.}.

When the meaning is clear, we often drop the superscripts of the above distributions for simplicity.

Given a $\BCAST(1)$ protocol $\Pi$ and an input distribution $\distrD$, we use $\distrP(\Pi,\distrD)$ to denote the distribution of the transcripts of the protocol $\Pi$ running on an input drawn from $\distrD$ (that is, given a matrix $A$ which is drawn from the distribution $\distrD$, the processor $i$ gets the $i$-th row of $A$, and all processors act according to the protocol $\Pi$).

In this section we prove the following theorem:
\begin{theo}\label{theo:clique-lower-bound}
	Let $n$ be the number of processors. For any $j$-round \BCAST(1) protocol $\Pi$, we have
	\[
	\| \distrP(\Pi,\distrA_{\sf rand}) - \distrP(\Pi,\distrA_k) \| \le O\left(j \cdot k^2 \cdot \sqrt{\frac{j + \log n}{n}}\right).
	\]
\end{theo}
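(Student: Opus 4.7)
I plan to follow the toy one-round argument of Section~\ref{subsec:PlantedClique-Toy} essentially line by line, accounting for the fact that each processor now broadcasts $j$ bits instead of one. As in the toy proof, I first relax the model to the sequential turn model, in which turn $t$ is the $((t-1) \bmod n)+1$-th processor in round $\lceil t/n \rceil$, for a total of $T = jn$ turns; this can only strengthen the algorithm. I decompose $\distrA_k$ as the average of the row-independent distributions $\{\distrA_C\}_{C \in \distrS_k^{[n]}}$, and track the progress function
\[
\progressFunc^{(t)} \;:=\; \Ex_{C \sim \distrS_k^{[n]}} \bigl[\, \| \Prand^{(t)} - \distrP_C^{(t)} \|\, \bigr],
\]
where $\distrP_C^{(t)}$ is the transcript distribution of the first $t$ turns under input $\distrA_C$. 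A triangle inequality gives $\| \distrP(\Pi,\distrA_{\sf rand}) - \distrP(\Pi,\distrA_k) \| \le \progressFunc^{(T)}$, so it suffices to inductively establish $\progressFunc^{(T)} \le O\bigl(jk^2 \sqrt{(j+\log n)/n}\bigr)$.

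\textbf{Per-turn induction.} Fix turn $t$, let $i$ be the broadcaster, and let $D_p \subseteq \{0,1\}^n$ be the set of row-$i$ inputs consistent with the length-$(t-1)$ transcript $p$. Applying Lemma~\ref{lm:dist} with $X$ the length-$(t-1)$ transcript and $Y$ the bit sent at turn $t$, then averaging over $C$, yields
\[
\progressFunc^{(t)} \;\le\; \progressFunc^{(t-1)} + \Ex_{p \sim \Prand^{(t-1)}} \Ex_{C \sim \distrS_k^{[n]}} \Bigl[\, \bigl\| f_i^{|p}(\distr_{D_p}) - f_i^{|p}(\distrA_C^{[i]} \mid D_p) \bigr\|\, \Bigr].
\]
The inner term vanishes when $i \notin C$ since then $\distrA_C^{[i]} = \distrA_{\sf rand}^{[i]}$; so only the $k/n$ fraction of cliques with $i \in C$ contribute. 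The remaining difficulty relative to the toy setting is that $f_i^{|p}$ is now a \emph{partial} function on $D_p$, because the row-$i$ input must be consistent with the bits processor $i$ has already broadcast. Since processor $i$ has been active at most $j$ times before turn $t$, $D_p$ is a cell of a partition of $\{0,1\}^n$ into at most $2^j$ parts, giving $\Ex_p[\,2^n/|D_p|\,] \le 2^j$. Markov then guarantees that, except on a set of transcripts of probability at most $1/n^{10}$, we have $|D_p| \ge 2^{n-\beta}$ for $\beta := j + O(\log n)$; a union bound over all $T = jn$ turns keeps the contribution of bad transcripts negligible.

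\textbf{Partial-function statistical inequality and conclusion.} What is therefore needed is the partial analogue of Lemma~\ref{lm:close-subset-full-one-round}: for every $i$, every $D \subseteq \{0,1\}^n$ with $|D| \ge 2^{n-\beta}$, and every $f : D \to \{0,1\}$,
\[
\Ex_{C \sim \distrS_k^{[n]},\, i \in C} \bigl[\, \| f(\distr_D) - f(\distrA_C^{[i]} \mid D)\|\, \bigr] \;\le\; O\!\left( k \sqrt{\tfrac{\beta}{n}}\right).
\]
The plan is to first prove the one-coordinate version (a partial analogue of Lemma~\ref{lm:close-subset-one-one-round}) via Fourier analysis on the indicator-weighted function $g := f \cdot \mathbf{1}_D$, using Parseval's identity $\sum_S \widehat{g}(S)^2 = \Ex[g^2] \le |D|/2^n$ and the fact that the denominator $|D|/2^n \ge 2^{-\beta}$ converts this into a per-coordinate deviation of $O(\sqrt{\beta/n})$ rather than the $O(1/\sqrt{n})$ available for total functions; then telescope over the $k-1$ coordinates to be pinned, exactly as in the proof of Lemma~\ref{lm:close-subset-full-one-round}, while at each hybrid maintaining the high-probability bound on the size of the further restricted subset. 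Feeding this in, each turn adds at most $(k/n) \cdot O(k \sqrt{\beta/n})$ to $\progressFunc$, and summing over $T = jn$ turns yields $\progressFunc^{(T)} = O\bigl(j k^2 \sqrt{(j+\log n)/n}\bigr)$, proving the theorem.

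\textbf{Main obstacle.} The technical heart of the argument is the partial-function inequality above: the clean information-theoretic proof of Lemma~\ref{lm:close-subset-one-one-round}, which uses $\sum_i I(X_i;f(X)) \le 1$, breaks down when one conditions on an arbitrary subset $D$ that may correlate with the coordinates $X_i$. The Fourier/Parseval route I have sketched should give the right dependence $\sqrt{\beta/n}$ rather than a disastrous $2^{\beta/2}/\sqrt{n}$, but making the bound survive the telescoping to $k-1$ fixed coordinates requires an additional high-probability argument (a second instance of the ``$|D_p|$ is large'' claim) to ensure that every intermediate restricted subset still satisfies the size hypothesis; carefully composing these two high-probability events with the outer induction over turns is where the extra $\sqrt{j+\log n}$ factor in the theorem statement emerges.
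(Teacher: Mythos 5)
Your overall skeleton --- the sequential-turn relaxation, the decomposition of $\distrA_k$ into the row-independent distributions $\distrA_C$, the progress function, the per-turn application of Lemma~\ref{lm:dist}, and the argument that $|D_p| \ge 2^{n-j-O(\log n)}$ for all but a negligible fraction of transcripts --- matches the paper's proof of Theorem~\ref{theo:clique-lower-bound}. The genuine gap is in your key statistical inequality, the partial-function analogue of Lemma~\ref{lm:close-subset-one-one-round}: the Fourier/Parseval route you sketch does not give the bound you claim. Setting $g := f\cdot \mathbf{1}_D$ and $\mu := |D|/2^n \ge 2^{-\beta}$, Parseval gives $\sum_i \widehat{g}(\{i\})^2 \le \mu$, hence $\Ex_i\bigl[\widehat{g}(\{i\})^2\bigr] \le \mu/n$; but the quantity you need, $\| f(\distr_D) - f(\distr_D^{[i]}) \|$, is of order $|\widehat{g}(\{i\})|/\mu$ (and the fluctuation of $|D \cap \{x_i=1\}|/|D|$ around $1/2$ suffers the same normalization), so all this yields is $\Ex_i \| f(\distr_D) - f(\distr_D^{[i]}) \| \le \sqrt{1/(n\mu)} \le \sqrt{2^{\beta}/n}$ --- exactly the ``disastrous'' $2^{\beta/2}/\sqrt{n}$ you hoped to avoid, which is vacuous already when $\beta = j + \Theta(\log n)$ with $j = \omega(\log n)$. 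Parseval has no mechanism for converting the density bound $\mu \ge 2^{-\beta}$ into a loss polynomial in $\beta$: the normalization by $\mu$ enters linearly, not logarithmically. (The Fourier argument is the right tool in the PRG sections only because there the squared advantages are summed over exponentially many secrets, and the $2^{-k}$ averaging absorbs the loss; here the averaging is only over the $n$ coordinates.)

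Moreover, the premise that pushed you onto that route is mistaken: the information-theoretic proof does survive conditioning on an arbitrary large $D$, and this is precisely what the paper does (Lemma~\ref{lm:close-subset-one}). If $X$ is uniform on $D$ with $|D| \ge 2^{n-\beta}$, then $H(X) \ge n-\beta$, sub-additivity gives $\sum_i H(X_i \mid f(X)) \ge n-\beta-1$, hence $\sum_i I(X_i; f(X)) \le \beta+1$; Pinsker and concavity give $\Ex_i \Ex_{x \sim X_i} \| f(X)_{X_i=x} - f(X) \| \le \sqrt{(\beta+1)/n}$, and a Markov argument on $\sum_i (1-H(X_i)) \le \beta$ shows $\Pr[X_i=1] = \Omega(1)$ for all but an $O(\beta/n)$ fraction of coordinates, which is what lets one pass to conditioning on $X_i=1$; the feared correlation between $D$ and the coordinates costs only this additive $O(\beta/n)$ term. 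Finally, the telescoping over the $k$ pinned coordinates is not a routine ``second instance'' of the transcript-size claim: one must show that along a random pinning path the restricted sets retain density $2^{-O(\beta)}$ with probability $1-O(\beta\ell/n)$ (the paper's Claim~\ref{claim:DS-whp-large}), whose proof --- tracking the entropy gap along the path, classifying pinning steps by how much density they destroy, and union-bounding the bad events --- is the most technical part of the section and does not follow from the counting argument over $2^{j}$ transcripts. Replacing your Fourier step by the entropy argument and supplying that pinning-path claim is what recovers the paper's proof.
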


As a simple corollary, we immediately have:
\begin{cor}[$\BCAST(1)$ Lower Bound for Planted Clique]\label{cor:mainclique}
	For any constant $\epsilon > 0$, if $k = n^{1/4 - \epsilon}$ then no $n^{o(1)}$ round $\BCAST(1)$ protocol $\Pi$ can distinguish between $\distrA_{\sf rand}$ and $\distrA_{k}$ with advantage $\Omega(1)$.
\end{cor}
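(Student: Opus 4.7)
The plan is to derive the corollary directly from Theorem~\ref{theo:clique-lower-bound} by substituting $k = n^{1/4 - \epsilon}$ and $j = n^{o(1)}$ into its bound, and then translating the resulting statistical-distance estimate for transcripts into a distinguishing-advantage estimate via the standard data-processing argument.

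Suppose toward contradiction that $\Pi$ is an $n^{o(1)}$-round $\BCAST(1)$ protocol that distinguishes $\distrA_{\sf rand}$ from $\distrA_k$ with constant advantage $\Omega(1)$, and write $j = n^{o(1)}$ for its round count. By Yao's principle we may take $\Pi$ deterministic. Its output is computed by some designated processor as a function of the transcript and that processor's private input; by appending one final round in which this processor simply broadcasts its answer bit, we obtain a $(j+1)$-round protocol $\Pi'$ with $j+1 = n^{o(1)}$, whose transcript contains the answer bit as a deterministic coordinate. The distinguishing advantage of $\Pi'$ equals that of $\Pi$ and is still $\Omega(1)$.

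Next I would apply Theorem~\ref{theo:clique-lower-bound} to $\Pi'$. We have $k^2 = n^{1/2 - 2\epsilon}$, the prefactor $j+1$ is $n^{o(1)}$, and $\sqrt{(j+1+\log n)/n} = n^{-1/2 + o(1)}$ because $j$ and $\log n$ are both $n^{o(1)}$. Multiplying these gives
\[
\|\distrP(\Pi',\distrA_{\sf rand}) - \distrP(\Pi',\distrA_k)\| \le n^{o(1)} \cdot n^{1/2-2\epsilon} \cdot n^{-1/2+o(1)} = n^{-2\epsilon + o(1)} = o(1),
\]
since $\epsilon > 0$ is a fixed constant. Because the answer bit is a deterministic projection of the transcript of $\Pi'$, the data processing inequality for total variation implies that the statistical distance of the answer-bit distributions under $\distrA_{\sf rand}$ and $\distrA_k$ is also $o(1)$, and the distinguishing advantage is bounded by half this quantity. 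This contradicts the assumed $\Omega(1)$ advantage for all large enough $n$. The whole derivation is essentially bookkeeping of exponents; the only substantive point to handle carefully is the passage from ``output is a Boolean decision based on transcript plus own input'' to ``output is a deterministic function of the transcript,'' and the one-extra-round trick resolves this cleanly without affecting the $n^{o(1)}$ round bound.
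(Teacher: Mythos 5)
Your proposal is correct and matches the paper's intended argument: the paper treats the corollary as immediate from Theorem~\ref{theo:clique-lower-bound} by plugging in $k = n^{1/4-\epsilon}$ and $j = n^{o(1)}$ to get transcript distance $n^{-2\epsilon+o(1)} = o(1)$, and then bounding the distinguishing advantage by the statistical distance of transcripts. Your extra step of appending one round so the answer bit becomes a deterministic projection of the transcript is a careful handling of a detail the paper leaves implicit, and it changes nothing asymptotically.
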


Let $\distrS_{k}^{T}$ be the uniform distribution on all size-$k$ subsets of $T$. To prove Theorem~\ref{theo:clique-lower-bound}, we need the following technical lemma, whose proof is deferred to the end of this section.

\begin{lemma}\label{lm:close-subset-full}
	Let $n, t, k$ be integers such that $t,k \le n^{1/4}$ and $t \ge 10\log n$, $D$ be a subset of $\{0,1\}^n$ with $|D| \ge 2^{n-t}$, $\distr_{D}$ be the uniform distribution on $D$, and $\distr_{D}^{C}$ be the uniform distribution on $\{ x : x \in D, x_i = 1 \text{ for all $i \in C$} \}$. For all functions $f : D \to \{0,1\}$, we have
	\[
	\Ex_{C \sim \distrS_{k}^{[n]}} [ \| f(\distr_D) - f(\distr_D^{C}) \| ] \le O\left(k \cdot \sqrt{\frac{t}{n}}\right).
	\]
    (If $\distr_D^{C}$ is empty, we define $\| f(\distr_D) - f(\distr_D^{C}) \| = 1$).
\end{lemma}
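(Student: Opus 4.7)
The plan is to mirror the two-step structure of Lemma~\ref{lm:close-subset-full-one-round}: first establish a partial-function version of the single-coordinate lemma (Lemma~\ref{lm:close-subset-one-one-round}), and then iterate it $k$ times via a telescoping sum. The main novelty is that $f$ is now defined on an arbitrary subset $D \subseteq \{0,1\}^n$ of size at least $2^{n-t}$ rather than on the full hypercube, which forces us to keep track of how the domain shrinks when we condition on coordinates being $1$.

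For the single-coordinate step, I will prove that whenever $|D| \ge 2^{n-t}$ and $f : D \to \{0,1\}$,
\[
\Ex_{i \leftarrow [n]} \|f(\distr_D) - f(\distr_D^{[i]})\| \le O\!\left(\sqrt{t/n}\right),
\]
with the convention that the summand is $1$ when $\distr_D^{[i]}$ is empty. Let $X$ be uniform on $D$; then $H(X) \ge n-t$, and subadditivity of conditional entropy together with $H(f(X)) \le 1$ give $\sum_i I(X_i; f(X)) \le \sum_i H(X_i) - H(X|f(X)) \le n - (n-t-1) = t+1$, hence $\Ex_i I(X_i; f(X)) \le (t+1)/n$. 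Combining Fact~\ref{ft:KL-mutual}, Pinsker's inequality, and concavity of $\sqrt{\cdot}$ yields $\Ex_i \Ex_{x \sim X_i}\|(f(X)|X_i=x) - f(X)\| \le O(\sqrt{t/n})$. To pass from this weighted version to the unweighted statement, note that $\sum_i(1-H(X_i)) \le t$; by Fact~\ref{ft:H-p-property}, only $O(t)$ indices can have $\Pr[X_i=1] < 1/4$, and for the remaining indices the weighted and unweighted quantities agree up to a factor of $4$. The $O(t)$ ``bad'' indices (including those for which $\distr_D^{[i]}$ is empty) contribute $O(t/n) = O(\sqrt{t/n})$ trivially.

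For the telescoping step, let $(a_1, \ldots, a_k)$ be a uniformly random ordered $k$-tuple of distinct elements of $[n]$ and $C_\ell = \{a_1, \ldots, a_\ell\}$. Using the convention that $\|f(\distr_{D^{C_{\ell-1}}}) - f(\distr_{D^{C_\ell}})\| = 1$ whenever $D^{C_{\ell-1}} \ne \emptyset$ but $D^{C_\ell} = \emptyset$ (and $0$ if both are empty), the triangle inequality gives $\|f(\distr_D) - f(\distr_D^C)\| \le \sum_{\ell=1}^k \|f(\distr_{D^{C_{\ell-1}}}) - f(\distr_{D^{C_\ell}})\|$. Conditioned on $D^{C_{\ell-1}}$ being nonempty, I view the restriction of $f$ to $D^{C_{\ell-1}}$ as a partial function on $\{0,1\}^{n-\ell+1}$ with effective entropy-defect parameter $t_\ell := (n-\ell+1) - \log|D^{C_{\ell-1}}|$ and apply the single-coordinate bound from Step 1. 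Taking expectation over $a_\ell$ and then over $C_{\ell-1}$, and using Jensen's inequality to push the square root outside, converts a bound on $\Ex_{C_{\ell-1}} t_\ell$ into one on the $\ell$-th summand.

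The main obstacle will be Step 3: controlling $\Ex t_\ell$ and the chance that the chain hits the empty set. Using $H(X) = H(X_{C_{\ell-1}}) + H(X_{[n]\setminus C_{\ell-1}} \mid X_{C_{\ell-1}})$ with $H(X) \ge n-t$ and $H(X_{C_{\ell-1}}) \le \ell - 1$, a careful averaging over random $C_{\ell-1}$ and $X$ should yield $\Ex_{C_{\ell-1}}[t_\ell \cdot \mathbf{1}[D^{C_{\ell-1}} \ne \emptyset]] = O(t)$. In parallel, a second-moment / convexity argument on $\Ex_C |D^C|$ (using $\sum_i (p_i-1/2)^2 \le O(t)$ coming from $\sum_i H(p_i) \ge n - t$) should give $\Pr_C[D^C = \emptyset] \le O(kt/n)$, which in the regime $k,t \le n^{1/4}$ is $O(k\sqrt{t/n})$ and is therefore absorbed. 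Summing the $k$ per-step contributions of $O(\sqrt{t/n})$ and the single possible failure contribution of $O(k\sqrt{t/n})$ yields the desired $O(k\sqrt{t/n})$ bound. The trickiest piece will be making the entropy-based control of $t_\ell$ rigorous against adversarial $D$, since $D^{C_{\ell-1}}$ can fluctuate considerably in size depending on where the support of $D$ concentrates.
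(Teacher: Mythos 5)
Your Steps 1 and 2 are essentially the paper's: the partial-function single-coordinate lemma is exactly Lemma~\ref{lm:close-subset-one} (proved the same way, via subadditivity of entropy, Fact~\ref{ft:KL-mutual}, Pinsker, and discarding the $O(t)$ biased coordinates), and the telescoping over a random ordered $k$-tuple is the opening of the paper's proof. The gap is in Step 3. Your key claim, $\Ex_{C_{\ell-1}}\bigl[t_\ell \cdot \mathbf{1}[D^{C_{\ell-1}}\neq\emptyset]\bigr]=O(t)$, is false for adversarial $D$. Take $D=\{x: x_1=\dots=x_t=0\}\cup\{(1,1,\dots,1)\}$, so $|D|\ge 2^{n-t}$. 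With probability roughly $(\ell-1)t/n$ the random prefix $C_{\ell-1}$ hits one of the first $t$ coordinates, and then $D^{C_{\ell-1}}$ collapses to the single all-ones point, giving $t_\ell\approx n$; hence $\Ex[t_\ell\cdot\mathbf{1}[\cdot\neq\emptyset]]=\Omega(\ell t)$, which for $\ell,t\approx n^{1/4}$ is $\Omega(\sqrt{n})\gg t$. Even granting the correct order $\Ex[t_\ell]=O(\ell t)$, your Jensen step only yields $\sum_{\ell\le k}\sqrt{\ell t/n}=\Theta\!\left(k^{3/2}\sqrt{t/n}\right)$, which misses the stated bound by a $\sqrt{k}$ factor. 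So the first-moment-plus-Jensen shortcut cannot give $O(k\sqrt{t/n})$.

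What is actually needed (and what the paper proves as Claim~\ref{claim:DS-whp-large}) is a \emph{tail} bound rather than an expectation bound: with probability $1-O(t\ell/n)$ over the random $\ell$-prefix, $|D^{C_\ell}|\ge 2^{(n-\ell)-3t}$, i.e.\ the entropy defect stays $O(t)$; on the complementary event one uses the trivial bound $1$, and these bad events contribute $\sum_{\ell<k}O(t\ell/n)=O(k^2t/n)\le O(k\sqrt{t/n})$ precisely because $k,t\le n^{1/4}$. In the counterexample above this is exactly the right accounting: the collapse is rare, and when it happens the per-step distance is still at most $1$. Proving that tail bound is the technical heart of the lemma and does not follow from a single Markov argument on a static quantity attached to $D$: the defect increments are unbounded, and after conditioning the relevant marginals are those of $D^{C_{\ell-1}}$, which change at every step. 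The paper handles this with the subset-tree induction (good/bad nodes), where at each good node the marginal entropies of the current set control both the probability of a bad edge (Fact~\ref{ft:whp-good-edge}) and, via a dyadic decomposition of the defect increments into ``labels,'' the number of coordinates causing an increment of each size (Fact~\ref{ft:bound-on-label-k}); a union bound over the ways to accumulate total defect $>2t$ then gives the $1-O(t\ell/n)$ guarantee. Your proposal contains no analogue of this multi-scale tail analysis, and with the expectation-based plan as written the proof does not go through.
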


Intuitively speaking, the $D$ in the lemma above corresponds to the set of inputs to a certain node which are consistent with the current transcript. Each time the node broadcasts a bit, the size of $D$ is expected to reduce by at most a constant factor, so after $r$ rounds one would expect $D$ to be larger than $2^{n-\Theta(r)}$.

Now we are ready to prove Theorem~\ref{theo:clique-lower-bound}.

\begin{proofof}{Theorem~\ref{theo:clique-lower-bound}}
	
	Instead of viewing the algorithm as a standard $j$ round algorithm, we will prove a slightly stronger lower bound. Consider the model where during each round we have $n$ turns. On the $t^{th}$ turn, processor $(t-1) \bmod{n} + 1$ gets to send a single bit. So, essentially, instead of all processors broadcasting their bit at the same time, they take turns. This model is stronger than one round of the \BCAST(1) model, since it allows the later processors to condition their outputs on earlier the processors' messages. Hence, our lower bound implies a lower bound for the \BCAST(1) model as well.
	
	Let $\Prand^{(t)}$ and $\distrP_{C}^{(t)}$ be the distributions of the transcript of the first $t$ turns when the input is drawn from $\distrA_{\sf rand}$ or $\distrA_C$, respectively. Note that to prove the theorem, it suffices to show that the distribution $\Prand^{(j \cdot n)}$ is close to $\distrP_{C}^{(j \cdot n)}$ for most choices of $C$. For this purpose, we are going to prove the following inequality holds for any $t \le j \cdot n$:
	
	\begin{equation}\label{eq:ex-distance-small}
	\Ex_{C \sim \distrS_{k}^{[n]}} \left[ \left\| \Prand^{(t)} - \distrP_{C}^{(t)} \right\| \right] \le t \cdot \left( 1/n^2 +  c_1 \cdot \frac{k^2}{n} \cdot \sqrt{\frac{j+\log n}{n}}\right),
	\end{equation}
	
	where $c_1$ is a large enough universal constant. It is easy to see that plugging in $t = j \cdot n$, \eqref{eq:ex-distance-small} implies the theorem.		
	
	To prove \eqref{eq:ex-distance-small}, we induct on $t$. Clearly, \eqref{eq:ex-distance-small} holds when $t = 0$. So it suffices to show that when it holds for $t-1$, it also holds for $t$. Let $i$ be the processor who is broadcasting at the $t$-th turn.
	
	For a fixed $C \subseteq [n]$, by Lemma~\ref{lm:dist}, we have:
	\begin{equation}\label{eq:derive-one-step}
	\left\| \Prand^{(t)} - \distrP_{C}^{(t)} \right\| \le \left\| \Prand^{(t-1)} - \distrP_{C}^{(t-1)} \right\| + \Ex_{p \sim \Prand^{(t-1)}} \left[ \left\|f_i^{|p}(\distrD_i | p) - f_i^{|p}(\distrD_i^C | p)  \right\| \right].
	\end{equation}
	
	In above, $\distrD_i | p$ and $\distrD_i^{C} | p$ are the input distributions to player $i$ conditioning on seeing the transcript $p$ of the previous $t-1$ rounds. Let $D^{(t-1)}_p$ denote the set of inputs from $\{ x : x \in \{0,1\}^n, x_i = 0 \}$ to $f_i$ which are consistent with the transcript $p$. We can see $\distrD_i | p$ is the uniform distribution on $D^{(t-1)}_p$, while $\distrD_i^C | p$ is the same as $\distrD_i |p$ when $i \not\in C$, and is the uniform distribution on $ \{ x : x \in D^{(t-1)}_p, x_j = 1 \text{ for all $j \in C \setminus \{i\} $} \}$.
	
	Taking the expected value over all cliques of both sides of \eqref{eq:derive-one-step} gives
	\begin{equation}\label{eq:need-to-prove}
	\Ex_{C \sim \distrS_{k}^{[n]}} \left[ \left\| \Prand^{(t)} - \distrP_{C}^{(t)} \right\| \right] \le \Ex_{C \sim \distrS_{k}^{[n]}} \left[ \left\| \Prand^{(t-1)} - \distrP_{C}^{(t-1)} \right\| \right] + \Ex_{p \sim \Prand^{(t-1)}} \Ex_{C \sim \distrS_{k}^{[n]}} \left[ \left\|f_i^{|p}(\distrD_i | p) - f_i^{|p}(\distrD_i^C | p)  \right\| \right].
	\end{equation}
	
	So, to prove \eqref{eq:ex-distance-small}, since we can bound $\Ex_{C \sim \distrS_{k}^{[n]}} \left[ \left\| \Prand^{(t-1)} - \distrP_{C}^{(t-1)} \right\| \right]$ by the inductive hypothesis,
 it suffices to bound $ \Ex_{p \sim \Prand^{(t-1)}} \Ex_{C \sim \distrS_{k}^{[n]}} \left[ \left\|f_i^{|p}(\distrD_i | p) - f_i^{|p}(\distrD_i^C | p)  \right\| \right]$. We first show that for most $p \sim \Prand^{(t-1)}$, $D^{(t-1)}_p$ is large (that is, after $t-1$ turns, we expect the set of inputs consistent with the transcript to be large). The proof for the following claim is deferred to the end of the whole proof.
	
	\begin{claim}\label{claim:large-D-p-clique}
		For $t \le j \cdot n \le \frac{k\cdot n}{10}$, with probability $1 - 1 / n^2$ over $p \sim \Prand^{(t-1)}$, we have $|D^{(t-1)}_p| \ge  2^{n - j} / n^3$.
	\end{claim}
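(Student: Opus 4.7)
The plan is to exploit the round-robin structure of the relaxed model together with the row-independence of $\distrA_{\sf rand}$ to show that $D^{(t-1)}_p$ is a cell of a partition of processor $i$'s input space with few cells, and then finish with a simple union bound.

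First, I would condition on the inputs $y$ of all processors other than $i$. Under $\distrA_{\sf rand}$ the rows are independent, so $y$ is independent of processor $i$'s input $x$, which is uniform on $\{x \in \{0,1\}^n : x_i = 0\}$, a set of size $2^{n-1}$. Let $t_1 < \dots < t_s$ denote the turns in $\{1, \dots, t-1\}$ at which processor $i$ broadcasts; by the round-robin structure and $t \le j \cdot n$, we have $s \le \lceil t/n \rceil \le j$. Proceeding inductively on $\ell$, the prefix $p^{(t_\ell-1)}$ of the realized transcript is determined by $y$ together with the earlier broadcasts $b_1, \dots, b_{\ell-1}$ of processor $i$, since only $i$'s broadcasts depend on $x$. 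Hence, for fixed $y$, the sequence of $i$'s broadcasts $(b_1, \dots, b_s)$ is a deterministic function $g_y(x)$ of $x$ alone, and it partitions $\{x : x_i = 0\}$ into at most $2^s \le 2^j$ cells. Unrolling the definition of $D^{(t-1)}_p$ shows that $D^{(t-1)}_p = g_y^{-1}(g_y(x))$ is exactly the cell containing $x$.

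Second, I would apply a union bound. For any fixed $y$, the probability that uniform $x$ lands in a cell of $g_y$ of size at most $2^{n-j}/n^3$ is at most the number of such small cells times their maximum relative mass, namely $2^j \cdot (2^{n-j}/n^3)/2^{n-1} = 2/n^3 \le 1/n^2$ for $n \ge 2$. Because this bound is uniform in $y$, it persists after averaging over $y$, which is precisely sampling $p \sim \Prand^{(t-1)}$. This yields the stated bound.

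The main obstacle is the first step: making precise that $D^{(t-1)}_p$ is sensitive only to the bit sequence $(b_1, \dots, b_s)$ that $i$ itself broadcasts, despite the fact that broadcasts by other processors after turn $t_1$ may depend on the transcript and therefore indirectly on $x$ via $i$'s earlier broadcasts. Once the inductive identification of $D^{(t-1)}_p$ with a cell of $g_y$ is set up cleanly, the counting argument is immediate and tight.
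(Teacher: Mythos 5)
Your proposal is correct and follows essentially the same route as the paper's proof: fix the other processors' inputs, observe that for fixed $x^{-i}$ the transcript is determined by the at most $2^{j}$ possible bit-sequences broadcast by processor $i$ (so that $D^{(t-1)}_p$ has relative mass exactly the probability of the transcript $p$), and then union-bound over the few possible small cells before averaging over $x^{-i}$. The cell-partition phrasing is just a repackaging of the paper's identity $P_{x^{-i}}^{(t-1)}(p) = |D_p^{(t-1)}|/2^{n-1}$, so no further comparison is needed.
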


	Now, given a $p$ with $|D^{(t-1)}_p| \ge  2^{n - j} / n^3 = 2^{n-j - 3 \log n}$, we want to bound $\Ex_{C \sim \distrS_{k}^{[n]}} \left[ \left\|f_i^{|p}(\distrD_i | p) - f_i^{|p}(\distrD_i^C | p)  \right\| \right]$. There are two cases:
	
	\begin{itemize}
		\item When $i \notin C$, which happens with probability $1 - \frac{k}{n}$, we have
		\[
		\left\|f_i^{|p}(\distrD_i | p) - f_i^{|p}(\distrD_i^C | p)  \right\| = 0,
		\]
		as $\distrD_i^C | p = \distrD_i | p$.
		
		\item When $i \in C$, which happens with probability $\frac{k}{n}$, by Lemma~\ref{lm:close-subset-full}, we have
		\[
		\Ex_{C' \sim \distrS_{k-1}^{[n] \setminus \{i\} } } \left[ \left\| f_i^{|p}(\distrD_i | p) - f_i^{|p}(\distrD_i^{C' \cup \{i\}} | p)  \right\| \right] \le O\left(k \cdot \sqrt{\frac{j +\log n}{n}}\right).
		\]
	\end{itemize}
	
	Putting them together, we have
	\[
	\Ex_{p \sim \Prand^{(t-1)}} \Ex_{C \sim \distrS_{k}} \left[ \left\|f_i^{|p}(\distrD_i | p) - f_i^{|p}(\distrD_i^C | p)  \right\| \right]
\le 1/n^2 + \frac{k}{n} \cdot O\left(k \cdot \sqrt{\frac{j+\log n}{n}}\right),
	\]
	which proves \eqref{eq:ex-distance-small} for $t$.
\end{proofof}

Finally, we prove Claim~\ref{claim:large-D-p-clique}.
\newcommand{\FCon}{F}

\begin{proofof}{Claim~\ref{claim:large-D-p-clique}}

	Let $t_1,t_2,\dotsc,t_{\ell}$ be the indices of all previous $\ell$ turns with processor $i$ broadcasting, before the current $t$-th turn. We have $\ell \le j$.
	Let $x \in \{ z : z \in \{0,1\}^n, z_i = 0 \}$, note that $x$ is consistent with transcript $p$, if for all $a \in [\ell]$, we have
	\[
	f_i^{|p^{(t_a - 1)}}(x) = p_{t_a},
	\]
	where $p^{(t_a - 1)}$ denotes the first $t_a - 1$ bits of $p$. We set $\FCon_i(x,p) = 1$ if $x$ and $p$ are consistent, and $0$ otherwise.
	
	Consider the random process of generating $p \sim \Prand^{(t-1)}$, suppose inputs to all processors other than $i$ are fixed, let $x^{-i} = (x_1,x_2,\dotsc,x_{i-1},x_{i+1},\dotsc,x_{n}) \in \{0,1\}^{(n-1) \times n}$ be those fixed input. Let $P_{x^{-i}}^{(t)}$ be the distribution of the transcript when $x_i \sim \distrA_{\sf rand}^{[i]}$, and all other processors get (fixed) input according to $x^{-i}$. 
	
	For a fixed $x^{-i}$, note that there are only $2^\ell$ possible transcripts $p$ from $P_{x^{-i}}^{(t-1)}$, as the transcript is determined after fixing the output of processor $i$ at all $\ell$ rounds.	Therefore, let $T(x^{-i},x_i)$ be the transcript when all processors get inputs according to $x^{-i}$ and $x_i$, we can see when $p \sim P_{x^{-i}}^{(t-1)}$, $\FCon_i(x,p) = 1$ if and only if $T(x^{-i},x) = p$. That is,
	\[
	P_{x^{-i}}^{(t-1)}(p) = \Pr_{x_i \sim \distrA_{\sf rand}^{[i]}}[T(x^{-i},x_i) = p] = D_p^{(t-1)} / 2^{n - 1}.
	\]
	
	In above $P_{x^{-i}}^{(t-1)}(p)$ is the probability that getting $p$ from distribution $P_{x^{-i}}^{(t-1)}$. Then we have
	\begin{align*}
	&\Pr_{p \sim P_{x^{-i}}^{(t-1)}} \left[ D_p^{(t-1)} < 2^{-j-3\log n} \cdot 2^{n}  \right] \\
	=&\Pr_{p \sim P_{x^{-i}}^{(t-1)}} \left[ P_{x^{-i}}^{(t-1)}(p) < 2^{-j - 3\log n + 1}  \right] \\
	\le&2^{-j - 3\log n + 1} \cdot 2^{\ell} = 1/n^2.
	\end{align*}
	The last inequality holds because the support size of $P_{x^{-i}}^{(t-1)}$ is at most $2^{\ell}$ and $\ell \le j$.
	
	Hence, we have
	\begin{align*}
	\Pr_{p \sim \Prand^{(t-1)}} \left[D_p^{(t-1)} < 2^{-j-3\log n} \cdot 2^{n} \right] &= \Ex_{ x^{-i} \sim \Arand^{[-i]} } \left[ \Pr_{p \sim P_{x^{-i}}^{(t-1)}} \left[D_p^{(t-1)} < 2^{-j-3\log n} \cdot 2^{n} \right] \right]\\
	&\le 1 / n^2.
	\end{align*}
	In above $\Arand^{[-i]}$ denotes the marginal distribution of $\Arand$ on all rows except the $i$-th row.
\end{proofof}

\subsection{Proof for Lemma~\ref{lm:close-subset-full}}

We need the following lemma first, which is proved using tools from information theory.

\begin{lemma}\label{lm:close-subset-one}
	Let $n, t, k$ be integers such that $t,k \le n/10$, $D$ be a subset of $\{0,1\}^n$ with $|D| \ge 2^{n-t}$, $\distr_{D}$ be the uniform distribution on $D$, and $\distr_{D}^{[i]}$ be the uniform distribution on $\{ x : x \in D \text{ and } x_i = 1\}$, for all function $f : D \to \{0,1\}$, we have
	\[
	\Ex_{ i \leftarrow [n] } \left[ \left\| f(\distr_D) - f(\distr_D^{[i]}) \right\| \right] \le O\left(\sqrt{\frac{t}{n}}\right).
	\]
\end{lemma}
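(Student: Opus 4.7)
\textbf{Proof plan for Lemma~\ref{lm:close-subset-one}.} The plan is to imitate the information-theoretic proof of Lemma~\ref{lm:close-subset-one-one-round}, but to carefully track two changes caused by restricting from $\{0,1\}^n$ to a subset $D$: the marginal entropies $H(X_i)$ need no longer equal $1$, and $\Pr[X_i=1]$ is no longer $1/2$, which matters when we extract the slice $X_i=1$ from the symmetric expectation over the two values of $X_i$.

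Let $X \sim \distr_D$ and $p_i := \Pr[X_i = 1]$. First I would establish the analogue of the mutual-information bound from the one-round proof. Since $H(X) = \log|D| \ge n - t$ and $H(f(X)) \le 1$, we have $H(X \mid f(X)) \ge n - t - 1$, and subadditivity of conditional entropy yields $\sum_i H(X_i \mid f(X)) \ge n - t - 1$. Combining with $\sum_i H(X_i) \le n$ gives $\sum_{i} I(X_i ; f(X)) \le t + 1$. Next, by Fact~\ref{ft:KL-mutual} and the nonnegativity of KL divergence, $p_i \cdot D\!\left(f(X)_{X_i = 1} \,\|\, f(X)\right) \le I(X_i; f(X))$, so Pinsker's inequality (Lemma~\ref{lm:pinsker}) gives
\[
\left\| f(\distr_D^{[i]}) - f(\distr_D) \right\| \;\le\; \sqrt{\frac{I(X_i; f(X))}{2 p_i}}.
\]

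The main obstacle is that $p_i$ may be very small (even zero), which would destroy the bound above after dividing. To handle this, I would separate the coordinates by how biased the marginal is. Define $B := \{i : H(p_i) < 0.9\}$. From $\sum_i H(p_i) = \sum_i H(X_i) \ge H(X) \ge n - t$ we get $\sum_i (1 - H(p_i)) \le t$, hence $|B| \le 10 t$. For $i \in B$ the statistical distance is trivially bounded by $1$, contributing at most $|B|/n \le 10t/n$ to the average. For $i \notin B$, Fact~\ref{ft:H-p-property} gives $p_i \in [0.3, 0.7]$, so $1/(2p_i) \le 2$ and hence $\|f(\distr_D^{[i]}) - f(\distr_D)\| \le \sqrt{2\, I(X_i; f(X))}$. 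Combining the two cases with Jensen's inequality (concavity of $\sqrt{\cdot}$),
\[
\Ex_{i \leftarrow [n]} \left\| f(\distr_D) - f(\distr_D^{[i]}) \right\| \;\le\; \frac{|B|}{n} + \sqrt{\frac{2}{n} \sum_{i \notin B} I(X_i; f(X))} \;\le\; \frac{10 t}{n} + \sqrt{\frac{2(t+1)}{n}} \;=\; O\!\left(\sqrt{\tfrac{t}{n}}\right),
\]
using $t \le n/10$ in the last step. All the difficulty beyond the one-round proof of Lemma~\ref{lm:close-subset-one-one-round} lies in this bad-coordinate separation; without it, the division by $p_i$ could be arbitrarily large.
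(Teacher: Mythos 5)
Your proposal is correct and follows essentially the same route as the paper: bound $\sum_i I(X_i;f(X)) \le t+1$ via subadditivity of (conditional) entropy, convert to statistical distance by Pinsker, and handle coordinates with small $p_i = \Pr[X_i=1]$ by discarding the few indices with low marginal entropy (the paper uses the threshold $H(X_i)\ge 1/2$ and keeps $p_i$ inside the expectation, dividing by $0.1$ at the end, while you use Fact~\ref{ft:H-p-property} with threshold $0.9$ and fold $1/p_i$ into the Pinsker bound before applying Jensen — an immaterial difference). No gaps.
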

\begin{proof}
	Let $D^{[i]} := \{ x : x \in D \text{ and } x_i = 1\}$. Throughout the proof we will assume $X$ is a random variable drawn uniformly from $D$. For $i \in [n]$, let $X_i$ be the random variable of the $i$-th bit of $X$.
	
	We have $ \frac{|D^{[i]}|}{|D|} = \Pr[X_i = 1]$. By the sub-additivity of entropy, it follows that $\sum_{i=1}^{n} H(X_i) \ge H(X) \ge n - t$.
	
	That is, $\Ex_{i \leftarrow [n]} [1 - H(X_i)] = \frac{t}{n}$. By a simple Markov's inequality, with probability at least $1 - \frac{2t}{n}$ over $i \leftarrow [n]$, we have $H(X_i) \ge 1/2$. Note that $H(X_i) \ge 1/2$ implies $\Pr[X_i = 1] \ge 0.1$.
	
	Also, 
	\[
	I(X_i;f(X)) = H(X_i) - H(X_i | f(X)) \le 1 - H(X_i | f(X)).
	\]
	
	And by the sub-additivity of conditional entropy, we have
	\[
	\sum_{i=1}^{n} H(X_i | f(X)) \ge H(X | f(X)) \ge n-t-1.
	\]
	Therefore,
	\[
	\sum_{i=1}^{n} I(X_i;f(X)) \le n - (n-t-1) \le t + 1,
	\]
	or equivalently,
	\[
	\Ex_{i \leftarrow [n]} I(X_i;f(X)) \le \frac{t+1}{n}.
	\]
	Note that by Fact~\ref{ft:KL-mutual},
	\[
	I(X_i ; f(X)) := \Ex_{x \sim X_i} D( f(X)_{X_i = x} || f(X) ).
	\]
	Taking expected values over $i$ of both sides and using $\Ex_{i \leftarrow [n]} I(X_i;f(X)) \le \frac{t+1}{n}$ gives
	\[		
	\Ex_{i \leftarrow [n]} I(X_i;f(X)) = \Ex_{i \leftarrow [n]} \Ex_{x \sim X_i} D( f(X)_{X_i = x} || f(X) ) \le \frac{t+1}{n}.
	\]
	By Pinsker's inequality (Lemma~\ref{lm:pinsker}) and the fact that $\sqrt{x}$ is a concave function, we have
	\[
	\Ex_{i \leftarrow [n]} \Ex_{x \sim X_i} \| f(X)_{X_i = x} - f(X) \| \le \sqrt{\frac{t+1}{n}},
	\]
	and
	\[
	\Ex_{i \leftarrow [n]} \Pr[X_i = 1] \cdot \| f(X)_{X_i = 1} - f(X) \| \le \sqrt{\frac{t+1}{n}}.
	\]		
	
	Finally, note that with probability at least $1 - \frac{2t}{n}$ over $i \leftarrow [n]$, we have $\Pr[X_i = 1] \ge 0.1$. Putting everything together, we have
	\[		
	\Ex_{i \leftarrow [n]} \| f(X)_{X_i = 1} - f(X) \| \le \frac{2t}{n} + 10 \cdot \sqrt{\frac{t+1}{n}} \le O\left(\sqrt{\frac{t}{n}}\right). \qedhere
	\]
\end{proof}

Now we are ready to prove Lemma~\ref{lm:close-subset-full} (restated below).

\newcommand{\distrT}{\mathcal{T}}	

\begin{reminder}{Lemma~\ref{lm:close-subset-full}}
	Let $n, t, k$ be integers such that $t,k \le n^{1/4}$ and $t \ge 10\log n$, $D$ be a subset of $\{0,1\}^n$ with $|D| \ge 2^{n-t}$, $\distr_{D}$ be the uniform distribution on $D$, and $\distr_{D}^{C}$ be the uniform distribution on $\{ x : x \in D, x_i = 1 \text{ for all $i \in C$} \}$. For all function $f : D \to \{0,1\}$, we have
	\[
	\Ex_{C \sim \distrS_{k}^{[n]}} [ \| f(\distr_D) - f(\distr_D^{C}) \| ] \le O\left(k \cdot \sqrt{\frac{t}{n}}\right).
	\]
\end{reminder}
\begin{proofof}{Lemma~\ref{lm:close-subset-full}}
		
	Instead of choosing $C$ from $\distrS_k^{[n]}$, we choose an ordered $k$-tuple of $a = (a_1,a_2,\dotsc,a_k)$ of $k$ distinct elements in $[n]$ uniformly at random. Let the distribution be $\distrT_{k}^{[n]}$. 
	
	We have
	\begin{align}
	\Ex_{C \sim \distrS_{k}^{[n]}} [ \| f(\distr_D) - f(\distr_D^{C}) \| ] =& \Ex_{a \sim \distrT_{k}^{[n]}} [ \| f(\distr_D) - f(\distr_D^{ \{a_i\}_{i=1}^{k} }) \| ] \notag \\
	\le& \sum_{\ell=1}^{k} \Ex_{a \sim \distrT_{\ell}^{[n]}} [ \| f(\distr_D^{ \{a_i\}_{i=1}^{\ell-1} }) - f(\distr_D^{ \{a_i\}_{i=1}^{\ell} }) \| ] \notag \\
	\le& \sum_{\ell=0}^{k-1} \Ex_{a \sim \distrT_{\ell}^{[n]}} \Ex_{j \leftarrow [n] \setminus \{a_i\}_{i=1}^{\ell} } [ \| f(\distr_D^{ \{a_i\}_{i=1}^{\ell} }) - f(\distr_D^{ \{a_i\}_{i=1}^{\ell} \cup \{j\} }) \| ] \label{line:eq-term}
	\end{align}
	
	Now we are to bound the right side of \eqref{line:eq-term} for each $0 \le \ell \le k-1$ separately. For a subset $S \subseteq [n]$, let $D^{S} = \{ x : x \in D \wedge x_{i} = 1 \text{ for all $i \in S$} \}$.
	
	We first show with high probability, for $a \sim \distrT_{\ell}^{[n]}$, we have $D^{ \{a_i\}_{i=1}^{\ell} }$ is very large. The proof of the following claim is deferred to end of the whole proof.
	
	\begin{claim}\label{claim:DS-whp-large}
		For an integer $\ell \le n^{1/4}$,
		\[
		\Pr_{a \sim \distrT_{\ell}^{[n]}} [ | D^{ \{a_i\}_{i=1}^{\ell} } | \ge 2^{(n - \ell) - 3t} ] \ge 1 - O\left(\frac{t \cdot \ell}{n}\right).
		\]
	\end{claim}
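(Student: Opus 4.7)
The plan is to sample the indices $a_1,\ldots,a_\ell$ sequentially, uniformly at random from the remaining coordinates, and track the sets $D_j:=\{x\in D:x_{a_i}=1\text{ for all }i\le j\}$ along with their ``deficit'' $t_j:=(n-j)-\log_2|D_j|$. By hypothesis $t_0\le t$, and the goal is $t_\ell\le 3t$ with probability at least $1-O(t\ell/n)$, i.e.\ the deficit grows by only a constant factor through the $\ell$ restrictions.

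The core step is an entropy sub-additivity argument. Conditional on the history through step $j-1$, let $Y$ be uniform on $D_{j-1}$ and set $q_i:=\Pr[Y_i=1]$ for $i\in R_{j-1}:=[n]\setminus\{a_1,\ldots,a_{j-1}\}$. Since $H(Y)=\log_2|D_{j-1}|\ge(n-j+1)-t_{j-1}$ and $Y$ is supported on $\{0,1\}^{R_{j-1}}$, sub-additivity of entropy gives
\[
\sum_{i\in R_{j-1}} H(q_i)\ge (n-j+1)-t_{j-1},\qquad\text{so}\qquad \Ex_{i\in R_{j-1}}[1-H(q_i)]\le \frac{t_{j-1}}{n-j+1}.
\]
Combined with Fact~\ref{ft:H-p-property}, this bounds $\Ex_i[(q_i-\tfrac12)^2]\lesssim t_{j-1}/(n-j+1)$, which is what we will feed into the rest of the argument.

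Next, I set up a stopping-time argument. Call step $j$ \emph{bad} if $|q_{a_j}-\tfrac12|>\tfrac14$, and define
\[
\tau \;:=\; \min\{\,j\le\ell\;:\;\text{step }j\text{ is bad, or } t_{j-1}>3t\,\}.
\]
Markov's inequality applied to $(q_{a_j}-\tfrac12)^2$ shows that, conditionally on $\tau>j-1$ (so in particular $t_{j-1}\le 3t$), the probability of a bad step at time $j$ is $O(t_{j-1}/(n-j+1))=O(t/n)$. Union-bounding over the $\ell$ steps gives $\Pr[\tau\le\ell\text{ triggered by a bad step}]\le O(t\ell/n)$, which is already the target failure probability. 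On the complementary event every step satisfies $q_{a_j}\in[\tfrac14,\tfrac34]$, and the Taylor expansion of $-\log(2q)$ around $q=\tfrac12$ (together with Fact~\ref{ft:H-p-property}) gives the quantitative estimate $|t_j-t_{j-1}|=|\log(2q_{a_j})|\le C|q_{a_j}-\tfrac12|$ for a universal constant $C$.

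It remains to show that the accumulated deficit $t_\ell-t_0$ on the good event is small enough. By Cauchy--Schwarz,
\[
t_\ell-t_0 \;\le\; C\sum_{j=1}^\ell|q_{a_j}-\tfrac12| \;\le\; C\sqrt{\ell\sum_{j=1}^\ell (q_{a_j}-\tfrac12)^2},
\]
and the (conditional) expectation of the inner sum is at most $\sum_j O(t_{j-1}/(n-j+1))=O(t\ell/n)$. A final Markov step (whose failure probability is again absorbed into $O(t\ell/n)$) gives $\sum_j(q_{a_j}-\tfrac12)^2=O(t\ell/n)$, so $t_\ell-t_0=O(\sqrt{t\ell^2/n})$. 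In the regime $\ell,t\le n^{1/4}$, $\ell^2\le \sqrt{n}\le 4tn$, so this last quantity is $\le 2t$, and therefore $t_\ell\le 3t$ on the good event, showing $\tau$ is never triggered by the deficit condition.

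The main obstacle is coordinating two sources of error with the single target failure probability $O(t\ell/n)$: avoiding any ``bad'' coordinate over the $\ell$ picks, and controlling the cumulative $L^2$ deviation $\sum_j(q_{a_j}-\tfrac12)^2$. The key enabler is the quantitative bound $1-H(q)\gtrsim(q-\tfrac12)^2$ from Fact~\ref{ft:H-p-property}, which converts the entropy sub-additivity into a second-moment estimate, and in turn (via Cauchy--Schwarz) into the sharp $O(\sqrt{t\ell^2/n})$ bound on $t_\ell-t_0$ rather than the naive $O(\ell)$ estimate one would get from a per-step worst-case bound.
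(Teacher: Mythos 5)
The core step that fails is the final concentration argument. You control the accumulated deficit by bounding $\Ex\bigl[\sum_j (q_{a_j}-\tfrac12)^2\bigr] = O(t\ell/n)$ and then invoking ``a final Markov step (whose failure probability is again absorbed into $O(t\ell/n)$)'' to conclude $\sum_j (q_{a_j}-\tfrac12)^2 = O(t\ell/n)$. Markov's inequality cannot give both simultaneously: from a mean of $O(t\ell/n)$, a threshold of $O(t\ell/n)$ yields only a constant failure probability, while the threshold you actually need for the conclusion $t_\ell-t_0\le 2t$ (namely $\approx t^2/\ell$, via your Cauchy--Schwarz step) yields failure probability $O\bigl(\tfrac{t\ell}{n}\cdot\tfrac{\ell}{t^2}\bigr)=O\bigl(\tfrac{\ell^2}{tn}\bigr)$. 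This exceeds the claimed $O(t\ell/n)$ whenever $\ell\gg t^2$, a regime squarely allowed by the hypotheses (e.g.\ $t=\Theta(\log n)$, $\ell=\Theta(n^{1/4})$, which is exactly how the claim is used inside Lemma~\ref{lm:close-subset-full}). A first-moment bound on the squared deviations, or on $\sum_j|q_{a_j}-\tfrac12|$ directly, is simply too weak to reach the stated $1-O(t\ell/n)$ guarantee; some genuine concentration of the deficit increments is needed, and that is precisely the part of the paper's proof your sketch does not replace. (The paper buckets the per-step increments $Y$ into dyadic labels, shows that from a ``good'' node a random coordinate carries label $k$ with probability $O(4^k t/n)$, and then union-bounds over the $\approx 2^k t/\log n$ positions of such edges to show that accumulating an extra $2t$ of deficit through good edges has probability at most $n^{-2}$ per step --- this large-deviation count is where the hypotheses $t\ge 10\log n$ and $\ell\le n^{1/4}$ are really used.)

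The remainder of your setup is sound and closely parallels the paper: the entropy sub-additivity bound $\Ex_i[1-H(q_i)]\le t_{j-1}/(n-j+1)$, the $O(t/n)$-per-step bound on ``bad'' coordinates (the paper's Fact~\ref{ft:whp-good-edge}), and the linearization $|t_j-t_{j-1}|\le C|q_{a_j}-\tfrac12|$ on good steps are all correct. Two smaller points to tighten if you rework the ending: Fact~\ref{ft:H-p-property} only gives $1-H(q)\asymp(q-\tfrac12)^2$ when $H(q)\ge 0.9$, so for the second-moment estimate over \emph{all} coordinates you should invoke the general inequality $1-H(q)\ge 2(q-\tfrac12)^2$ (or restrict to the high-entropy coordinates as the paper does); and the statement ``the conditional expectation of the inner sum is at most $\sum_j O(t_{j-1}/(n-j+1))$'' should be phrased via the stopped sum $\sum_{j\le\tau\wedge\ell}(q_{a_j}-\tfrac12)^2$, since conditioning on the final good event does not commute with the per-step conditional bounds.
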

	
	Now, by Claim~\ref{claim:DS-whp-large} and Lemma~\ref{lm:close-subset-one}, with probability at least $1 - O\left(\frac{t \cdot \ell}{n}\right)$ over $a \sim \distrT_{\ell}^{[n]}$, we have
	\[
	\Ex_{j \leftarrow [n] \setminus \{a_i\}_{i=1}^{\ell} } [ \| f(\distr_D^{ \{a_i\}_{i=1}^{\ell} }) - f(\distr_D^{ \{a_i\}_{i=1}^{\ell} \cup \{j\} }) \| \le O\left(\sqrt{\frac{3t}{n-\ell}}\right) = O\left( \sqrt{\frac{t}{n}} \right).
	\]
	
	Putting them together, we have
	\begin{align*}
	\Ex_{a \sim \distrT_{\ell}^{[n]}} \Ex_{j \leftarrow [n] \setminus \{a_i\}_{i=1}^{\ell} } [ \| f(\distr_D^{ \{a_i\}_{i=1}^{\ell} }) - f(\distr_D^{ \{a_i\}_{i=1}^{\ell} \cup \{j\} }) \| ] 
	\le O\left( \frac{t \cdot \ell}{n} + \sqrt{\frac{t}{n}} \right)
	\end{align*}
	
	Summing everything up for $0 \le \ell \le k - 1$, we have
	\[		
	\Ex_{C \sim \distrS_{k}^{[n]}} [ \| f(\distr_D) - f(\distr_D^{C}) \| ] \le O\left( k^2 \cdot \frac{t}{n} + k \sqrt{\frac{t}{n}}\right) = O\left(k \sqrt{\frac{t}{n}}\right).
	\]
	
\end{proofof}

Finally, we prove Claim~\ref{claim:DS-whp-large}, which is the most technical proof of this section.

\newcommand{\Good}{\textsf{Good}}
\newcommand{\seqa}[1]{\{a_i\}_{i=1}^{#1}}
\newcommand{\Edge}[2]{E_{\seqa{#1} \rightarrow #2}}

\begin{proofof}{Claim~\ref{claim:DS-whp-large}}
	
	We begin with some notations.
	
	\paragraph*{Subset Tree.}
	
	We can view the process of choosing the $k$-tuples as growing a tree. For each $0 \le \ell \le k$ and each sequence $\seqa{\ell}$ from $\distrT_{\ell}^{[n]}$, we build a tree node $T_{\seqa{\ell}}$. For each $j \in [n] \setminus \seqa{\ell}$, we say node $T_{\seqa{\ell} \cup \{j\}}$ is a child of the node $T_{\seqa{\ell}}$, and denote the edge between them as $\Edge{\ell}{j}$. With this interpretation, the process of choosing $a \sim \distrT_{k}^{[n]}$ can be seen as starting from the root $T_{\emptyset}$, and descending to a random child for $k$ times.
	
	We also define
	\[
	Z_{\seqa{\ell}} = (n - \ell) - \log_2 |D^{\seqa{\ell}}|,
	\]
	and
	\[
	Y_{\seqa{\ell}} = Z_{\seqa{\ell}} - Z_{\seqa{\ell-1}}.
	\]
	
	That is, $Z_{\seqa{\ell}}$ is the gap between the entropy of the set corresponding to the node and the ``full entropy'' $n-\ell$, while $Y_{\seqa{\ell}}$ is the increase of that entropy gap on its parent.
	
	Note that the claim asks to upper bound
	\[
	\Pr_{a \sim T_{\ell}^{[n]}} [Z_{\seqa{\ell}} > 3t ],
	\]
	and we have
	\[
	Z_{\emptyset} = t.
	\]
	
	\paragraph*{Good Nodes, Good Edges, Bad Nodes, Bad Edges, and Edge Labels.} We next define when a node (or an edge) is good or bad. The root $T_{\emptyset}$ is a good node. If the parent of the node is a bad node then it is also a bad node. If a node is a bad node, then all edges in its sub-tree are bad edges.
	
	If $T_{\seqa{\ell}}$ is a good node, we look at all $j \in [n] \setminus \seqa{\ell}$. We say the edge from $T_{\seqa{\ell}}$ to $T_{ \seqa{\ell} \cup \{j\} }$, denoted as $E_{\seqa{\ell} \rightarrow j}$ is a good edge, if
	\[
	H_{X \sim D^{\seqa{\ell}}}(X_j) \ge 0.9,
	\]
	otherwise it is a bad edge. The above basically guarantees to us that a large enough subset (at least a constant fraction) of $D^{\seqa{\ell}}$ contains a $1$ as its $j$th entry.
	
	We mark $T_{\seqa{\ell} \cup \{j\}}$ as a bad node if $E_{\seqa{\ell} \rightarrow j}$ is a bad edge, or $Z_{ \seqa{\ell} \cup \{j\} } > 3t$, otherwise it is a good node.
	
	For a good edge $\Edge{\ell}{j}$, we say it has label $k$ if $|Y_{\seqa{\ell} \cup \{j\}}| \in (2^{-k},2^{-k+1}]$. 
	
	Since it is an good edge, we have
	\[
	H_{X \sim D^{\seqa{\ell}}}(X_j) \ge 0.9,
	\]
	and by Fact~\ref{ft:H-p-property}, it follows
	\[
	\Pr_{X \sim D^{\seqa{\ell}}}[X_j = 1] \ge 0.3.
	\]
	Therefore,
	\[
	|D^{\seqa{\ell} \cup \{j\}}| \ge 0.3 \cdot |D^{\seqa{\ell}}|.
	\]
	
	Therefore, $Z^{\seqa{\ell} \cup \{j\}} \le Z^{\seqa{\ell}} + \log(1/0.3) - 1 \le Z^{\seqa{\ell}} + 1$, which means $Y_{\seqa{\ell} \cup \{j\}} \le 1$. Hence, a good edge's label is at least $1$.
	\paragraph*{Basic Facts.} We need the following two basic facts, whose proof can be found at the end of the proof.
	
	\begin{ft}\label{ft:whp-good-edge}
	Let $T_{\seqa{\ell}}$ be a good node, we have
	\[
	\Pr_{j \in [n] \setminus \seqa{\ell}} [E_{\seqa{\ell} \rightarrow j} \text{ is good}] \ge  1 - O\left(\frac{t}{n} \right).
	\]
	\end{ft}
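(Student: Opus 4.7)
The plan is to exploit that the goodness of $T_{\seqa{\ell}}$ directly controls the entropy deficit of the uniform distribution on $D^{\seqa{\ell}}$, and then push this deficit through a subadditivity-of-entropy argument to bound the number of coordinates where the marginal can be noticeably biased.

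Concretely, let $X$ be drawn uniformly from $D^{\seqa{\ell}}$. Since the coordinates indexed by $\seqa{\ell}$ are fixed to $1$, the entropy $H(X)$ is carried entirely by the $n-\ell$ free coordinates, and it equals $\log_2 |D^{\seqa{\ell}}|$. By the definition of goodness of $T_{\seqa{\ell}}$, we have $Z_{\seqa{\ell}} \le 3t$, which is precisely the statement $H(X) \ge (n-\ell) - 3t$. So the first step is just to unpack the definitions and obtain this entropy lower bound.

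Next I would apply the subadditivity of entropy to the free coordinates: $\sum_{j \in [n] \setminus \seqa{\ell}} H(X_j) \ge H(X) \ge (n-\ell) - 3t$. Since each marginal satisfies $H(X_j) \le 1$, this rearranges to
\[
\sum_{j \in [n] \setminus \seqa{\ell}} \bigl(1 - H(X_j)\bigr) \le 3t,
\]
and hence $\Ex_{j \in [n] \setminus \seqa{\ell}} [1 - H(X_j)] \le \frac{3t}{n-\ell}$. A single application of Markov's inequality then yields
\[
\Pr_{j \in [n] \setminus \seqa{\ell}} \bigl[1 - H(X_j) > 0.1\bigr] \le \frac{30\,t}{n-\ell}.
\]
Since $\ell \le k \le n^{1/4}$, we have $n - \ell = \Theta(n)$, so this probability is $O(t/n)$. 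Complementing gives the claim, because $E_{\seqa{\ell}\rightarrow j}$ is good exactly when $H_{X \sim D^{\seqa{\ell}}}(X_j) \ge 0.9$.

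There is essentially no main obstacle here; the only subtle point is to make sure we are applying subadditivity only to the $n-\ell$ non-fixed coordinates (the fixed ones contribute zero entropy and would otherwise loosen the counting), and to remember that the range of $\ell$ is small enough that $n - \ell$ can be absorbed into $\Theta(n)$ in the final $O(t/n)$ bound. This is the same entropy-deficit template used to prove Lemma~\ref{lm:close-subset-one}, simply applied one level up in the tree to the single-coordinate marginal entropy instead of to mutual information with $f(X)$.
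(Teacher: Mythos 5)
Your proposal is correct and follows essentially the same route as the paper: lower bound $H(X)$ for $X$ uniform on $D^{\seqa{\ell}}$ using the entropy-gap condition of a good node, apply subadditivity of entropy over the non-fixed coordinates, and finish with Markov's inequality on $1 - H(X_j)$. The only cosmetic difference is the constant in the entropy deficit (you use $3t$ from the definition of a good node, the paper writes $2t$), which is immaterial to the $O(t/n)$ bound.
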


	\begin{ft}\label{ft:bound-on-label-k}
	Let $T_{\seqa{\ell}}$ be a good node and $k$ be an integer, we have
	\[
	\Pr_{j \in [n] \setminus \seqa{\ell}} [E_{\seqa{\ell} \rightarrow j} \text{ has label $k$}] \le O\left( \frac{4^k \cdot t}{n} \right).
	\]
	\end{ft}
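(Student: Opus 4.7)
The plan is to translate the label-$k$ condition into a quantitative lower bound on how far $\Pr_{X \sim D^{\seqa{\ell}}}[X_j = 1]$ is forced to be from $1/2$, and then count using entropy sub-additivity together with Fact~\ref{ft:H-p-property}. First I would rewrite $Y_{\seqa{\ell} \cup \{j\}}$ in closed form. Letting $p_j := \Pr_{X \sim D^{\seqa{\ell}}}[X_j = 1]$, we have $|D^{\seqa{\ell} \cup \{j\}}| = p_j \cdot |D^{\seqa{\ell}}|$, so directly from the definitions of $Z$ and $Y$ one gets
\[
Y_{\seqa{\ell} \cup \{j\}} \;=\; Z_{\seqa{\ell} \cup \{j\}} - Z_{\seqa{\ell}} \;=\; -1 - \log_2 p_j.
\]
In particular, $p_j = 1/2$ corresponds exactly to $Y = 0$, and for a good edge Fact~\ref{ft:H-p-property} gives $p_j \in [0.3,0.7]$ and hence $|Y| \le 1$.

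Next, having label $k$ means $|Y_{\seqa{\ell} \cup \{j\}}| > 2^{-k}$. Writing $p_j = 2^{-1-Y}$ and using the elementary inequality $|2^{-Y} - 1| \ge (\ln 2 / 2)\cdot|Y|$ on the range $|Y| \le 1$, this yields $|p_j - 1/2| \ge \Omega(2^{-k})$. Since the edge is good we have $H(p_j) \ge 0.9$, so Fact~\ref{ft:H-p-property} applies and gives $1 - H(X_j) \ge 2(p_j - 1/2)^2 \ge \Omega(4^{-k})$.

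Finally, I would apply entropy sub-additivity to $X \sim D^{\seqa{\ell}}$ restricted to the ``free'' coordinates $[n] \setminus \seqa{\ell}$ (the coordinates in $\seqa{\ell}$ are deterministically $1$ and carry no entropy): $\sum_{j \in [n] \setminus \seqa{\ell}} H(X_j) \ge H(X) = \log_2 |D^{\seqa{\ell}}| = (n - \ell) - Z_{\seqa{\ell}}$. Since $T_{\seqa{\ell}}$ is good we have $Z_{\seqa{\ell}} \le 3t$, and rearranging gives $\sum_{j} (1 - H(X_j)) \le 3t$. Combining this with the per-index lower bound of $\Omega(4^{-k})$ for each label-$k$ index, the number of $j \in [n] \setminus \seqa{\ell}$ for which $\Edge{\ell}{j}$ has label $k$ is at most $O(4^k \cdot t)$. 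Dividing by $n - \ell \ge n/2$ (which holds because $\ell \le n^{1/4}$) yields the claimed bound $O(4^k t / n)$. The one slightly delicate point is the conversion from the logarithmic quantity $Y$ to the linear bias $p_j - 1/2$; everything else is a direct application of standard entropy inequalities already used elsewhere in the paper.
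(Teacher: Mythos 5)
Your proposal is correct and follows essentially the same route as the paper's proof: entropy sub-additivity over the free coordinates bounds $\sum_j (1-H(X_j)) = O(t)$, goodness of the edge lets Fact~\ref{ft:H-p-property} convert $1-H(X_j)$ into $\Omega((p_j-1/2)^2)$, and the label-$k$ condition is translated into $|p_j - 1/2| \ge \Omega(2^{-k})$ before counting and dividing by $n-\ell = \Theta(n)$. The only cosmetic difference is that you invert $Y = -1-\log_2 p_j$ via the elementary bound $|2^{-Y}-1| \ge (\ln 2/2)|Y|$, whereas the paper bounds $\log(1+z)/z$ on $[-0.4,0.4]$; these are the same log-versus-linear comparison.
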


	\paragraph*{The Bound.}
	
	\newcommand{\eventgood}{\event_{\sf good}}
	\newcommand{\eventbad}{\event_{\sf bad}}
	
	Now we are going to lower bound the probability of the event that all nodes $T_{\seqa{d}}$ for $0 \le d \le \ell$ are good, denoted as event $\eventgood$. Clearly this provides a lower bound on $\Pr_{a \sim T_{\ell}^{[n]}} [Z_{\seqa{\ell}} > 2t ]$.
	
	Suppose $\eventgood$ doesn't happen, let $d$ be the first index such that $T_{\seqa{d}}$ is a bad node, let this event be $\eventbad^{d}$. Clearly we have
	\[
	\Pr[\eventgood] = 1 - \sum_{d=0}^{\ell} \Pr[\eventbad^{d}].
	\]
	
	Therefore it suffices to provide an upper bound for each $\Pr[\eventbad^{d}]$, note that $\eventbad^{d}$ is defined as
	\[
	\left[\text{$T_{\seqa{j}}$ is good for all $0 \le j \le d - 1$ and $T_{\seqa{d}}$ is bad}\right].
	\]
	
	There are two possible cases, the first case is that the edge $\Edge{d-1}{a_{d}}$ is an bad edge, which happens with probability at most $O\left( \frac{t}{n} \right)$ by Fact~\ref{ft:whp-good-edge}.
	
	The second case is that the edge $\Edge{d-1}{a_{d}}$ is an good edge. In that case, by definition, we have $Z_{\seqa{d}} > 3t$, which also means
	
	\[
	\sum_{j=1}^{d} Y_{\seqa{j}} > 2t.
	\]
	
	Let $N_{k}$ be the number of edges in the path $\{\Edge{j-1}{a_j}: j \in [d]\}$ with label $k$. 
	\[
	\sum_{k=1}^{\infty} N_{k} \cdot 2^{-k+1} > 2t,
	\]
	which simplifies to
	\[
	\sum_{k=1}^{\infty} N_{k} \cdot 2^{-k} > t.
	\]
	
	Note that since $N_{k} \le d$, we have
	\[
	\sum_{k=\log_2 (2d/t) + 1}^{\infty} N_k \cdot 2^{-k} \le d \cdot \frac{t}{2d} \le \frac{t}{2}.
	\]
	
	Which means
	\[
	\sum_{k=1}^{\log_2 (2d/t)} N_{k} \cdot 2^{-k} > t/2.
	\]
	
	In particular, this means there exists an $k \in [\log_2 (2d/t)]$, such that
	\[
	N_{k} \cdot 2^{-k} > \frac{t}{2 \log n} \Rightarrow N_k \ge \frac{2^k \cdot t}{2 \log n}.
	\]
	
	Let the above be event $\eventbad^{d,k}$, we have
	\[
	\Pr[\eventbad^{d}] \le \sum_{k=1}^{\log (2d/t)} \Pr[\eventbad^{d,k}].
	\]
	
	And by Fact~\ref{ft:bound-on-label-k}, we have
	\begin{align*}
	\Pr[\eventbad^{d,k}] &\le O\left( \frac{4^k \cdot t}{n} \right)^{ \frac{2^k \cdot t}{2 \log n} } \cdot \binom{d}{\frac{2^k \cdot t}{2 \log n}} \\
				   &\le O\left( \frac{4^k \cdot t}{n} \cdot d \right)^{\frac{2^k \cdot t}{2 \log n}}.
	\end{align*}
	
	Note that $4^k \le (2d/t)^2 = O(d^2/t^2)$, $d \le \ell \le n^{1/4}$ and $t \ge 10\log n$, the above simplifies to
	\[
	\Pr[\eventbad^{d,k}] \le O\left( \frac{d^3/t}{n} \right)^{\frac{2^k \cdot t}{2 \log n}} \le n^{-1/4 \cdot 10} \le n^{-2}.
	\]
	
	Putting everything together, we have
	\[
	\Pr[\eventbad^{d}] \le \log n \cdot n^{-2} + O\left(\frac{t}{n}\right) = O\left(\frac{t}{n}\right),
	\]
	and
	\[
	\Pr[\eventgood] \ge 1 - \ell \cdot \left( \frac{t}{n} \right) \ge 1 - O\left( \frac{t \cdot \ell}{n} \right).
	\]
	The above completes the proof.
\end{proofof}

Now we finish the whole proof by proving Fact~\ref{ft:whp-good-edge} and Fact~\ref{ft:bound-on-label-k}.

\begin{reminder}{Fact~\ref{ft:whp-good-edge}}
Let $T_{\seqa{\ell}}$ be a good node. We have
\[
\Pr_{j \in [n] \setminus \seqa{\ell}} [E_{\seqa{\ell} \rightarrow j} \text{ is good}] \ge  1 - O\left(\frac{t}{n} \right).
\]
\end{reminder}

\begin{reminder}{Fact~\ref{ft:bound-on-label-k}}
	Let $T_{\seqa{\ell}}$ be a good node and $k$ be an integer. We have
	\[
	\Pr_{j \in [n] \setminus \seqa{\ell}} [E_{\seqa{\ell} \rightarrow j} \text{ has label $k$}] \le O\left( \frac{4^k \cdot t}{n} \right).
	\]
\end{reminder}

\begin{proofof}{Fact~\ref{ft:whp-good-edge} and Fact~\ref{ft:bound-on-label-k}}
	Let $X \sim D^{\seqa{\ell}}$, we have $H(X) \ge n - \ell - 2t$. Also, since $X_j$ for $j \in \seqa{\ell}$ is always $1$, and therefore has entropy $0$, by the sub-additive of entropy, we have
	\[
	\sum_{j \in [n] \setminus \seqa{\ell}} H(X_j) \ge n - \ell - 2t.
	\]
	Or equivalently, we have
	\[
	\Ex_{j \in [n] \setminus \seqa{\ell}} (1 - H(X_j)) \le \frac{2t}{n-\ell} \le \frac{4t}{n}.
	\]
	
	By a simple Markov's inequality, we have
	\[
	\Pr_{j \in [n] \setminus \seqa{\ell}} \left[1 - H(X_j) \ge 0.1 \right] \le O\left(\frac{t}{n}\right),
	\]
	which proves Fact~\ref{ft:whp-good-edge}.
	
	Now, let $B_j$ be the set of $j$ satisfying $H(X_j) < 0.9$. We have
	\[
	\sum_{j \in [n ]\setminus (\seqa{\ell} \cup B_j )} (1 - H(X_j)) \le 2t.
	\]
	
	Now, let $p_j = \Pr[X_j = 1]$ and $z_j = (p_j - 1/2)$. For $j \in [n ]\setminus (\seqa{\ell} \cup B_j )$, we have $H(p_j) \ge 0.9$ and $|z_j| \le 0.2$ from Fact~\ref{ft:H-p-property}, and also
	\begin{equation}\label{eq:bound-on-sum}
	\sum_{j \in [n ]\setminus (\seqa{\ell} \cup B_j )} 2 \cdot z_j^2 \le 2t.
	\end{equation}
	
	Also, by definition, we have
	\[
	Y_{\seqa{\ell} \cup \{j\}} = \log(1/p_j) - 1 = - \log(2p_j) = - 2 \log(1 + 2z_j).
	\]
	
	Consider the function $g(z) := \log(1+z) / z$, we can see it is a decreasing function when $z \in [-0.4,0.4]$, and we have
	\[
	 0.8\le g(0.4)\le \frac{\log(1 + z)}{z} \le g(-0.4) \le 1.3.
	\]
	
	Therefore, if the edge $\Edge{\ell}{j}$ has label $k$, we know that
	\[
	|2 \log(1 + 2 z_j)| \ge 2^{-k} \Rightarrow |6 z_j| \ge 2^{-k} \Rightarrow z_j^2 \ge  4^{-k} \cdot \frac{1}{36}.
	\]
	
	Using \eqref{eq:bound-on-sum}, we see there are at most $O\left( 4^k \cdot t \right)$ $j$'s such that $\Edge{\ell}{j}$ has label $k$. From which Fact~\ref{ft:bound-on-label-k} follows directly.
\end{proofof}
\section{Proof Overview For the PRG Construction}
\label{sec:PRG-overview}


In this section we provide an overview of the proof for our PRG construction. We first consider a toy example: a very simple PRG which constructs one pseudo-random bit, and fools any one-round \BCAST(1) protocol. Its proof already illustrates the key proof strategy which is used to prove our full PRG results. Then in Subsections \ref{generalize1} and \ref{generalize2} we sketch the key ideas to generalize the proof for the general PRG theorem.

\paragraph*{The Toy PRG.} Here we describe the PRG, and below we will analyze it to show it is indeed pseudo-random. Suppose there are $n$ processors, and each processor receives $k$ \emph{truly random bits}. Suppose there is also a shared random bit-vector $b$ of length $k$, which is also sampled uniformly at random. Then each processor's extra pseudo-random bit is  the inner product (modulo 2) of the vector formed by its random bits and $b$ (so the complete pseudo-random string is its initial $k$ random bits concatenated with these extra random bits obtained with the inner product). Note that in the typical case $n \gg k$, this PRG generates $n$ pseudo-random bits out of a shared random string $b$ of length $k$. When analyzing the PRG, we think of $b$ as a ``secret'' string, since distinguishing the PRG from true randomness corresponds to discovering whether such a $b$ exists.

The goal here is to show that the above PRG construction and the case that all processor get $k+1$ truly random bits are indistinguishable to a one-round \BCAST(1) protocol (see Theorem~\ref{theo:one-round} for a formal statement). We begin with some notations.

\paragraph*{Notations.} 

Throughout the paper, except when explicitly stated, all matrices and vectors are over $\mathbb{F}_2$. We use $\{0,1\}^{n}$ ($\{0,1\}^{n \times m}$) and $\mathbb{F}_2^{n}$ ($\mathbb{F}_2^{n \times m}$) interchangeably. For two vectors $u$ and $v$, we use $(u,v)$ to denote their concatenation.

Recall that we can assume all processors are deterministic as we are trying to prove a lower bound for distinguishing two input distributions by Yao's principle. Processor $i$ can then be defined by a function $f_i : \{0,1\}^{k+1} \times \{0,1\}^{*} \to \{0,1\}$, such that $f_i(z,p)$ is the bit that player $i$ outputs when it gets the input $z$ and transcript $p$. We use $f_i^{|p}$ to denote the function $f_i(\cdot,p)$ for simplicity. If transcript $p$ is incompatible with player $i$ having input $z$, then we set $f_i(z, p)$ arbitrarily.

We use $\distr_{[b]}$ to denote the uniform distribution on the set $\{ (x,x \cdot b) : x \in \{0,1\}^{k} \}$, which is the distribution of inputs a processor receives when the shared random string during the construction of the pseudo-randomness is $b$. We can now formally state our theorem.
\begin{theo}\label{theo:one-round}
	Let $k$ be an integer and $n$ be the number of processors. Consider the following two cases:
	
	\begin{itemize}
		\item (A) All processors receive random inputs from $\distr_{k+1}$.
		\item (B) Let $b$ be a uniform sample from $\distr_{k}$, then all processors receive inputs from $\distr_{[b]}$.
	\end{itemize}
	
	
	For any one-round \BCAST(1) protocol, the statistical distance between the distributions of its transcripts in case (A) and (B) is at most $O\left(\frac{n}{2^{k/2}} \right)$.
\end{theo}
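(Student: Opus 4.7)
The plan is to instantiate the abstract framework of Section~\ref{sec:abs-framework} using the natural decomposition $\distrpseduo = \frac{1}{2^k}\sum_{b \in \{0,1\}^k} \distrA_b$, where $\distrA_b$ is the product distribution in which each of the $n$ rows is drawn independently from $\distr_{[b]}$. As in Section~\ref{sec:clique-lower-bound}, first I would pass to the sequential version of the model in which the $n$ processors each broadcast their single bit in turn, giving $n$ turns total; any lower bound there implies the one-round lower bound. For a fixed protocol $\Pi$, let $\Prand^{(t)}$ and $\distrP_b^{(t)}$ denote the distributions of the length-$t$ partial transcripts in case (A) and under $\distrA_b$ respectively, and define the progress function
\[
\progressFunc^{(t)} := \Ex_{b \sim \distr_k} \left[ \left\| \distrP_b^{(t)} - \Prand^{(t)} \right\| \right].
\]
Since $\realDist^{(t)} \le \progressFunc^{(t)}$ by the triangle inequality, it suffices to show $\progressFunc^{(n)} = O(n \cdot 2^{-k/2})$.

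For the inductive step, suppose processor $i$ speaks at turn $t$. By Lemma~\ref{lm:dist},
\[
\left\| \distrP_b^{(t)} - \Prand^{(t)} \right\| \le \left\| \distrP_b^{(t-1)} - \Prand^{(t-1)} \right\| + \Ex_{p \sim \Prand^{(t-1)}} \left[ \left\| f_i^{|p}(\distr_{k+1}) - f_i^{|p}(\distr_{[b]}) \right\| \right].
\]
Crucially, conditional on $b$ all rows are independent, so processor $i$'s input distribution is unaffected by conditioning on the prior transcript $p$; that is why $\distr_{k+1}$ and $\distr_{[b]}$ appear directly on the right-hand side with no further conditioning. Averaging over $b$, it therefore suffices to prove the following per-processor statistical inequality: for every function $g : \{0,1\}^{k+1} \to \{0,1\}$,
\[
\Ex_{b \sim \distr_k} \left[ \left\| g(\distr_{k+1}) - g(\distr_{[b]}) \right\| \right] \le 2^{-k/2}.
\]

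I would establish this inequality via Fourier analysis over $\mathbb{F}_2^{k+1}$. Expand $g = \sum_{S \subseteq [k+1]} \WH{g}(S) \chi_S$, where $\chi_S(y) = (-1)^{\sum_{i \in S} y_i}$, and plug in $y = (x, \langle x, b \rangle)$ for $x \sim \distr_k$. Writing $T_b \subseteq [k]$ for the support of $b$, a direct computation shows $\chi_S(y) = \chi_{S \cap [k]}(x)$ if $k+1 \notin S$, and $\chi_S(y) = \chi_{(S \cap [k]) \triangle T_b}(x)$ if $k+1 \in S$. Taking the expectation over $x$ kills every term except $S = \emptyset$ and $S = T_b \cup \{k+1\}$, so
\[
\Ex_{\distr_{[b]}}[g] - \Ex_{\distr_{k+1}}[g] = \WH{g}(T_b \cup \{k+1\}).
\]
Since $g$ is $\{0,1\}$-valued, the statistical distance on the left-hand side of the key inequality is precisely $\bigl|\WH{g}(T_b \cup \{k+1\})\bigr|$. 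Averaging over $b$, then applying Cauchy--Schwarz and Parseval's identity,
\[
\Ex_b \bigl|\WH{g}(T_b \cup \{k+1\})\bigr| \le \sqrt{ \frac{1}{2^k} \sum_{T \subseteq [k]} \WH{g}(T \cup \{k+1\})^2 } \le \sqrt{ \frac{\Ex[g^2]}{2^k} } \le 2^{-k/2}.
\]

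Summing the per-turn bound over $t = 1,\dots,n$ with $\progressFunc^{(0)} = 0$ yields $\progressFunc^{(n)} \le n \cdot 2^{-k/2}$, which proves the theorem. The main conceptual step is the Fourier identification of $\Ex_{\distr_{[b]}}[g] - \Ex_{\distr_{k+1}}[g]$ with a single Fourier coefficient of $g$; once one recognizes that $\distr_{[b]}$ is uniform on a $k$-dimensional linear subspace of $\mathbb{F}_2^{k+1}$ whose orthogonal complement is spanned by $(b,1)$, the remainder is essentially bookkeeping. Unlike the multi-round setting handled in later sections, here no processor ever conditions on its own past broadcasts, so the key inequality is only needed for total functions on $\{0,1\}^{k+1}$ and the ``large-support'' complications of the abstract framework do not arise.
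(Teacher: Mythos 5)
Your proposal is correct and follows essentially the same route as the paper: relax to the sequential $n$-turn model, apply Lemma~\ref{lm:dist} turn by turn, and reduce to the per-processor inequality, which you prove by identifying $\Ex_{\distr_{[b]}}[g]-\Ex_{\distr_{k+1}}[g]$ with the Fourier coefficient $\WH{g}(T_b\cup\{k+1\})$ and invoking Parseval (the paper packages this as Lemma~\ref{lm:fourier}, an $\ell_2$ bound, and then applies Cauchy--Schwarz, which is the same calculation). The only cosmetic difference is that you fold the Cauchy--Schwarz step directly into the Fourier bound rather than stating the squared-sum lemma separately.
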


We need the following technical lemma, whose proof can be found at the end of this section.

\begin{lemma}\label{lm:fourier}
	Given a function $f : \{0,1\}^{k+1} \to \{0,1\}$, we have
	\[
	\sum_{b \in \{0,1\}^{k}} \| f(\distr_{k+1}) - f(\distr_{[b]}) \|^2 \le \Ex_{x \sim \distr_{k+1}}[f(x)] \le 1.
	\]	
	Note that $f(\distr_{k+1})$ and $f(\distr_{[b]})$ are two distributions on $\{0,1\}$.
\end{lemma}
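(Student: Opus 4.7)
\medskip

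\noindent\textbf{Proof plan for Lemma~\ref{lm:fourier}.} My plan is to use the Fourier expansion of $f$ over $\mathbb{F}_2^{k+1}$. Since $f$ takes values in $\{0,1\}$, the statistical distance between $f(\distr_{k+1})$ and $f(\distr_{[b]})$, both of which are distributions on $\{0,1\}$, is simply
\[
\| f(\distr_{k+1}) - f(\distr_{[b]}) \| \;=\; \left| \Ex_{y \sim \distr_{k+1}}[f(y)] - \Ex_{y \sim \distr_{[b]}}[f(y)] \right|.
\]
So the whole problem reduces to computing $\Ex_{y \sim \distr_{[b]}}[f(y)]$ cleanly in terms of Fourier coefficients.

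Writing $f(y) = \sum_{S \subseteq [k+1]} \widehat{f}(S)(-1)^{\sum_{i \in S} y_i}$ and sampling $y = (x, x \cdot b)$ with $x \sim \distr_k$, the character $(-1)^{\sum_{i \in S} y_i}$ becomes $(-1)^{\sum_{i \in S'} x_i + \epsilon \cdot (x \cdot b)}$, where $S' = S \cap [k]$ and $\epsilon = \mathbb{1}[k+1 \in S]$. The key calculation is then that the expectation over $x \sim \distr_k$ of this character is $0$ unless the effective coefficient on every $x_i$ vanishes mod $2$, i.e.\ unless $\mathbb{1}[i \in S'] + \epsilon \cdot b_i \equiv 0 \pmod{2}$ for all $i \in [k]$. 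If $\epsilon = 0$ this forces $S = \emptyset$; if $\epsilon = 1$ this forces $S' = \operatorname{supp}(b)$. Therefore
\[
\Ex_{y \sim \distr_{[b]}}[f(y)] \;=\; \widehat{f}(\emptyset) + \widehat{f}\bigl(\operatorname{supp}(b) \cup \{k+1\}\bigr),
\]
while $\Ex_{y \sim \distr_{k+1}}[f(y)] = \widehat{f}(\emptyset)$. Hence $\| f(\distr_{k+1}) - f(\distr_{[b]}) \| = \bigl| \widehat{f}(\operatorname{supp}(b) \cup \{k+1\}) \bigr|$.

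Finally, as $b$ ranges over $\{0,1\}^k$, the sets $\operatorname{supp}(b) \cup \{k+1\}$ range over exactly the subsets of $[k+1]$ that contain $k+1$, each exactly once. So summing squares and applying Parseval's identity gives
\[
\sum_{b \in \{0,1\}^k} \| f(\distr_{k+1}) - f(\distr_{[b]}) \|^2 \;=\; \sum_{\substack{S \subseteq [k+1] \\ k+1 \in S}} \widehat{f}(S)^2 \;\le\; \sum_{S \subseteq [k+1]} \widehat{f}(S)^2 \;=\; \Ex_{x \sim \distr_{k+1}}[f(x)^2].
\]
Since $f$ is $\{0,1\}$-valued, $f^2 = f$, so the right-hand side equals $\Ex[f(x)] \le 1$, which is exactly the claim. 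There is no real obstacle here once the reduction to Fourier coefficients is set up; the only place to be careful is the character computation that collapses $\Ex_{y \sim \distr_{[b]}}[f(y)]$ to just two surviving Fourier coefficients, and the bijection between $b$ and the subsets of $[k+1]$ containing the last coordinate.
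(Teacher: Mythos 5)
Your proposal is correct and follows essentially the same route as the paper: both reduce the statistical distance to the single Fourier coefficient $\widehat{f}\bigl(\operatorname{supp}(b) \cup \{k+1\}\bigr)$ and then conclude via Parseval's identity together with $f^2 = f$. The only cosmetic difference is that the paper obtains the identity $\widehat{f}(S_b \cup \{k+1\}) = \Ex_{x \sim \distr_{[b]}}[f(x)] - \Ex_{x \sim \distr_{k+1}}[f(x)]$ by splitting the uniform distribution according to the value of $x \cdot (b,1)$, whereas you obtain it by expanding $f$ in the character basis and checking which characters survive under $\distr_{[b]}$; these are the same calculation viewed from two sides.
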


Intuitively, the above lemma says that for any function $f$ (think of this as a function describing a processor), it cannot distinguish distributions $\distr_{[b]}$ and $\distr_{k+1}$ for most strings $b$. So, fixing a few random entries of $x$ to be $1$ doesn't change the probability that $f(x)$ is $1$ by much.

Now we are ready to prove Theorem~\ref{theo:one-round}.

\begin{proof}
	Instead of viewing the algorithm as a single round algorithm, we will prove a slightly stronger lower bound. Consider the model where we have $n$ turns. On the $t^{th}$ turn, processor $t$ gets to send a single bit. This model is stronger than one round of the \BCAST(1) model, since it allows the later processors to condition their outputs on earlier the processors' messages. Hence, our lower bound implies a lower bound for the \BCAST(1) model as well.
	
	\paragraph*{Notations.} Let $\Prand^{(t)}$ and $\distrP_{[b]}^{(t)}$ be the distributions of the transcript of the first $t$ turns when all processors get a random input from $\distr_{k+1}$ and $\distr_{[b]}$ respectively.
	
	Note that to prove the theorem, it suffices to show that the distribution $\Prand^{(n)}$ is close to $\distrP_{[b]}^{(n)}$ for most choices of $b$. For this purpose, we are going to prove the following inequality holds for any $t \le n$:
	
	\begin{equation}\label{eq:ex-distance-small-toy}
	\Ex_{b \sim \distr_k} \left[ \| \Prand^{(t)} - \distrP_{[b]}^{(t)} \| \right] \le t \cdot 2^{-k/2}.
	\end{equation}
	
	It is easy to see that plugging in $t = n$, \eqref{eq:ex-distance-small-toy} implies the theorem. To prove \eqref{eq:ex-distance-small-toy} for all $t$, we induct on $t$. Clearly, \eqref{eq:ex-distance-small} holds when $t = 0$. So it suffices to show that when it holds for $t-1$, it also holds for $t$.
	
	For $b \in \{0,1\}^k$, we wish to bound the distance $\| \Prand^{(t)} - \distrP_{[b]}^{(t)} \|$. By Lemma~\ref{lm:dist}, it follows that
	\begin{equation}
	\| \Prand^{(t)} - \distrP_{[b]}^{(t)} \| \le \| \Prand^{(t-1)} - \distrP_{[b]}^{(t-1)} \| + \Ex_{p \sim \Prand^{(t-1)}} \left[ \left\|f_t^{|p}(\distr_{k+1}) - f_t^{|p}(\distr_{[b]})  \right\| \right].
	\label{eq:bound-next-t}
	\end{equation}
	
	Recall that in above $f_t^{|p}$ is the output function of process $t$ when seeing the transcript $p$.
	
	For each $b \in \{0,1\}^k$ and transcript $p \in \{0,1\}^{t-1}$, we define scores $s_{b,p}$ and $s_b$ as follows:
	\[
	s_{b,p} := \left\|f_t^{|p}(\distr_{k+1}) - f_t^{|p}(\distr_{[b]})  \right\| \quad\text{and}\quad s_b := \Ex_{p \sim \Prand^{(t-1)}} [s_{b,p}].
	\]
	
	It suffices to give an upper bound on $\Ex_{b \sim \distr_k}[s_b]$. By Lemma~\ref{lm:fourier}, for all $p \in \{0,1\}^{t-1}$, we have
	\[
	\sum_{b \in \{0,1\}^k} s_{b,p}^2 \le 1,
	\]
	and therefore
	\[
	\sum_{b \in \{0,1\}^k} s_{b,p} \le 2^{k/2} \quad\text{and}\quad \Ex_{b \sim \distr_{k}} [s_{b,p}] \le 2^{-k/2}.
	\]
	By the definition of $s_b$, it follows that
	\[
	\Ex_{b \sim \distr_k} [s_b] \le 2^{-k/2}.
	\]
	
	Therefore, we have
	
	\begin{align*}
	\Ex_{b \sim \distr_k} \left[ \| \Prand^{(t)} - \distrP_{[b]}^{(t)} \| \right] &\le 
	\Ex_{b \sim \distr_k} \left[ \| \Prand^{(t-1)} - \distrP_{[b]}^{(t-1)} \| + \Ex_{p \sim \Prand^{(t-1)}} \left[ \left\|f_t^{|p}(\distr_{k+1}) - f_t^{|p}(\distr_{[b]})  \right\| \right] \right]\\
	\le&(t-1) \cdot 2^{-k/2} + \Ex_{b \sim \distr_k} [s_b]\\
	\le&t \cdot 2^{-k/2}.
	\end{align*}
	
	The above proves \eqref{eq:ex-distance-small-toy} for $t$, which completes the whole proof.
	
	
\end{proof}

Finally, we prove Lemma~\ref{lm:fourier}.
\begin{proofof}{Lemma~\ref{lm:fourier}}
	
	Note that since $f$ is Boolean valued, we have
	\[
	\| f(\distr_{k+1}) - f(\distr_{[b]}) \| = \left| \Ex_{x \sim \distr_{k+1}}[f(x)] - \Ex_{x \sim \distr_{[b]}}[f(x)] \right|.
	\]
	
	The proof is an application of the analysis of Boolean functions (see Section~\ref{sec:analysis-of-boolfunc}). Let $b \in \{0,1\}^{k} $, and let $S_b$ be the corresponding subset of $[k]$ (if $b_i = 1$ then $i \in S_b$). We use $\odistr_{[b]}$ to denote the uniform distribution on the set $\{ (x,1-x\cdot b) : x \in \{0,1\}^{b} \}$, that is, the uniform distribution on the complement of the support of $\distr_{[b]}$. 
	
	Note that for every $x$ from the support of $\distr_{[b]}$, we have $x \cdot (b,1) = 0$, and for every $x$ from the support of $\odistr_{[b]}$, we have $x \cdot (b,1) = 1$. Then we have
	\begin{align*}
	\WH{f}(S_b \cup \{k+1\}) :=& \Ex_{x \sim \distr_{k+1}} \left[ f(x) \cdot (-1)^{(b,1) \cdot x} \right].\\
	=& \frac{1}{2} \cdot \left( \Ex_{x \sim \distr_{[b]}}[f(x)] - \Ex_{x \sim \odistr_{[b]}}[f(x)] \right)\\
	=& \frac{1}{2} \cdot \left( 2\Ex_{x \sim \distr_{[b]}}[f(x)] - \Ex_{x \sim \odistr_{[b]}}[f(x)] - \Ex_{x \sim \distr_{[b]}}[f(x)] \right)\\
	=& \Ex_{x \sim \distr_{[b]}}[f(x)] -\Ex_{x \sim \distr_{k+1}}[f(x)].
	\end{align*}
	
	By Parseval's identity (see Section~\ref{sec:analysis-of-boolfunc}) and the fact that $f$ is Boolean valued, we have
	\[
	\sum_{b \in \{0,1\}^{k}} \WH{f}(S_b \cap \{k+1\})^2 \le \Ex_{x \sim \distr_{k+1}} [f(x)^2] = \Ex_{x \sim \distr_{k+1}}[f(x)],
	\]
	and it follows that
	\[
	\sum_{b \in \{0,1\}^{k}} \left(\Ex_{x \sim \distr_{[b]}}[f(x)] -\Ex_{x \sim \distr_{k+1}}[f(x)]\right)^2 \le \Ex_{x \sim \distr_{k+1}}[f(x)] \le 1. \qedhere
	\] 
\end{proofof}

\subsection{Generalization to Multi-Round Case}\label{generalize1}
Now we outline how to extend the proof to the multi-round case. The PRG is still the same as the toy PRG, we just need to prove it also fools multiple round \BCAST(1) protocols (i.e. Theorem~\ref{theo:multi-round}). In the following, we will explain the key difficulty for generalizing the previous proof to the multi-round case, and how we address them. 

\begin{theo}\label{theo:multi-round}
	Consider the following two cases:
	
	\begin{itemize}
		\item (A) All processors receive random inputs from $\distr_{k+1}$.
		\item (B) Let $b$ be a uniform sample from $\distr_{k}$, then all processors receive inputs from $\distr_{[b]}$.
	\end{itemize}
	
	
	For $j \le k/10$, and any $j$-round \BCAST(1) protocol, the statistical distance between the distributions of its transcripts in case (A) and (B) is at most $O\left( \frac{j \cdot n}{2^{k/9}} \right)$.
\end{theo}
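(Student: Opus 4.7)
The plan is to mirror the structure of the multi-round planted clique proof (Section~\ref{sec:clique-lower-bound}) while swapping in a Fourier-based statistical inequality for partial functions in place of the clique-specific one. First I would relax to the turn-based model: let there be $j \cdot n$ turns and at turn $t$ processor $i = ((t-1) \bmod n) + 1$ broadcasts a single bit. This only strengthens the adversary, so a lower bound here implies the theorem. Write $\Prand^{(t)}$ and $\distrP_{[b]}^{(t)}$ for the distributions of the first $t$ turns of the transcript under case (A) and under case (B) with secret $b$, respectively. The inductive invariant I would prove is
\[
\Ex_{b \sim \distr_k} \!\left[\,\|\Prand^{(t)} - \distrP_{[b]}^{(t)}\|\,\right] \;\le\; t \cdot \left( 1/n^2 + c \cdot 2^{-k/9} \right)
\]
for a universal constant $c$; plugging in $t = j \cdot n$ and using $\Ex_b\|\Prand - \distrP_{[b]}\| \ge \| \Prand - \distrpseduo\|$ (via triangle inequality, as in Section~\ref{sec:abs-framework}) yields the stated bound.

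For the inductive step at turn $t$ with broadcaster $i$, I would apply Lemma~\ref{lm:dist} to get, for every fixed $b$,
\[
\|\Prand^{(t)} - \distrP_{[b]}^{(t)}\| \;\le\; \|\Prand^{(t-1)} - \distrP_{[b]}^{(t-1)}\| + \Ex_{p \sim \Prand^{(t-1)}}\!\left[\,\|f_i^{|p}(\distr_{k+1} \mid D_p) - f_i^{|p}(\distr_{[b]} \mid D_p)\|\,\right],
\]
where $D_p \subseteq \{0,1\}^{k+1}$ is the set of inputs to processor $i$ that are consistent with the transcript $p$ (so that $\distr_{k+1}$ conditioned on $D_p$ is the true posterior input to $i$). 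Taking $\Ex_b$ of both sides and exchanging the order of expectations reduces the induction to bounding the average single-turn contribution. The ``large $D_p$'' claim is essentially identical to Claim~\ref{claim:large-D-p-clique}: processor $i$ has broadcast at most $j$ bits so far, so the marginal of the transcript on $i$'s bits has support size $\le 2^j$, and a Markov argument gives $|D_p| \ge 2^{k+1-j}/n^3$ with probability at least $1 - 1/n^2$ over $p \sim \Prand^{(t-1)}$.

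The main obstacle is the Fourier inequality for partial functions that replaces Lemma~\ref{lm:fourier}. Concretely, I would aim to show: for any $D \subseteq \{0,1\}^{k+1}$ with $|D| \ge 2^{k+1-s}$ and any $f : D \to \{0,1\}$,
\[
\sum_{b \in \{0,1\}^{k}} \|f(\distr_D) - f(\distr_{[b]} \mid D)\|^2 \;\le\; 2^{O(s)}.
\]
My plan for this is to extend $f$ to a total function $\tilde f$ on $\{0,1\}^{k+1}$ by setting $\tilde f \equiv 0$ outside $D$ and to repeat the Parseval calculation from Lemma~\ref{lm:fourier} on $\tilde f$, tracking the normalization error: $\Ex_{x \sim \distr_{[b]}\mid D}[f(x)]$ differs from $\tfrac{2^{k+1}}{|D|} \Ex_{x \sim \distr_{[b]}}[\tilde f(x)]$ only through the random variable $|D \cap H_b|/|D|$, where $H_b = \{x : x\cdot(b,1)=0\}$, and this quantity is concentrated around $1/2$ for most $b$ by another application of Parseval to $\mathbf{1}_D$. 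Combining these two Fourier estimates with Cauchy--Schwarz converts the $\ell_2$ bound into $\Ex_{b \sim \distr_k}[\| \cdot \|] \le 2^{-(k - O(s))/2}$. Setting $s = j + O(\log n)$ and using $j \le k/10$ gives a per-turn contribution of $O(2^{-k/9})$, which telescopes to the theorem. The delicate part will be controlling the error from the biased indicator $\mathbf{1}_D$ and making sure the constants line up to yield the $k/9$ exponent claimed in the theorem statement; this is directly analogous in spirit to how Lemma~\ref{lm:close-subset-full} generalizes Lemma~\ref{lm:close-subset-full-one-round} in the clique proof.
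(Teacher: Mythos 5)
Your overall strategy is the paper's: the turn-based relaxation, the per-turn application of Lemma~\ref{lm:dist} with the input of processor $i$ conditioned on the consistent set $D_p$, a support-counting claim that $D_p$ is large, and a partial-function version of Lemma~\ref{lm:fourier} proved by extending $f$ by zero, applying Parseval, and controlling the normalization $|D\cap H_b|/|D|$ via a second Parseval bound on $\mathbf{1}_D$ --- this is exactly the paper's Lemma~\ref{lm:fourier2} together with Claim~\ref{claim:close}, and Cauchy--Schwarz/Markov converts the $\ell_2$ bound to the averaged statistical-distance bound just as in the paper.

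However, there is one concrete quantitative flaw that prevents your argument, as written, from proving the stated theorem. You import the large-$D_p$ claim verbatim from the planted-clique section (failure probability $1/n^2$, size $2^{k+1-j}/n^3$), and correspondingly your inductive invariant is $t\cdot(1/n^2 + c\cdot 2^{-k/9})$. Summing over $t = j\cdot n$ turns leaves a term $j/n$, which is \emph{not} $O\!\left(j\cdot n/2^{k/9}\right)$ once $2^{k/9} \gg n^2$ --- and that is precisely the regime in which the theorem is used: in the proof of Theorem~\ref{theo:average-lowb} one takes $k = n-1$ and $j = n/20$, where your bound degrades to $j/n = 1/20$ while the theorem promises $o(1)$. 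The fix is to calibrate the exception probability to $k$ rather than to $\log n$: since processor $i$ has broadcast at most $\ell \le j \le k/10$ bits, the transcript restricted to $i$'s turns has support at most $2^{\ell}$, so the same counting/Markov argument with threshold $2^{-\ell - k/4}\cdot 2^{k+1}$ gives $|D^{(t-1)}_p| \ge 2^{k/2}$ except with probability $2^{-k/4}$ (this is the paper's Claim~\ref{claim:large-D-p}); the per-turn contribution is then $2^{-k/4} + 2^{-k/9} \le 2\cdot 2^{-k/9}$, and the invariant $\Ex_b\|\Prand^{(t)} - \distrP_{[b]}^{(t)}\| \le 2t\cdot 2^{-k/9}$ yields the claimed bound. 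The same recalibration should be applied to the deficiency parameter in your Fourier lemma: state it for $|D| \ge 2^{k/2}$ (purely in terms of $k$), not for $|D|\ge 2^{k+1-j}/n^3$ with $s = j + O(\log n)$, so that the $2^{-k/9}$ exponent comes out independent of $n$.
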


The key technical part of the proof of Theorem~\ref{theo:one-round}, is to bound $\| \Prand^{(t)} - \distrP_{[b]}^{(t)} \|$, i.e., the Inequality~\eqref{eq:bound-next-t}:
\[
\| \Prand^{(t)} - \distrP_{[b]}^{(t)} \| \le \| \Prand^{(t-1)} - \distrP_{[b]}^{(t-1)} \| + \Ex_{p \sim \Prand^{(t-1)}} \left[ \left\|f_t^{|p}(\distr_{k+1}) - f_t^{|p}(\distr_{[b]})  \right\| \right].
\]

\newcommand{\Xrand}{X_{\textsf{rand}}}

Let $\Xrand$ ($X_{[b]}$) denote the random variable for the input to the processor $i$ broadcasting at the $t^{th}$ turn, in the case all processors receive inputs from $\distr_{k+1}$ ($\distr_{[b]}$). Inequality~\eqref{eq:bound-next-t} holds crucially because $\Xrand$ ($X_{[b]}$) is independent of the previous part of the transcript during the first $(t-1)$ turns ($b$ is fixed).

However, the independence condition no longer holds in the multi-round case, as the transcript contains previous broadcasts of the \emph{same} processor $i$, which contain information about processor $i$'s input. To deal with that, we have to consider the conditional random variables $\Xrand^{|p}$ and $X_{[b]}^{|p}$ which are $\Xrand$ and $X_{[b]}$ conditioning on seeing the transcript $p$. 

Let $D^{(t-1)}_p$ denote the set of inputs to $f_i$ which are consistent with the transcript $p \in \{0,1\}^{t-1} $,\footnote{That is, simulating $f_i$ with transcript $p$ on that input results in transcript $p$ itself.} then $\Xrand^{|p}$ and $X_{[b]}^{|p}$ distribute uniformly on $\{0,1\}^{k+1} \cap D^{(t-1)}_p$ and $\{ (x,x \cdot b) : x \in \{0,1\}^{k} \} \cap D^{(t-1)}_p$. We use $\distr_{k+1,p}$ and $\distr_{[b],p}$ to denote their distributions. Then we can state a bound similar to \eqref{eq:bound-next-t} in the multi-round case:
\[
\| \Prand^{(t)} - \distrP_{[b]}^{(t)} \| \le \| \Prand^{(t-1)} - \distrP_{[b]}^{(t-1)} \| + \Ex_{p \sim \Prand^{(t-1)}} \left[ \left\|f_t^{|p}(\distr_{k+1,p}) - f_t^{|p}(\distr_{[b],p})  \right\| \right].
\]

Our one-round proof depends on Lemma~\ref{lm:fourier}, which cannot be used directly to bound the right side of the above inequality. Luckily, we are able to generalize Lemma~\ref{lm:fourier} such that it works as long as $D^{(t-1)}_p$ is sufficiently large (see Lemma~\ref{lm:fourier2}), which happens to be the case with high probability (see Claim~\ref{claim:large-D-p}).

\subsection{Generalization to the Complete PRG}\label{generalize2}

Before discussing how to generalize the proof to get a complete PRG. We give a formal description of the PRG here.

\paragraph*{The Full PRG.} Suppose there are $n$ processors and the PRG wants to create $m$ pseudo-random bits that fool an $\Omega(k)$-round $\BCAST(1)$ protocol. Then the PRG is described as follows: each processor gets $k$ \emph{truly random bits}. There is also a hidden ``secret" matrix $M$ of size $k \times (m-k)$, which distributes uniformly random (when constructing the pseudo-randomness, this matrix is created by having each processor broadcast some additional uniformly random bits until there are enough to create the matrix). Then each processor's extra $m-k$ pseudo-random bits are simply the vector matrix product of its random bits and $M$, i.e., $x^{T} M$ (see also Theorem~\ref{theo:PRG-formal}). 

The generalization to the complete PRG case (Theorem~\ref{theo:PRG}) is quite technical.  To state the whole technical theorem, we need to introduce some definitions. Let $M \in \{0,1\}^{n \times m}$. We use $\distr_{M}$ to denote the uniform distribution on the following set $\{ (x,x^{T} M)  : x \in \{0,1\}^{n} \}$, which is a subset of $\{0,1\}^{n+m}$. For integers $n$ and $m$, we use $\distr_{n \times m}$ to denote the uniform distribution on $\{0,1\}^{n \times m}$. Formally, we want to show:

\begin{theo}\label{theo:PRG}
	Let $n,m,k$ be three integers. Consider the following two cases:
	
	\begin{itemize}
		\item (A) All processors receive random inputs from $\distr_{m}$.
		\item (B) Let $M$ be a uniform sample from $\distr_{k \times (m-k)}$, then all processors receive inputs from $\distr_{M}$.
	\end{itemize}
	
	
	For $j \le k/10$, $m \le 2^{k/20}$ and any $j$-round \BCAST(1) protocol, the statistical distance between the distributions of its transcripts in case (A) and (B) is at most $O\left( \frac{j \cdot n}{2^{k/9}} \right)$.
\end{theo}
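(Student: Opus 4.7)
My plan is to mirror the structure of the proof of Theorem~\ref{theo:multi-round} (the single-secret-bit multi-round case), replacing the one-bit Fourier lemma (Lemma~\ref{lm:fourier}) with a matrix-valued analogue. I use the same turn-based relaxation where exactly one processor broadcasts per turn, giving $j \cdot n$ turns in total, and define $\Prand^{(t)}$ and $\distrP_M^{(t)}$ as the transcript distributions for the first $t$ turns under case (A) and under case (B) conditional on the secret matrix $M$. I will prove by induction on $t$ the quantitative progress bound
\[
\Ex_{M \sim \distr_{k \times (m-k)}} \bigl[ \| \Prand^{(t)} - \distrP_M^{(t)} \| \bigr] \le t \cdot O\bigl(2^{-k/9}\bigr),
\]
and then plug in $t = j \cdot n$; the theorem follows since the case (B) transcript distribution equals $\Ex_M \distrP_M^{(jn)}$ and statistical distance is convex.

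\paragraph{Inductive step.} Let $i$ be the processor broadcasting at turn $t$ and let $D_p^{(t-1)}$ denote the set of inputs to processor $i$ consistent with a length-$(t-1)$ transcript $p$. By Lemma~\ref{lm:dist},
\[
\| \Prand^{(t)} - \distrP_M^{(t)} \| \le \| \Prand^{(t-1)} - \distrP_M^{(t-1)} \| + \Ex_{p \sim \Prand^{(t-1)}} \bigl[ \| f_i^{|p}(\distr_{m,p}) - f_i^{|p}(\distr_{M,p}) \| \bigr],
\]
where $\distr_{m,p}$ and $\distr_{M,p}$ are the uniform distributions on $\{0,1\}^m \cap D_p^{(t-1)}$ and $\{(x, x^T M) : x \in \{0,1\}^k\} \cap D_p^{(t-1)}$ respectively. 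After taking expectation over $M$ and applying induction to the first term, the task reduces to showing $\Ex_p \Ex_M \| f_i^{|p}(\distr_{m,p}) - f_i^{|p}(\distr_{M,p}) \| \le O(2^{-k/9})$.

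\paragraph{Two sub-ingredients.} First, an analog of Claim~\ref{claim:large-D-p-clique}: for $j \le k/10$, with probability at least $1 - 1/n^2$ over $p \sim \Prand^{(t-1)}$ the set $D_p^{(t-1)}$ has size at least $2^{m - O(j + \log n)}$, so the small-probability event contributes only $1/n^2$ to the expectation. Second, a matrix analogue of Lemma~\ref{lm:fourier2}: for every $D \subseteq \{0,1\}^m$ with $|D| \ge 2^{m - O(j + \log n)}$ and every $f : D \to \{0,1\}$,
\[
\Ex_{M \sim \distr_{k \times (m-k)}} \bigl[ \| f(\distr_D) - f(\distr_{M,D}) \| \bigr] \le O(2^{-k/9}).
\]
To prove it, I first treat the unrestricted case $D = \{0,1\}^m$ via Fourier analysis: on the support of $\distr_M$, the character $\chi_{(S_1, S_2)}(x,y) = (-1)^{\langle S_1, x\rangle + \langle S_2, y\rangle}$ collapses to $(-1)^{\langle S_1 + M S_2, x\rangle}$, whose expectation is $\mathbf{1}[S_1 = M S_2]$, so $\Ex_{\distr_M}[f] - \Ex_{\distr_m}[f] = \sum_{S_2 \ne \emptyset} \WH{f}(M S_2, S_2)$. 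Squaring, averaging over $M$, and using that $MS_2$ is uniform in $\{0,1\}^k$ for each fixed $S_2 \ne 0$, the diagonal contribution is $O(2^{-k})$ by Parseval; the off-diagonal terms simplify using the identity $\sum_{S_1} \WH{f}(S_1, T) = \WH{f(0,\cdot)}(T)$ and can be shown to be $O(2^{-2k})$. Cauchy--Schwarz yields $\Ex_M \| \cdot \| = O(2^{-k/2})$. The restricted case then follows by the same technique used to derive Lemma~\ref{lm:fourier2} from Lemma~\ref{lm:fourier}: extending $f$ to $\tilde{f}$ on $\{0,1\}^m$ and absorbing the $|D|/2^m \ge 2^{-O(k)}$ slack into the exponent, degrading the $2^{-k/2}$ to $2^{-k/9}$.

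\paragraph{Main obstacle.} The key technical hurdle is the off-diagonal cross-term analysis in the restricted-$D$ setting. Unrestricted, these cross-terms factor through the Fourier spectrum of the slice $f(0,\cdot)$; conditioning on an arbitrary large $D$ distorts these slice coefficients, and the distortion must be controlled uniformly over the choice of $D$. The condition $m \le 2^{k/20}$ appears precisely to make the resulting error terms subdominant, and the gap between the ``natural'' exponent $k/2$ in the unrestricted case and the stated $k/9$ reflects this degradation. A secondary challenge is to prevent the bound from scaling with $m$: a naive column-by-column hybrid across the $m-k$ columns of $M$ would yield an extra factor of $m$, which is why a direct matrix-level Fourier argument (rather than a hybrid reduction to Theorem~\ref{theo:multi-round}) is needed to preserve the target $O(jn / 2^{k/9})$.
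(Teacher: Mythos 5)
Your overall architecture is the same as the paper's: the turn-based relaxation, the inductive bound on $\Ex_{M}\bigl[\|\Prand^{(t)}-\distrP_{M}^{(t)}\|\bigr]$ via Lemma~\ref{lm:dist}, a claim that the consistent set $D_p^{(t-1)}$ is large with high probability, and the passage from unrestricted to conditional distributions by extending $f$ by zero and controlling the support ratio (the paper's Claim~\ref{claim:close2}). Where you genuinely diverge is the proof of the key statistical lemma (the paper's Lemma~\ref{lm:fourier4}). The paper proves the unrestricted second-moment bound (Lemma~\ref{lm:fourier3}) by a column-by-column hybrid over the $m-k$ columns of $M$, reusing the one-column Lemma~\ref{lm:fourier} at each step; this costs a factor $(m-k)^2$, which is then absorbed since $m\le 2^{k/20}$ gives $m^2\le 2^{k/10}$, whence the final exponent $k/9$. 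You instead do a direct matrix-level Fourier computation: on the support of $\distr_M$ the character indexed by $(S_1,S_2)$ collapses to $(-1)^{\langle S_1+MS_2,x\rangle}$, so $\Ex_{\distr_M}[f]-\Ex_{\distr_m}[f]=\sum_{S_2\ne 0}\WH{f}(MS_2,S_2)$; the diagonal of the second moment is at most $2^{-k}\Ex[f]$ because $MS_2$ is uniform for each fixed $S_2\ne 0$, and the off-diagonal is controlled because distinct nonzero $S_2,S_2'$ are linearly independent over $\mathbb{F}_2$, so $(MS_2,MS_2')$ is uniform and, via $\sum_{S_1}\WH{f}(S_1,T)=\WH{f(0,\cdot)}(T)$ together with $\sum_{T}\WH{f(0,\cdot)}(T)=f(0,0)$, the off-diagonal sum is at most $2^{-2k}\bigl(f(0,0)-\Ex_y[f(0,y)]\bigr)^2\le 2^{-2k}$. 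I checked this, and it survives the zero-extension used for general $D$ and the subsequent rescaling by $2^m/N_D\le 2^{k/2}$, so your route works and in fact yields a bound with no $m$-dependence at all — slightly sharper than the paper's Lemma~\ref{lm:fourier3}. One side remark of yours is wrong, though: you claim a column-by-column hybrid ``would yield an extra factor of $m$'' and is therefore unusable; the paper does exactly that hybrid, and the extra $m^2$ factor is precisely what the hypothesis $m\le 2^{k/20}$ is there to absorb. So both routes close: yours buys a cleaner, $m$-free inequality, the paper's buys a black-box reduction to the already-proved Lemma~\ref{lm:fourier}.

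One quantitative point needs fixing. You state the large-set claim with failure probability $1/n^2$ and threshold $2^{m-O(j+\log n)}$, mirroring the planted-clique parameterization (Claim~\ref{claim:large-D-p-clique}). A per-turn additive loss of $1/n^2$ makes the induction deliver $O(jn\cdot 2^{-k/9})+j/n$, which is weaker than the claimed $O(jn/2^{k/9})$ whenever $2^{k/9}\gg n^2$ — a regime the theorem's hypotheses permit, since $k$ may be arbitrarily large compared to $n$. The identical counting argument (processor $i$ has spoken at most $j\le k/10$ times, so its transcript marginal has support at most $2^{j}$) gives the parameterization the paper actually uses in Claim~\ref{claim:big-D-p-2}: failure probability $2^{-k/4}$ and $|D_p^{(t-1)}|\ge 2^{m-k/4}$, which feeds your restricted lemma directly and closes the induction with error $2t\cdot 2^{-k/9}$, independent of $n$.
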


The proof strategy is still similar to that of Theorem~\ref{theo:multi-round}. But now for each turn $t$, we need to maintain a set $S^{(t)}$ of secret matrices $M \in \{0,1\}^{k \times (m-k)}$ instead of a set of secret strings. Following the same reasoning as in the previous subsection, we can state a similar bound in this case:
\[
\| \Prand^{(t)} - P_{M}^{(t)} \| \le \| \Prand^{(t-1)} - P_{M}^{(t-1)} \| + \Ex_{p \sim \Prand^{(t-1)}} \left[ \left\|f_t^{|p}(\distr_{m,p}) - f_t^{|p}(\distr_{M,p})  \right\| \right].
\]
In which we use $P_M^{(t)}$ to denote the distribution of the transcript of the first $t$ rounds when all processors get random input from $\distr_{M}$. And $\distr_{m,p}$ and $\distr_{M,p}$ are distributions to the current processor $i$ conditioning on seeing transcript $p$. Using a clever hybrid argument, we are able to prove the sufficient technical lemma (Lemma~\ref{lm:fourier4}) to bound the right side of the above inequality.
\section{Creating a Single Extra Pseudo-random Bit And an Average Case Lower Bound}
\label{sec:one-secret}

\newcommand{\distrhard}{\distrD_{\sf hard}}
\newcommand{\distrother}{\distrD_{\sf other}}

In this section we show our toy PRG (see Section~\ref{sec:PRG-overview}) also fools multiple rounds \BCAST(1) protocols by proving Theorem~\ref{theo:multi-round} (restated below). We also show that our average case lower bound (Theorem~\ref{theo:average-lowb}) is a simple corollary of it.

\begin{reminder}{Theorem~\ref{theo:multi-round}.}
	Consider the following two cases:
	
	\begin{itemize}
		\item (A) All processors receive random inputs from $\distr_{k+1}$.
		\item (B) Let $b$ be a uniform sample from $\distr_{k}$, then all processors receive inputs from $\distr_{[b]}$.\footnote{recall that $\distr_{[b]}$denotes the uniform distribution on the set $\{ (x,x \cdot b) : x \in \{0,1\}^{k} \}$}
	\end{itemize}
	
	
	For $j \le k/10$, and any $j$-round \BCAST(1) protocol, the statistical distance between the distributions of its transcripts in case (A) and (B) is at most $O\left( \frac{j \cdot n}{2^{k/9}} \right)$.
\end{reminder}

\subsection{An Average Case Lower Bound for \BCAST(1)}

First, we show Theorem~\ref{theo:multi-round} implies the average case lower bound we want.

\begin{reminder}{Theorem~\ref{theo:average-lowb}}
	Let $n$ be a large enough integer and $\Ffullrank : \{0,1\}^{n \times n} \to \{0,1\}$ be the indicator function that whether the given matrix has full rank. Suppose there are $n$ processors, $i$-th processor is given with the $i$-th row of the input matrix. For all $n/20$-round \BCAST(1) protocol and all processor $i$ in it, $i$ cannot compute $F$ correctly with probability better than $0.99$, over a uniform random matrix from $\{0,1\}^{n \times n}$.
\end{reminder}
\begin{proof}
	\newcommand{\eventbad}{\event_{\textsf{bad}}}
	\newcommand{\eventacc}{\event_{\textsf{acc}}}
	\newcommand{\Findep}{F_{\textsf{indep}}}
	\newcommand{\acc}{\textsf{acc}}
	
	Let $M$ be the input matrix from $\distr_{n \times n}$, where processor $i$ gets its $i$-th row. Let $\distr_A$ be the uniform distribution $\distr_{n \times n}$, and $\distr_B$ be the input distribution of case (B) in Theorem~\ref{theo:multi-round} when setting $k=n-1$. 
	
	We need some results about random $\mathbb{F}_2$ matrix from Section 3.2 of~\cite{kolchin1999random}. In particular, let $P_{n,s}$ be the probability that a uniformly random $\mathbb{F}_2$ matrix from $\mathbb{F}_2^{n\times n}$ has rank $n-s$. For all $s$, we have
	\[
	\lim_{n \to \infty} P_{n,s} = Q_s := 2^{-s^2} \cdot \left( \prod_{i \ge s+1} (1- 2^{-i}) \right) \cdot \left( \prod_{1\le i \le s} (1-2^{-i})^{-1} \right).
	\]
	
	Numerically, we have $Q_0 \approx 0.2887880950866$. Let $i$ be a processor, and $\acc(M)$ be $i$'s output on input matrix $M$, it suffices to show that $\acc(M)$ can not be correct w.r.t. $\Ffullrank$ with probability  higher than $0.99$. Set $\epsilon = 1 - 0.99 = 0.01$ for convenience.
	
	Suppose for the contradiction that $\acc(M)$ is correct w.r.t. $\Ffullrank(M)$ with probability at least $1-\epsilon$ over $M \sim \distr_A$, then we have
	\[
	\left| \Ex_{M \sim \distr_A}[\acc(M)] - \Ex_{M \sim \distr_A}[\Ffullrank(M)] \right| \le \epsilon,
	\]
	which means
	\[	
	\left| \Ex_{M \sim \distr_A}[\acc(M)] - Q_0 \right| \le \epsilon + o(1).
	\]	
	Also, by Theorem~\ref{theo:multi-round}, we have
	\[
	 \left| \Ex_{M \sim \distr_A}[\acc(M)] - \Ex_{M \sim \distr_B}[\acc(M)] \right| = o(1),
	\]
	and therefore
	\[	
	\left| \Ex_{M \sim \distr_B}[\acc(M)] - Q_0 \right| \le \epsilon + o(1).
	\]
	
	However, for all matrices in the support of $\distr_B$, their rank is at most $n-1$, which means $\acc(M)$ must be wrong on most of them. Note that	
	\[
	\Pr_{M \sim \distr_A}[\acc(M) \ne \Ffullrank(M)] \ge \Ex_{M \sim \distr_B}\left[ [\acc(M) \ne \Ffullrank(M)] \cdot  \frac{\distr_A(M)}{\distr_B(M)} \right],
	\]
	where $\distr_A(M)$ and $\distr_B(M)$ denote the probability of getting $M$ for distributions $\distr_A$ and $\distr_B$.
	
	For a matrix $M \sim \distr_B$, suppose the rank of its first $n-1$ columns is $n-s$, $\distr_B(M)$ can be computed as
	\[
	\distr_B(M) = 2^{-n(n-1)} \cdot 2^{-(n-s)} = 2^{-n^2} \cdot 2^{s}.
	\]
	
	Furthermore, for $M \sim \distr_B$, the probability that its first $n-1$ columns have rank at least $n-s$ is at least $\sum_{j=0}^{s-1} P_{n-1,j}$, as the rank of an $n \times (n-1)$ matrix is always no less than the rank of its left-top $(n-1) \times (n-1)$ matrix. Setting $s = 3$, we can see that for large enough $n$, with probability at least $\sum_{j=0}^{2} Q_j \ge 1 - 0.006$, the first $n-1$ columns of $M$ have rank at least $n-3$. In that case, $ \frac{\distr_A(M)}{\distr_B(M)} \ge 2^{-s} = 1/8$. 
	
	Putting them together, we have
	\begin{align*}
	    &\Pr_{M \sim \distr_A}[\acc(M) \ne \Ffullrank(M)] \\
	\ge &\Ex_{M \sim \distr_B}\left[ [\acc(M) \ne \Ffullrank(M)] \cdot  \frac{\distr_A(M)}{\distr_B(M)} \right]\\
	\ge &\Ex_{M \sim \distr_B}\Big[ [\acc(M) \ne \Ffullrank(M)] \cdot  [\text{the first $n-1$ columns of $M$ have rank at least $n-3$}] \Big] \cdot \frac{1}{8}\\
	\ge & \left( 1 - Q_0 -\epsilon - o(1) - 1 - 0.006 \right) \cdot \frac{1}{8} > 0.05,\\
	\end{align*}	
	contradiction, which completes the proof.
\end{proof}

Considering the problem that checking whether the top $k\times k$ sub-matrix has full-rank, one immediately get the following average case time-hierarchy theorem for $\BCAST(1)$.

\begin{reminder}{Theorem~\ref{theo:time-H}}
	For any $\omega(\log n) \le k \le n$, there is a function $F$ such that a $k$-round $\BCAST(1)$ protocol can compute exactly, while any $k/20$-round $\BCAST(1)$ protocols cannot compute $F$ correctly with probability $0.99$ over the uniform distribution.
\end{reminder}

\subsection{Proof of Theorem~\ref{theo:multi-round}}

We need the following technical lemma first, whose proof is deferred to the end of the section.

\begin{lemma}\label{lm:fourier2}
	Given a function $f : \{0,1\}^{k+1} \to \{0,1\}$ and a set $D \subseteq \{0,1\}^{k+1}$ with $|D| \ge 2^{k/2}$, let $\distr_{[b],D}$ and $\distr_{k+1,D}$ be the conditional distributions of $\distr_{[b]}$ and $\distr_{k+1}$ on the set $D$.\footnote{When $\distr_{[b]}$ ($\distr_{k+1}$) has no mass on $D$, we set $\distr_{[b],D}$ ($\distr_{k+1,D}$) to be the uniform distribution on $D$.} We have
	
	\[
	\Ex_{b \sim \distr_k}\| f(\distr_{[b],D}) - f(\distr_{k+1,D}) \| \le 2^{-k/9}.
	\]
\end{lemma}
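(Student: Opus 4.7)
My plan is to reduce Lemma~\ref{lm:fourier2} to the already-established Lemma~\ref{lm:fourier} via a second-moment argument, applied to two indicator functions. Since $f$ is Boolean, the statistical distance I need to bound equals $\bigl|\, p_b^{D,f}/p_b^D - p^{D,f}/p^D \,\bigr|$, where $p^D := \Pr_{x \sim \distr_{k+1}}[x \in D]$, $p_b^D := \Pr_{x \sim \distr_{[b]}}[x \in D]$, and $p^{D,f}, p_b^{D,f}$ are the analogous probabilities for the event $\{x \in D \text{ and } f(x) = 1\}$. Applying Lemma~\ref{lm:fourier} to the Boolean functions $\mathbf{1}_{D}$ and $\mathbf{1}_{D \cap f^{-1}(1)}$ yields the two key $\ell_2$ bounds
\[
\sum_{b \in \{0,1\}^k} (p_b^D - p^D)^2 \le p^D \qquad \text{and} \qquad \sum_{b \in \{0,1\}^k} (p_b^{D,f} - p^{D,f})^2 \le p^{D,f} \le p^D.
\]
The hypothesis $|D| \ge 2^{k/2}$ translates into $p^D \ge 2^{-k/2-1}$, which is the quantitative input I will exploit.

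Next I would split the $b$'s into a \emph{typical} set, where $p_b^D \in [p^D/2,\, 3p^D/2]$, and an \emph{atypical} set otherwise. Chebyshev applied to the first $\ell_2$ bound shows the fraction of atypical $b$'s is at most $4/(2^k p^D) = O(2^{-k/2})$; for these I just use the trivial bound $\|\cdot\| \le 1$. For a typical $b$, the routine algebraic identity
\[
\frac{p_b^{D,f}}{p_b^D} - \frac{p^{D,f}}{p^D} = \frac{(p_b^{D,f} - p^{D,f})\,p^D - p^{D,f}(p_b^D - p^D)}{p_b^D \cdot p^D},
\]
combined with $p^{D,f} \le p^D$ and $p_b^D \ge p^D/2$, gives the estimate $\tfrac{2}{p^D}\bigl(|p_b^{D,f} - p^{D,f}| + |p_b^D - p^D|\bigr)$. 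Averaging over $b$ and applying Cauchy--Schwarz to the $\ell_2$ bounds gives $\Ex_b |p_b^D - p^D| \le \sqrt{p^D/2^k}$, and the same estimate for the $f$-weighted version, so the typical contribution is $O\!\bigl(1/\sqrt{p^D \cdot 2^k}\bigr) = O(2^{-k/4})$, using $p^D \ge 2^{-k/2-1}$.

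Adding the two pieces yields a total bound of $O(2^{-k/4})$, which is comfortably inside the target $2^{-k/9}$ (the slack presumably absorbs constants and simplifies downstream arithmetic). I do not anticipate a serious obstacle: the only mild subtlety is the degenerate case where $\distr_{[b]}$ places no mass on $D$, but the lemma's convention makes $\distr_{[b],D}$ equal to $\distr_{k+1,D}$ in that case, contributing zero, and such $b$'s can be absorbed into the atypical set without harm. The whole argument is essentially Chebyshev plus a one-line conditional-probability manipulation, bootstrapped by the Fourier/Parseval bound that is already done in Lemma~\ref{lm:fourier}.
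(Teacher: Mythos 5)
Your proposal is correct and follows essentially the same route as the paper's proof: the paper also applies Lemma~\ref{lm:fourier} to the indicator of $D$ (its Claim about $N_b/N_D$) and to $g = f\cdot\mathbf{1}_D$, then controls the conditional-probability ratio with a triangle inequality and handles the exceptional $b$'s trivially. The only difference is cosmetic — you aggregate over typical $b$'s in expectation via Cauchy--Schwarz rather than the paper's two Markov-style high-probability steps, which in fact yields a slightly sharper bound of order $2^{-k/4}$ rather than $2^{-k/8}$.
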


\renewcommand{\FCon}{F}

Now we are ready to prove Theorem~\ref{theo:multi-round}.

\begin{proofof}{Theorem~\ref{theo:multi-round}}	
	Similar to the proof of Theorem~\ref{theo:one-round}, we will consider a slightly stronger model where we have $j \cdot n$ turns, and on the $t^{th}$ turn, processor $(t-1) \bmod{n} + 1$ gets to send a single bit. Recall that we use $\Prand^{(t)}$ to denote the distribution of the transcript of the first $t$ rounds when all processors get random input from $\distr_{k+1}$, and $\distrP_{[b]}^{(t)}$ to denote the distribution of the transcript of the first $t$ rounds when all processors get random input from $\distr_{[b]}$.
	
	Note that to prove the theorem, it suffices to show that the distribution $\Prand^{(j \cdot n)}$ is close to $\distrP_{[b]}^{(j \cdot n)}$ for most choices of $b$. For this purpose, we are going to prove the following inequality holds for any $t \le j \cdot n$:
	
	\begin{equation}\label{eq:ex-distance-small-multi}
	\Ex_{b \sim \distr_k} \left[ \| \Prand^{(t)} - \distrP_{[b]}^{(t)} \| \right] \le 2 \cdot t \cdot 2^{-k/9}.
	\end{equation}
	
	It is easy to see that plugging in $t = j \cdot n$, \eqref{eq:ex-distance-small-multi} implies the theorem. To prove \eqref{eq:ex-distance-small-multi} for all $t$, we induct on $t$. Clearly, \eqref{eq:ex-distance-small} holds when $t = 0$. So it suffices to show that when it holds for $t-1$, it also holds for $t$. Let $ i = (t-1) \pmod{n} + 1$ be the processor broadcasting at the $t$-th turn. 
	
	Again, for $b \in \{0,1\}^{k}$, we wish to bound $\| \Prand^{(t)} - \distrP_{[b]}^{(t)} \|$. By Lemma~\ref{lm:dist}, we have
	\begin{equation}
	\| \Prand^{(t)} - \distrP_{[b]}^{(t)} \| \le \| \Prand^{(t-1)} - \distrP_{[b]}^{(t-1)} \| + \Ex_{p \sim \Prand^{(t-1)}} \left[ \left\|f_i^{|p}(\distr_{k+1,p}) - f_i^{|p}(\distr_{[b],p})  \right\| \right].
	\label{eq:next-step-dist}
	\end{equation}
	
	In the above, $\distr_{k+1,p}$ and $\distr_{[b],p}$ are the distributions $\distr_{k+1}$ and $\distr_{[b]}$ conditioned on the previous transcript $p$. Specifically, let $D^{(t-1)}_p$ denote the set of inputs to $f_i$ which are consistent with the transcript $p$\footnote{That is, simulating $f_i$ with transcript $p$ on that input results in transcript $p$ itself.}, then $\distr_{[b],p}$ and $\distr_{k+1,p}$ are the conditional distributions of $\distr_{[b]}$ and $\distr_{k+1}$ on set $D^{(t-1)}_p$.
	
	Here, we wish to use Lemma~\ref{lm:fourier2} to bound the second term on the right side of \eqref{eq:next-step-dist}. To satisfy the requirement of Lemma~\ref{lm:fourier2}, we have to show that $D^{(t-1)}_p$ is a large subset of $\{0,1\}^{k+1}$ with high probability. Hence, we need the following claim, whose proof is deferred until we prove the theorem first.
	
	\begin{claim}\label{claim:large-D-p}
		For $t \le j \cdot n \le \frac{k\cdot n}{10}$, with probability $1 - 2^{-k/4}$ over $p \sim \Prand^{(t-1)}$, we have $|D^{(t-1)}_p| \ge  2^{k/2}$.
	\end{claim}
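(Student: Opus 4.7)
The plan is to mimic the argument used for Claim~\ref{claim:large-D-p-clique} in Section~\ref{sec:clique-lower-bound}. The key idea is that once we fix the inputs of every processor except the processor $i$ that is about to broadcast at turn $t$, the only remaining randomness in the first $t-1$ bits of the transcript comes from $x_i$, and moreover that randomness can only affect the transcript through the finitely many past broadcasts made by processor $i$ itself.

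Concretely, let $i = (t-1)\bmod n + 1$ and let $\ell$ be the number of turns in $\{1,\dots,t-1\}$ at which processor $i$ spoke. Since there are at most $j$ rounds and $t \le j\cdot n$, we have $\ell \le j \le k/10$. Fix any vector $x^{-i}$ of inputs to the other $n-1$ processors, and let $P_{x^{-i}}^{(t-1)}$ denote the transcript distribution of the first $t-1$ turns when $x_i$ is drawn from $\distr_{k+1}$ and all other processors get inputs $x^{-i}$. Once $x^{-i}$ is fixed, every bit in the transcript is a deterministic function of $x^{-i}$ together with the past broadcasts of processor $i$; hence the support of $P_{x^{-i}}^{(t-1)}$ has size at most $2^{\ell} \le 2^{k/10}$. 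Furthermore, by definition of $D_p^{(t-1)}$, for any transcript $p$ in this support,
\[
P_{x^{-i}}^{(t-1)}(p) \;=\; \Pr_{x_i \sim \distr_{k+1}}\!\bigl[\, x_i \in D_p^{(t-1)}\,\bigr] \;=\; \frac{|D_p^{(t-1)}|}{2^{k+1}}.
\]

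Thus the event $|D_p^{(t-1)}| < 2^{k/2}$ is the event $P_{x^{-i}}^{(t-1)}(p) < 2^{-k/2 - 1}$. Summing this probability mass over the at most $2^{\ell}$ transcripts in the support gives
\[
\Pr_{p \sim P_{x^{-i}}^{(t-1)}}\!\bigl[\,|D_p^{(t-1)}| < 2^{k/2}\,\bigr] \;\le\; 2^{\ell} \cdot 2^{-k/2 - 1} \;\le\; 2^{k/10 - k/2 - 1} \;\le\; 2^{-k/4},
\]
where we used $\ell \le k/10$ and a comfortable slack (any $k \ge O(1)$ suffices). Averaging over $x^{-i}$ drawn from the marginal of $\distr_{k+1}$ on the other processors, and using that $\Prand^{(t-1)}$ is exactly the mixture of the $P_{x^{-i}}^{(t-1)}$'s under this marginal, yields the claimed bound.

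This argument is essentially information-theoretic bookkeeping; I do not anticipate a genuine obstacle. The only subtle point is correctly counting the support size of $P_{x^{-i}}^{(t-1)}$: one must notice that even though $t-1$ can be as large as $jn$, once $x^{-i}$ is fixed the transcript is determined by the at most $\ell \le j$ past bits broadcast by $i$, so only $2^{\ell}$ transcripts can arise. After that observation, the union bound and the averaging over $x^{-i}$ are routine.
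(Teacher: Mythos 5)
Your proposal is correct and follows essentially the same route as the paper's proof of Claim~\ref{claim:large-D-p}: fix $x^{-i}$, observe the support of $P_{x^{-i}}^{(t-1)}$ has size at most $2^{\ell}\le 2^{k/10}$, identify $P_{x^{-i}}^{(t-1)}(p)=|D_p^{(t-1)}|/2^{k+1}$, union-bound over the small-mass transcripts, and average over $x^{-i}$. The only cosmetic difference is that you plug the bound $\ell \le k/10$ directly into the threshold $2^{k/2}$, while the paper first proves the bound with threshold $2^{-\ell-k/4}\cdot 2^{k+1}$ and invokes $\ell\le k/10$ at the end; the arithmetic checks out either way.
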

	
	Now, for each $b$, we define a score $s_b$ as follows:
	\[
	s_b := \Ex_{p \sim \Prand^{(t-1)}} \left[ \left\|f_i^{|p}(\distr_{k+1,p}) - f_i^{|p}(\distr_{[b],p})  \right\| \right].
	\]
	
	For $p \in \{0,1\}^{t-1}$, we set $s_{b,p} := \left\|f_i^{|p}(\distr_{k+1,p}) - f_i^{|p}(\distr_{[b],p})  \right\|$. Then by Lemma~\ref{lm:fourier2} and Claim~\ref{claim:large-D-p}, when $p \sim \Prand^{(t-1)}$, with probability at least $1 - 2^{-k/4}$, we have $ |D^{(t-1)}_p| \ge  2^{k/2} $ and
	\[
	\Ex_{b \sim \distr_k} [s_{b,p}] \le 2^{-k/9}.
	\]
	
	Therefore, it follows
	\[
	\Ex_{b \sim \distr_k} [s_b] = \Ex_{p \sim \Prand^{(t-1)}} \Ex_{b \sim \distr_k} [s_{b,p}] \le 2^{-k/4} + 2^{-k/9} \le 2 \cdot 2^{-k/9}.
	\]
	
	Now we have
	
	\begin{align*}
	\Ex_{b \sim \distr_k} \left[ \| \Prand^{(t)} - \distrP_{[b]}^{(t)} \| \right] &\le 
	\Ex_{b \sim \distr_k} \left[ \| \Prand^{(t-1)} - \distrP_{[b]}^{(t-1)} \| + \Ex_{p \sim \Prand^{(t-1)}} \left[ \left\|f_i^{|p}(\distr_{k+1,p}) - f_i^{|p}(\distr_{[b],p})  \right\| \right] \right]\\
	\le&2 \cdot (t-1) \cdot 2^{-k/9} + \Ex_{b \sim \distr_k} [s_b]\\
	\le&2 \cdot t \cdot 2^{-k/9}.
	\end{align*}
	
	The above proves \eqref{eq:ex-distance-small-multi} for $t$, which completes the whole proof.
	
\end{proofof}

Now we prove Claim~\ref{claim:large-D-p}.

\begin{proofof}{Claim~\ref{claim:large-D-p}}
	
	Let $t_1,t_2,\dotsc,t_{\ell}$ be the indices of all previous $\ell$ turns with processor $i$ broadcasting, before the current $t$-th turn. We have $\ell \le j \le k/10$.
	Let $x \in \{0,1\}^{k+1}$, note that $x$ is consistent with transcript $p$, if for all $a \in [\ell]$, we have
	\[
	f_i^{|p^{(t_a - 1)}}(x) = p_{t_a},
	\]
	where $p^{(t_a - 1)}$ denotes the first $t_a - 1$ bits of $p$. We set $\FCon_i(x,p) = 1$ if $x$ and $p$ are consistent, and $0$ otherwise.
	
	Consider the random process of generating $p \sim \Prand^{(t-1)}$, suppose inputs to all processors other than $i$ are fixed, let $x^{-i} = (x_1,x_2,\dotsc,x_{i-1},x_{i+1},\dotsc,x_{n}) \in \{0,1\}^{(n-1) \times (k+1)}$ be those fixed input. Let $P_{x^{-i}}^{(t)}$ be the distribution of the transcript when $x_i \sim \distr_{k+1}$, and all other processors get (fixed) input according to $x^{-i}$. 
	
	For a fixed $x^{-i}$, note that there are only $2^\ell$ possible transcripts $p$ from $P_{x^{-i}}^{(t-1)}$, as the transcript is determined after fixing the output of processor $i$ at all $\ell$ rounds.	Therefore, let $T(x^{-i},x_i)$ be the transcript when all processors get inputs according to $x^{-i}$ and $x_i$, we can see when $p \sim P_{x^{-i}}^{(t-1)}$, $\FCon_i(x,p) = 1$ if and only if $T(x^{-i},x) = p$. That is,
	\[
	P_{x^{-i}}^{(t-1)}(p) = \Pr_{x_i \sim \distr_{k+1}}[T(x^{-i},x_i) = p] = D_p^{(t-1)} / 2^{k+1}.
	\]
	
	In above $P_{x^{-i}}^{(t-1)}(p)$ is the probability that getting $p$ from distribution $P_{x^{-i}}^{(t-1)}$. Then we have
	\begin{align*}
	&\Pr_{p \sim P_{x^{-i}}^{(t-1)}} \left[ D_p^{(t-1)} < 2^{-\ell - k/4} \cdot 2^{k+1}  \right] \\
   =&\Pr_{p \sim P_{x^{-i}}^{(t-1)}} \left[ P_{x^{-i}}^{(t-1)}(p) < 2^{-\ell - k/4}  \right] \\
   \le&2^{-\ell - k/4} \cdot 2^{\ell} = 2^{-k/4}.
	\end{align*}
	The last inequality holds because the support size of $P_{x^{-i}}^{(t-1)}$ is at most $2^{\ell}$.
	
	Hence, we have
	\begin{align*}
	\Pr_{p \sim \Prand^{(t-1)}} \left[D_p^{(t-1)} < 2^{-\ell - k/4} \cdot 2^{k+1} \right] &= \Ex_{ x^{-i} \sim \distr_{(n-1) \times (k+1)} } \left[ \Pr_{p \sim P_{x^{-i}}^{(t-1)}} \left[D_p^{(t-1)} < 2^{-\ell - k/4} \cdot 2^{k+1} \right] \right]\\
	&\le 2^{-k/4}.
	\end{align*}
	
	The claim follows from that $\ell \le k/10$.	
\end{proofof}

\subsection{Proof of Lemma~\ref{lm:fourier2}}

Here we prove Lemma~\ref{lm:fourier2}.

\begin{proofof}{Lemma~\ref{lm:fourier2}}
	Let $g$ be the following function
	\[
	g(x) := \begin{cases}
	f(x) \quad &x \in D\\
	0 \quad &x \notin D.
	\end{cases}
	\]
	Let $N_D := |D|$, and for $b \in \{0,1\}^{k}$, let $D_{[b]}$ be the support set of $\distr_{[b]}$ and $N_b := |D \cap D_{[b]}|$. That is, $N_D$ is the size of the support set of $\distr_{k+1,D}$, while $N_b$ is the size of the support size of $\distr_{b,D}$.
	
	We need the following claim, which shows that for most $b$'s $N_b$ is close of a half of $N_D$. We defer its proof until we prove the lemma. 
	
	\begin{claim}~\label{claim:close}
		Let $b \sim \distr_{k}$, with probability $1 - 2^{-k/8}$, we have $| N_b/N_D - \frac{1}{2}| < 2^{-k/8}$.
	\end{claim}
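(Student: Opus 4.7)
The plan is a Fourier-analytic second moment argument, in the same spirit as the proof of Lemma~\ref{lm:fourier}. The main observation I would exploit is that $N_b$ is, up to a constant shift and scaling, a single Fourier coefficient of the indicator function $f := \mathbf{1}_D$ of $D$, so Parseval's identity at once bounds the averaged squared deviation of $N_b/N_D$ from $1/2$, and a Chebyshev step finishes the job.

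First I would rewrite the support of $\distr_{[b]}$ as the affine hyperplane $\{y \in \{0,1\}^{k+1} : y \cdot (b,1) = 0\}$, and expand the indicator via characters as $\mathbf{1}[y \cdot (b,1) = 0] = (1 + (-1)^{y \cdot (b,1)})/2$. Summing over $y \in D$ then gives $N_b - N_D/2 = 2^{k} \, \widehat{f}(S_{(b,1)})$, where $S_{(b,1)} \subseteq [k+1]$ is the support of $(b,1)$ and $\widehat{f}$ uses the normalization from Section~\ref{sec:analysis-of-boolfunc}.

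Next, when $b \sim \distr_k$, the set $S_{(b,1)}$ is uniform over the $2^k$ subsets of $[k+1]$ containing the last coordinate. Parseval's identity then yields
\[
\Ex_{b \sim \distr_k}\!\left[(N_b - N_D/2)^2\right] \;=\; 2^{2k}\,\Ex_b\!\left[\widehat{f}(S_{(b,1)})^2\right] \;\le\; 2^{k} \sum_{S \subseteq [k+1]} \widehat{f}(S)^2 \;=\; 2^{k}\,\Ex_x[f(x)^2] \;=\; 2^k \cdot \frac{N_D}{2^{k+1}} \;=\; \frac{N_D}{2}.
\]
Dividing by $N_D^2$ and using the hypothesis $N_D \ge 2^{k/2}$ gives $\Ex_b[(N_b/N_D - 1/2)^2] \le 2^{-k/2-1}$.

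Finally, Markov's inequality applied to the nonnegative random variable $(N_b/N_D - 1/2)^2$ at threshold $2^{-k/4}$ bounds the failure probability by $2^{-k/2-1}/2^{-k/4} = 2^{-k/4-1}$, which is at most $2^{-k/8}$ for all sufficiently large $k$, matching the claim. I do not foresee any real obstacle here; the argument is essentially a textbook second-moment calculation. The only bookkeeping items are (i) identifying the distribution of $S_{(b,1)}$ when $b$ is uniform, so that the Parseval average over those subsets costs only a factor of $2^k$, and (ii) tracking the normalization factor $2^k$ between the raw exponential sum $\sum_{y} (-1)^{y \cdot (b,1)}$ and the normalized Fourier coefficient $\widehat{f}(S_{(b,1)})$.
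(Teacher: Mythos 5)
Your argument is correct and is essentially the paper's own proof: the paper also bounds $\Ex_{b}\bigl[\bigl(N_b/N_D-\tfrac12\bigr)^2\bigr]$ by a Parseval-type second moment and finishes with Markov's inequality, the only difference being that it invokes Lemma~\ref{lm:fourier} applied to the indicator $\mathbf{1}_D$ as a black box, whereas you inline the same computation via the identity $N_b - N_D/2 = 2^{k}\,\widehat{\mathbf{1}_D}(S_{(b,1)})$. Your constants ($\le 2^{-k/2-1}$ for the second moment, failure probability $2^{-k/4-1}\le 2^{-k/8}$) match the claim, so nothing further is needed.
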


	By Lemma~\ref{lm:fourier}, we have
	\[
	\sum_{b \in \{0,1\}^{k}} \| g(\distr_{[b]}) - g(\distr_{k+1}) \|^2 \le \Ex_{x \sim \distr_{k+1}}[g(x)] \le N_D / 2^{k+1} \le N_D / 2^{k}.
	\]
	Equivalently,
	\begin{equation}
	\Ex_{b \sim \distr_{k}} \left[ \| g(\distr_{[b]}) - g(\distr_{k+1}) \|^2 \right] = 
	\Ex_{b \sim \distr_{k}} \left[ \left| \Ex_{x\sim\distr_{[b]}}[g(x)] - \Ex_{x\sim\distr_{k+1}}[g(x)] \right|^2 \right]  \le N_D / 2^{2 k}.\label{eq:bound-for-g}
	\end{equation}
	
	In order to make use of the above bound \eqref{eq:bound-for-g}, we now relate $g(\distr_{[b]})$ and $g(\distr_{k+1})$ to $f(\distr_{[b],D})$ and $f(\distr_{k+1,D})$. From the definition of $g$, we have
	\begin{align}
	\Ex_{x \sim \distr_{[b],D}}[f(x)] &= \Ex_{x\sim\distr_{[b]}}[g(x)] \cdot \frac{2^{k}}{N_b}\notag\\
									  &= \Ex_{x\sim\distr_{[b]}}[g(x)] \cdot \frac{2^{k}}{N_D / 2} \cdot \frac{N_D}{2 N_b},\label{eq:lst-line}
	\end{align}
	and
	\begin{align*}
	\Ex_{x \sim \distr_{k+1,D}}[f(x)] &= \Ex_{x\sim\distr_{k+1}}[g(x)] \cdot \frac{2^{k+1}}{N_D} \\
									  &= \Ex_{x\sim\distr_{k+1}}[g(x)] \cdot \frac{2^{k}}{N_D/2}.
	\end{align*}
	
	That is, $\Ex_{x \sim \distr_{[b],D}}[f(x)]$ and $\Ex_{x \sim \distr_{k+1,D}}[f(x)]$ are $ \Ex_{x\sim\distr_{[b]}}[g(x)]$ and $\Ex_{x\sim\distr_{k+1}}[g(x)]$ scaled by a factor of $\frac{2^k}{N_D/2}$, except for another $\frac{N_D}{2 N_b}$ factor in \eqref{eq:lst-line}, which is very close to $1$ for most $b$'s by Claim~\ref{claim:close}.
	
	We first ignore the $\frac{N_D}{2 N_b}$ factor in \eqref{eq:lst-line}, and define
	\[
	F_{[b]} := \Ex_{x\sim\distr_{[b]}}[g(x)] \cdot \frac{2^{k}}{N_D/2},
	\]
	and
	\[
	F_{k+1} := \Ex_{x \sim \distr_{k+1,D}}[f(x)] = \Ex_{x\sim\distr_{k+1}}[g(x)] \cdot \frac{2^k}{N_D / 2}.
	\]
	
	Scaling each side of \eqref{eq:bound-for-g} by $\left( \frac{2^k}{N_D / 2} \right)^2$, we have
	\[
	\Ex_{b \sim \distr_{k}} \left[ | F_{[b]} - F_{k+1} |^2 \right] \le  N_D / 2^{2 k} \cdot \left( \frac{2^k}{N_D / 2} \right)^2 = 4 / N_D \le 4 \cdot 2^{-k/2},
	\]
	the last inequality follows from the assumption that $N_D = |D| \ge 2^{-k/2}$.
	
	That is, by Markov's inequality, when $b \sim \distr_k$, with probability $1-2^{-k/8}$, we have 
	\[
	| F_{[b]} - F_{k+1} |^2 \le 4 \cdot 2^{-k/2} \cdot 2^{k/8} < 2^{-k/4},
	\]
	which means $|F_{[b]} - F_{k+1}| < 2^{-k/8}$.
	
	Now we take care of the additional $\frac{N_D}{2 N_b}$ factor in \eqref{eq:lst-line}. By Claim~\ref{claim:close}, when $b \sim \distr_k$, with probability $1 - 2^{-k/8}$, we have $\left|N_b / N_D - \frac{1}{2}\right| < 2^{-k/8}$, which means $\left|\frac{N_D}{2 N_b} - 1\right| < 3 \cdot 2^{-k/8}$.
	
	Putting everything together, when $b \sim \distr_k$, with probability $1 - 2 \cdot 2^{-k/8}$, we have
	\begin{align*}
	\| f(\distr_{[b],D}) - f(\distr_{k+1,D}) \| &= \left| \Ex_{x \sim \distr_{[b],D}}[f(x)] - \Ex_{x \sim \distr_{k+1,D}}[f(x)] \right|\\
												&= \left| F_{[b]} \cdot \frac{N_D}{2 N_b} - F_{k+1} \right|\\
												&\le \left| F_{[b]} - F_{k+1} \right| + F_{[b]} \cdot \left|\frac{N_D}{2 N_b} - 1\right|\\
												&\le 4 \cdot 2^{-k/8}.
	\end{align*}
	
	The last line follows from that $F_{[b]} = \Ex_{x\sim\distr_{[b]}}[g(x)] \cdot \frac{2^{k+1}}{N_D} \le \frac{N_D}{2^{k+1}} \cdot \frac{2^{k+1}}{N_D} = 1$.
	
	Therefore,
	\[
	\Ex_{b \sim \distr_k}\| f(\distr_{[b],D}) - f(\distr_{k+1,D}) \| \le (2 \cdot 2^{-k/8}) + 4 \cdot 2^{-k/8} \le 2^{-k/9}.
	\]

\end{proofof}

Finally, we prove Claim~\ref{claim:close}.

\begin{proofof}{Claim~\ref{claim:close}}
	Let $I$ be the indicator function for set $D$:
	\[
	I(x) := \begin{cases}
	1 \quad &x \in D\\
	0 \quad &x \notin D.
	\end{cases}
	\]
	
	By Lemma~\ref{lm:fourier}, we have
	\begin{equation}
	\sum_{b \in \{0,1\}^{k}} \left| \Ex_{x \sim \distr_{[b]}}[I(x)] - \Ex_{x \sim \distr_{k+1}}[I(x)] \right|^2 \le \Ex_{x \sim \distr_{k+1}}[I(x)] = N_D / 2^{k+1} \le N_D / 2^{k}.
	\label{eq:bound-for-I}
	\end{equation}
	
	Note that from the definitions, $\Ex_{x \sim \distr_{[b]}}[I(x)] = N_b / 2^k$ and $\Ex_{x \sim \distr_{k+1}}[I(x)] = N_D / 2^{k+1}$. Plugging these in~\eqref{eq:bound-for-I}, we have
	\[
	\sum_{b \in \{0,1\}^{k}} \left| N_b / 2^{k} - N_D / 2^{k+1} \right|^2 \le N_D / 2^{k}.
	\]
	Scaling each side by $\left(2^{k} / N_D\right)^2$, we have
	\[
	\sum_{b \in \{0,1\}^{k}} \left| N_b / N_D -\frac{1}{2} \right|^2 \le N_D / 2^{k} \cdot \left(2^{k} / N_D\right)^2 = 2^{k} / N_D.
	\]
	Equivalently,
	\[
	\Ex_{b \sim \distr_k} \left[ \left| N_b / N_D -\frac{1}{2} \right|^2 \right] \le 1 / N_D \le 2^{-k/2},
	\]	
	where the last inequality follows from the assumption that $N_D = |D| \ge 2^{k/2}$. Finally, by Markov's inequality, when $b \sim \distr_k$, with probability $1 - 2^{-k/8}$, we have $\left| N_b / N_D -\frac{1}{2} \right|^2 \le 2^{-k/2} \cdot 2^{k/8} \le 2^{-k/4}$, and it follows $\left| N_b / N_D -\frac{1}{2} \right| < 2^{-k/8}$, which completes the proof.
\end{proofof}
\section{The Complete Pseudo-random Generator }\label{sec:completePRG}

In this section we construct the PRG.

\begin{reminder}{Theorem~\ref{theo:PRG-formal}}
	For all $m = O(n)$ and $k = \Omega(\log n)$, there exists an $(O(k), m, n, \Omega(k))$ \BCAST(1) PRG that can be constructed within $O(k)$ rounds. In particular, the PRG works as follows
	
	\begin{itemize}
		\item Each processor gets $k + k \cdot \frac{(m-k)}{n}$ private random bits.
		
		\item Then in $O\left(\frac{m-k}{n} \cdot k \right) = O(k)$ rounds, all processors broadcast their last $k \cdot \frac{(m-k)}{n}$ random bits. And they use that to construct a random matrix $M \in \{0,1\}^{k \times (m-k)}$.
		
		\item Each processor's output is simply the concatenation of its first $k$ random bits $x$ and $x^{T} M$.
	\end{itemize}
\end{reminder}

The following corollary follows directly from the above theorem.
\begin{cor}\label{cor:save-random-bits}
	Let $A$ be a $k$-round randomized \BCAST(1) algorithm with poly($n$) time processors, where each processor uses up to $n$ random bits and $k = \Omega(\log n)$. Then there exists an algorithm $A'$ solving the same problem within $O(k)$-rounds, where each processor uses at most $k$ random bits.
\end{cor}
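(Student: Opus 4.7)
The plan is to apply Theorem~\ref{theo:PRG-formal} as a black box with output length $m = n$. This yields an $(O(k), n, n, \Theta(k))$ BCAST(1) PRG, constructible in $O(k)$ rounds, whose seed has size $k + k(n-k)/n = O(k)$ bits per processor. The derandomized algorithm $A'$ then has two phases: in Phase~1, each processor samples its $O(k)$-bit seed and the processors collectively execute the PRG construction for $O(k)$ rounds (broadcasting the bits that form the shared matrix $M \in \mathbb{F}_2^{k \times (n-k)}$ and locally computing $x^T M$ for the processor's seed $x$); in Phase~2, each processor simulates $A$ for $k$ rounds, feeding its $n$ pseudo-random bits in place of true coins. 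The total round complexity is $O(k) + k = O(k)$, the total private randomness per processor is $O(k)$, and the only extra local work is a single vector-matrix product over $\mathbb{F}_2$ costing $O(kn)$ time, so poly($n$) processors remain poly($n$).

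For correctness I would append one extra broadcast round in which every processor announces its answer, so that the success event becomes a deterministic function of the transcript. Call the resulting combined protocol $\Pi$: PRG construction, then $A$, then the answer broadcast. By Theorem~\ref{theo:PRG-formal}, feeding any $\Theta(k)$-round protocol the PRG output in place of truly uniform bits changes its transcript distribution by at most $1/n$ in statistical distance. Since $\Pi$ uses $O(k)$ rounds in total, provided we first scale up the PRG's seed parameter by a large enough constant (so that the $\Theta(k)$ rounds the PRG fools comfortably exceed the round count of $\Pi$), the acceptance probability of $\Pi$ differs by at most $1/n$ from the acceptance probability of the corresponding ``truly random'' execution of $A$. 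Hence $A'$ solves the problem with probability at least $\Pr[A \text{ succeeds}] - 1/n$, which remains bounded away from $1/2$ for any constant-gap randomized success probability (and can always be boosted by standard repetition if desired).

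The main subtlety is not really an obstacle but a bookkeeping exercise in matching constants: the PRG is guaranteed to fool $\Theta(k)$-round protocols, while $\Pi$ itself has $O(k)$ rounds, so we must instantiate Theorem~\ref{theo:PRG-formal} with seed parameter $c \cdot k$ for a large enough constant $c$ to ensure $\Pi$'s round count fits inside the PRG's fooling radius. After this rescaling the seed size is still $O(k)$ and the round count is still $O(k)$, so the conclusions of the corollary hold verbatim. The reduction from ``transcript indistinguishability'' to ``preservation of success probability'' relies only on the standard fact that adding one more broadcast round lets us encode each processor's output into the transcript itself, so no additional machinery is needed beyond the PRG theorem already in hand.
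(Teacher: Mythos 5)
Your proposal is correct and takes essentially the same route as the paper, which simply asserts that Corollary~\ref{cor:save-random-bits} follows directly from Theorem~\ref{theo:PRG-formal}: run the $O(k)$-round PRG construction, then simulate $A$ on the pseudo-random bits, using the statistical-distance guarantee on transcripts to preserve the success probability up to $1/n$. Your added care about broadcasting the outputs so that success becomes a function of the transcript, and about rescaling the seed parameter by a constant so the fooling radius covers the whole combined protocol, supplies details the paper leaves implicit and is fully consistent with its argument.
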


Note that the correctness of Theorem~\ref{theo:PRG-formal} follows directly from Theorem~\ref{theo:PRG} (restated below). We spend the remainder of this section proving Theorem \ref{theo:PRG}.

\paragraph*{Notations.} We first recall some notations. Let $M \in \{0,1\}^{n \times m}$. We use $\distr_{M}$ to denote the uniform distribution on the following set $\{ (x,x^{T} M)  : x \in \{0,1\}^{n} \}$, which is a subset of $\{0,1\}^{n+m}$. For integers $n$ and $m$, we use $\distr_{n \times m}$ to denote the uniform distribution on $\{0,1\}^{n \times m}$.

Supposing there are $n$ processors in total, we are going to assume they are deterministic. Processor $i$ can be defined by a function $f_i : \{0,1\}^{m} \times \{0,1\}^{*} \to \{0,1\}$, such that $f_i(z,p)$ is the bit player $i$ outputs when it gets the input $z$ and previous history $p$. We are going to use $f_i^{|p}$ to denote the function $f_i(\cdot,p)$ for simplicity. If transcript $p$ is incompatible with player $i$ having input $z$, then we set $f_i(z, p)$ arbitrarily.

\begin{reminder}{Theorem~\ref{theo:PRG}}
	Let $n,m,k$ be three integers. Consider the following two cases:
	
	\begin{itemize}
		\item (A) All processors receive random inputs from $\distr_{m}$.
		\item (B) Let $M$ be a uniform sample from $\distr_{k \times (m-k)}$, then all processors receive inputs from $\distr_{M}$.
	\end{itemize}
	
	
	For $j \le k/10$, $m \le 2^{k/20}$ and any $j$-round \BCAST(1) protocol, the statistical distance between the distributions of its transcripts in case (A) and (B) is at most $O\left( \frac{j \cdot n}{2^{k/9}} \right)$.
\end{reminder}

To prove Theorem~\ref{theo:PRG}, we need the following technical lemma, whose proof is deferred to the end of this section.

\begin{lemma}\label{lm:fourier4}
Assuming $m \le 2^{k/20}$, given a function $f : \{0,1\}^{m} \to \{0,1\}$ and a set $D \subseteq \{0,1\}^{m}$ with $|D| \ge 2^{m-k/2}$, let $\distr_{M,D}$ and $\distr_{m,D}$ be the conditional distributions of $\distr_{M}$ and $\distr_{m}$ on the set $D$.\footnote{When $\distr_{M}$ ($\distr_{m}$) has no mass on $D$, we set $\distr_{M,D}$ ($\distr_{m,D}$ to be the uniform distribution on $D$.)} We have
\[
\Ex_{M \sim \distr_{k\times(m-k)}} \| f(\distr_{M,D}) - f(\distr_{m,D}) \| \le 2^{-k/9}.
\]
\end{lemma}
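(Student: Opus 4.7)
The plan is to prove Lemma~\ref{lm:fourier4} by a hybrid argument over the columns of $M$, combined with a sharpened version of the per-step bound from Lemma~\ref{lm:fourier2}. Writing $M = (M_1,\ldots,M_{m-k})$ by columns, define hybrid distributions $H_0,\ldots,H_{m-k}$ on $\{0,1\}^m$: $H_j$ samples $x \in \{0,1\}^k$, columns $M_1,\ldots,M_j \in \{0,1\}^k$, and fresh bits $r_{j+1},\ldots,r_{m-k}$ uniformly and independently, and outputs $(x,\, x\cdot M_1,\,\ldots,\, x\cdot M_j,\, r_{j+1},\,\ldots,\, r_{m-k})$. Then $H_0 = \distr_m$ and $H_{m-k}$ is distributed as $\distr_M$ for $M \sim \distr_{k\times(m-k)}$. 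Writing $H_{j,D}$ for $H_j$ conditioned on landing in $D$, the triangle inequality gives
\[
\Ex_M \| f(\distr_{M,D}) - f(\distr_{m,D}) \| \;\le\; \sum_{j=0}^{m-k-1} \Ex\, \| f(H_{j,D}) - f(H_{j+1,D}) \|,
\]
where the right-hand expectation is over all the shared randomness. To analyze step $j\to j+1$, condition on $M_1,\ldots,M_j$ and $r_{j+2},\ldots,r_{m-k}$; writing $b := M_{j+1}$, the map $\sigma(x,r) := (x,\, x\cdot M_1,\,\ldots,\, x\cdot M_j,\, r,\, r_{j+2},\,\ldots,\, r_{m-k})$ is an injection $\{0,1\}^{k+1}\hookrightarrow \{0,1\}^m$. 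Setting $F := f\circ\sigma$ and $D' := \sigma^{-1}(D)$, the conditional step distance equals $\| F(\distr_{k+1,D'}) - F(\distr_{[b],D'}) \|$ with $b$ playing the role of the ``secret'' in Lemma~\ref{lm:fourier2}, provided $|D'| \ge 2^{k/2}$.

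The first subtask is to verify that $|D'| \ge 2^{k/2}$ with overwhelming probability over the fixing. A direct expectation computation yields $\Ex[|D'|] \ge (|D|-2^{m-k})/2^{m-k-1} \ge 2^{k/2}$ under the hypothesis $|D| \ge 2^{m-k/2}$, where the $-2^{m-k}$ term accounts for the deterministic image of the $x=0$ row of $\sigma$. Moreover, for distinct pairs $(x_1,r_1), (x_2,r_2)$ with $x_1, x_2 \ne 0$ and $x_1 \ne x_2$ (hence linearly independent over $\mathbb{F}_2$), the bits $(x_1\cdot M_l, x_2\cdot M_l)$ are uniform over $\{0,1\}^2$ and independent across $l$, so the indicators $\mathbf{1}[\sigma(x,r) \in D]$ are approximately pairwise independent. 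A second-moment argument therefore gives $\Pr[|D'| < 2^{k/2}] \le \exp(-\Omega(k))$, and on that rare event the step distance is bounded trivially by $1$, contributing a negligible additive error.

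The main obstacle is sharpening the per-step bound sufficiently that summing over the $m-k \le 2^{k/20}$ steps still reaches $O(2^{-k/9})$. The Markov-based argument in Lemma~\ref{lm:fourier2} only gives $O(2^{-k/8})$ per step, and $(m-k)\cdot 2^{-k/8} \le 2^{-3k/40}$ is weaker than the target $2^{-k/9}$. I therefore re-open the final step of that proof with the Markov thresholds $s_A$ (for $|F_{[b]}-F_{k+1}|$) and $s_B$ (for $|N_b/N_D - 1/2|$) left as free parameters: on the good event where both deviations are within threshold, $F_{[b]} \le 2N_b/N_D = 1 + O(s_B)$ and the step error is $O(s_A + s_B)$; on the complementary event the error is at most $1$, occurring with probability $O(2^{-k/2}/s_A^2) + O(2^{-k/2}/s_B^2)$ by Chebyshev applied to the $L^2$ bounds $\Ex_b[(F_{[b]}-F_{k+1})^2] \le 2/N_D$ from Lemma~\ref{lm:fourier} and $\Ex_b[(N_b/N_D - 1/2)^2] \le 1/N_D$ from Claim~\ref{claim:close}. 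Balancing $s$ against $2^{-k/2}/s^2$ gives the optimum $s_A, s_B = \Theta(2^{-k/6})$, so each step contributes an expected $O(2^{-k/6})$. Summing, $(m-k) \cdot O(2^{-k/6}) \le 2^{k/20} \cdot O(2^{-k/6}) = O(2^{-7k/60}) \le O(2^{-k/9})$ (using $7/60 > 1/9$), which establishes the lemma.
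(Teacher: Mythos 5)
There is a genuine gap, and it sits exactly at the point where your hybrid argument meets the conditioning on $D$. Your per-step reduction fixes $M_1,\dotsc,M_j$ and the fresh coordinates $\rho=(r_{j+2},\dotsc,r_{m-k})$ and asserts that the conditional step distance equals $\| F(\distr_{k+1,D'}) - F(\distr_{[b],D'}) \|$ with $D'=\sigma^{-1}(D)$. But the two conditioned hybrids $H_{j,D}$ and $H_{j+1,D}$ do not induce the same marginal on $\rho$: under $H_{j,D}$ the weight of a slice $\rho$ is proportional to $|D\cap\sigma_\rho(\{0,1\}^{k+1})|$, while under $H_{j+1,D}$ it is proportional to $|D\cap\sigma_\rho(\{(x,x\cdot b)\})|$, and these profiles differ with $b$. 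By Lemma~\ref{lm:dist} you therefore pick up an additional term, the statistical distance between these two $\rho$-marginals, which your sketch never bounds; controlling it is a Claim~\ref{claim:close}-type statement that must hold slice-by-slice, i.e.\ essentially the same difficulty again. Worse, the size guarantee $|D'|\ge 2^{k/2}$ does not hold with the strength you claim: the indicators $\mathbf{1}[\sigma(x,r)\in D]$ all share the same $\rho$, so they are far from pairwise independent. If $D$ is a union of $\rho$-slices of total size $2^{m-k/2}$ (which satisfies the hypothesis), then $|D'|$ is $0$ for all but a $2^{-k/2}$ fraction of uniform $\rho$, so $\Pr[|D'|<2^{k/2}]$ is close to $1$, not $\exp(-\Omega(k))$. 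Even passing to the correct (size-biased) distribution on $\rho$ does not save the stated threshold: for the early hybrids the average slice size is only about $2^{k/2+j+1}$, barely a factor $2$ above $2^{k/2}$, so a constant fraction of the conditioned mass can sit in slices below threshold --- far too lossy when you must sum over $m-k\le 2^{k/20}$ steps with a target of $2^{-k/9}$. (Your re-optimization of the Markov thresholds in Lemma~\ref{lm:fourier2} to get $O(2^{-k/6})$ per step, and the final arithmetic $2^{k/20}\cdot 2^{-k/6}\le 2^{-k/9}$, are fine, but they are downstream of these unproved steps.)

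This is precisely the trap the paper's proof is built to avoid: there the hybrid argument is carried out on the \emph{unconditioned} distributions to obtain an $L^2$ bound, $\Ex_{M}\|f(\distr_{M})-f(\distr_{m})\|^2 \le 2^{-k}(m-k)^2\,\Ex[f]$ (Lemma~\ref{lm:fourier3}), where slicing by $D$ never enters, and the conditioning on $D$ is then handled \emph{once}, globally, by applying that bound to $g=f\cdot\mathbf{1}_D$ and to the indicator of $D$, relating $f(\distr_{M,D})$ and $f(\distr_{m,D})$ to $g(\distr_M)$ and $g(\distr_m)$ through the ratio $N_M/N_D$, which concentrates around $2^{-(m-k)}$ for most $M$ (Claim~\ref{claim:close2}), and finishing with Markov. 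If you want to rescue your route, you would need to (i) add the $\rho$-marginal term via Lemma~\ref{lm:dist} and prove a per-slice analogue of Claim~\ref{claim:close}, and (ii) replace the uniform-$\rho$ second-moment argument by one over the size-biased slice distribution with a threshold well below the typical slice size; at that point you have essentially re-derived the paper's global $L^2$ machinery, so I recommend restructuring along those lines instead.
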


\begin{proofof}{Theorem~\ref{theo:PRG}}	
	Similar to the proof of Theorem~\ref{theo:one-round}, we will consider a slightly stronger model where we have $j \cdot n$ turns, and on the $t^{th}$ turn, processor $(t-1) \bmod{n} + 1$ gets to send a single bit. We use similar notations as in the proof of Theorem~\ref{theo:one-round} and Theorem~\ref{theo:multi-round}. Let $\Prand^{(t)}$ be the distribution of the transcripts of the first $t$ rounds when all processors get random input from $\distr_{m}$. For a matrix $M \in \{0,1\}^{k \times (m-k)}$, we use $P_{M}^{(t)}$ to denote the distribution of the transcript of the first $t$ rounds when all processors get random input from $\distr_{M}$. Recall that $\distr_M$ is the uniform distribution on the set $\{ (x,x^{T} M)  : x \in \{0,1\}^{k} \}$.
	
	Note that to prove the theorem, it suffices to show that the distribution $\Prand^{(j \cdot n)}$ is close to $\distrP_{M}^{(j \cdot n)}$ for most choices of $M \sim \distr_{k \times (m - k)}$. For this purpose, we are going to prove the following inequality holds for any $t \le j \cdot n$:
	
	\begin{equation}\label{eq:ex-distance-small-full}
	\Ex_{M \sim \distr_{k\times (m-k)}} \left[ \| \Prand^{(t)} - \distrP_{M}^{(t)} \| \right] \le 2 \cdot t \cdot 2^{-k/9}.
	\end{equation}
	
	It is easy to see that plugging in $t = j \cdot n$, \eqref{eq:ex-distance-small-full} implies the theorem. To prove \eqref{eq:ex-distance-small-full} for all $t$, we induct on $t$. Clearly, \eqref{eq:ex-distance-small} holds when $t = 0$. So it suffices to show that when it holds for $t-1$, it also holds for $t$. Let $ i = (t-1) \pmod{n} + 1$ be the processor broadcasting at the $t$-th turn.
	
	For an $M \in \{0,1\}^{k \times (m-k)}$, we wish to bound $\| \Prand^{(t)} -\distrP_{M}^{(t)} \|$. By Lemma~\ref{lm:dist}, we have
	\begin{equation}
	\| \Prand^{(t)} -\distrP_{M}^{(t)} \| \le \| \Prand^{(t-1)} -\distrP_{M}^{(t-1)} \| + \Ex_{p \sim \Prand^{(t-1)}} \left[ \left\|f_i^{|p}(\distr_{m,p}) - f_i^{|p}(\distr_{M,p})  \right\| \right].
	\label{eq:next-step-dist-2}
	\end{equation}
	
	In above, $\distr_{m,p}$ and $\distr_{M,p}$ are distributions $\distr_{m}$ and $\distr_{M}$ conditioned on the previous transcript $p$. Specifically, let $D^{(t-1)}_p$ denote the set of inputs to $f_i$ which are consistent with the transcript $p$, then $\distr_{m,p}$ and $\distr_{M,p}$ are the conditional distributions of $\distr_{m}$ and $\distr_{M}$ on set $D^{(t-1)}_p$.
	
	We wish to use Lemma~\ref{lm:fourier4} to bound the second term on the right side of \eqref{eq:next-step-dist-2}. To do so, we need to show $D^{(t-1)}_p$ is large with high probability for $p \sim \Prand^{(t-1)}$. The following claim can be proved in exactly the same way as Claim~\ref{claim:large-D-p} in the proof of Theorem~\ref{theo:multi-round}.
    
    \begin{claim}\label{claim:big-D-p-2}
    For $t \le j \cdot n \le \frac{k\cdot n}{10}$, with probability $1 - 2^{-k/4}$ over $p \sim \Prand^{(t-1)}$, we have $|D^{(t-1)}_p| \ge  2^{m-k/4}$.
    \end{claim}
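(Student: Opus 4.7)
The plan is to adapt the proof of Claim~\ref{claim:large-D-p} essentially verbatim, replacing the input length $k+1$ by $m$ throughout. The structure of the argument depends only on: (i) each processor's input is uniformly distributed on a product space, and (ii) the transcript is determined deterministically by the inputs. Both hold identically in the current setting, so no new idea is required.

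First, I would fix the number $\ell \le j \le k/10$ of turns among $1,\ldots,t-1$ at which processor $i$ was the broadcaster, and let $t_1 < t_2 < \cdots < t_\ell$ be those turn indices. I would then condition on the inputs $x^{-i} \in \{0,1\}^{(n-1)\times m}$ of all other processors and let $P_{x^{-i}}^{(t-1)}$ denote the distribution of transcripts in the first $t-1$ turns when $x_i \sim \distr_m$. Since every broadcast in the first $t-1$ turns by a processor other than $i$ is determined (given $x^{-i}$ and the prefix of the transcript), the only randomness in the transcript comes from the $\ell$ bits broadcast by processor $i$. Hence the support of $P_{x^{-i}}^{(t-1)}$ has size at most $2^\ell$.

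Next, I would write $\FCon_i(x_i, p) = 1$ iff $x_i$ is consistent with $p$, exactly as in Claim~\ref{claim:large-D-p}. Then for each transcript $p$,
\[
P_{x^{-i}}^{(t-1)}(p) \;=\; \Pr_{x_i \sim \distr_m}\!\bigl[T(x^{-i}, x_i) = p\bigr] \;=\; |D_p^{(t-1)}|/2^m.
\]
A union bound over the at most $2^\ell$ transcripts in the support gives, for any threshold $\alpha > 0$,
\[
\Pr_{p \sim P_{x^{-i}}^{(t-1)}}\!\bigl[\, P_{x^{-i}}^{(t-1)}(p) < \alpha \,\bigr] \;\le\; \alpha \cdot 2^\ell.
\]
Choosing $\alpha = 2^{-k/4-\ell}$ yields a failure probability of at most $2^{-k/4}$ and, on the good event, $|D_p^{(t-1)}| \ge 2^{m - k/4 - \ell}$, which exceeds $2^{m - k/4}$ after absorbing the $\ell \le k/10$ slack into the parameter choices (and in any event is far larger than the $2^{m - k/2}$ threshold actually required by Lemma~\ref{lm:fourier4}).

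Finally, since this bound holds for every fixed $x^{-i}$ and $\Prand^{(t-1)} = \Ex_{x^{-i} \sim \distr_{(n-1)\times m}} P_{x^{-i}}^{(t-1)}$, averaging over $x^{-i}$ preserves the inequality and gives the claim. There is no real obstacle here: the only thing to verify is that the transcript distribution over $x_i$ has small support ($\le 2^\ell$) and assigns mass proportional to $|D_p^{(t-1)}|$, both of which are immediate in the single-bit broadcast model. In fact the entire proof of Claim~\ref{claim:large-D-p} goes through word-for-word with the substitution $2^{k+1} \mapsto 2^m$, which is why the original paper chooses to omit it.
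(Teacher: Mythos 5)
Your proposal is exactly the paper's intended argument: the paper omits the proof precisely because it is the word-for-word adaptation of Claim~\ref{claim:large-D-p} with $2^{k+1}$ replaced by $2^m$, which is what you carry out (fix $x^{-i}$, note the support of $P_{x^{-i}}^{(t-1)}$ has size at most $2^\ell$, use $P_{x^{-i}}^{(t-1)}(p)=|D_p^{(t-1)}|/2^m$, union bound, then average over $x^{-i}$). One small slip: with your threshold the guaranteed bound is $|D_p^{(t-1)}|\ge 2^{m-k/4-\ell}$, which is \emph{smaller} than $2^{m-k/4}$, not larger, so it does not literally ``exceed'' the stated bound; but since $\ell\le k/10$ it is at least $2^{m-7k/20}$, which—as you correctly observe—comfortably exceeds the $2^{m-k/2}$ threshold that Lemma~\ref{lm:fourier4} actually requires, so the downstream use of Claim~\ref{claim:big-D-p-2} is unaffected.
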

    
    Now, for each $M \in \{0,1\}^{k \times (m-k)}$, we again define a score $s_M$ as follows:
    \[
    s_M := \Ex_{p \sim \Prand^{(t-1)}} \left[ \left\|f_i^{|p}(\distr_{m,p}) - f_i^{|p}(\distr_{M,p})  \right\| \right].
    \]
    
    For $p \in \{0,1\}^{t-1}$, we set $s_{M,p} := \left\|f_i^{|p}(\distr_{m,p}) - f_i^{|p}(\distr_{M,p})  \right\|$. Then by Lemma~\ref{lm:fourier4} and Claim~\ref{claim:big-D-p-2}, when $p \sim \Prand^{(t-1)}$, with probability at least $1 - 2^{-k/4}$, we have $|D^{(t-1)}_p| \ge  2^{m-k/4}$, and therefore
    
    \[
    \Ex_{M \sim \distr_{k \times (m-k)}} \left[ s_{M,p} \right] \le 2^{-k/9}.
    \]
    
    Therefore, it follows
    \[
    \Ex_{M \sim \distr_{k \times (m-k)}} [s_M] = \Ex_{p \sim \Prand^{(t-1)}} \Ex_{M \sim \distr_{k \times (m-k)}} [s_{M,p}] \le 2^{-k/4} + 2^{-k/9} \le 2 \cdot 2^{-k/9}.
    \]
    
    Now we have
    
    \begin{align*}
    \Ex_{M \sim \distr_{k \times (m-k)}} \left[ \| \Prand^{(t)} - \distrP_{[b]}^{(t)} \| \right] &\le 
    \Ex_{M \sim \distr_{k \times (m-k)}} \left[ \| \Prand^{(t-1)} - \distrP_{[b]}^{(t-1)} \| + \Ex_{p \sim \Prand^{(t-1)}} \left[ \left\|f_i^{|p}(\distr_{m,p}) - f_i^{|p}(\distr_{M,p})  \right\| \right] \right]\\
    \le&2 \cdot (t-1) \cdot 2^{-k/9} + \Ex_{M \sim \distr_{k \times (m-k)}} [s_M]\\
    \le&2 \cdot t \cdot 2^{-k/9}.
    \end{align*}
    
    The above proves \eqref{eq:ex-distance-small-full} for $t$, which completes the whole proof.
\end{proofof}

\subsection{Proof of Lemma~\ref{lm:fourier4}}

Before proving Lemma~\ref{lm:fourier4}, we first prove the following technical lemma, which is a generalization of Lemma~\ref{lm:fourier}.

\begin{lemma}\label{lm:fourier3}
	Given a function $f : \{0,1\}^{m} \to \{0,1\}$, we have
    \[
    \Ex_{M \sim \distr_{k \times (m-k)}} \left[ \left\| f(\distr_{m}) - f(\distr_{M}) \right\|^2 \right] \le 2^{-k} \cdot (m-k)^2 \cdot \Ex_{x \sim \distr_{m}}[f(x)].
    \]
\end{lemma}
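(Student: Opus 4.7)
The plan is a hybrid argument that reduces Lemma~\ref{lm:fourier3} to $m-k$ invocations of Lemma~\ref{lm:fourier}. For each $i = 0, 1, \ldots, m-k$, define the hybrid distribution $H_i$ on $\{0,1\}^m$ as follows: sample $y \in \{0,1\}^k$, $b_1, \ldots, b_i \in \{0,1\}^k$, and $w_{i+1}, \ldots, w_{m-k} \in \{0,1\}$ all uniformly and independently, and then output $(y,\, y \cdot b_1,\, \ldots,\, y \cdot b_i,\, w_{i+1},\, \ldots,\, w_{m-k})$. Thus $H_0 = \distr_m$, while conditioning on $M = (b_1, \ldots, b_{m-k})$ makes $H_{m-k}$ equal to $\distr_M$.

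Setting $F(M) := \Ex_{\distr_M}[f]$ and $F_0 := \Ex_{\distr_m}[f]$, the telescoping identity
\[
F(M) - F_0 \;=\; \sum_{i=1}^{m-k} \Delta_i(b_1, \ldots, b_i),\qquad \Delta_i(b_1, \ldots, b_i) := \Ex[f(H_i)\mid b_1,\ldots,b_i] \;-\; \Ex[f(H_{i-1})\mid b_1,\ldots,b_{i-1}],
\]
together with the Cauchy--Schwarz bound $(F(M) - F_0)^2 \le (m-k) \sum_i \Delta_i^2$, reduces the lemma to showing $\Ex_M[\Delta_i^2] \lesssim 2^{-k}\Ex_x[f(x)]$ for every $i$.

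For the $i$-th step, fix $b_1, \ldots, b_{i-1}$ and average over the trailing uniform bits to obtain a $[0,1]$-valued function $g : \{0,1\}^{k+1} \to [0,1]$ defined by
\[
g(y, w) \;:=\; \Ex_{w_{i+1}, \ldots, w_{m-k}}\bigl[f(y,\, y \cdot b_1,\, \ldots,\, y \cdot b_{i-1},\, w,\, w_{i+1},\, \ldots,\, w_{m-k})\bigr].
\]
Then $\Delta_i(b_1, \ldots, b_i) = \Ex_y[g(y, y \cdot b_i)] - \Ex_{y, w}[g(y, w)]$, which is exactly the quantity bounded in Lemma~\ref{lm:fourier}; the Fourier--Parseval proof of that lemma only needs $g^2 \le g$ at the final step, which remains valid for $[0,1]$-valued $g$, so
\[
\Ex_{b_i}\!\bigl[\Delta_i^2 \,\big|\, b_1,\ldots,b_{i-1}\bigr] \;\le\; 2^{-k}\,\Ex_{y,w}[g(y,w)] \;=\; 2^{-k}\,\Ex_{H_{i-1}}[f].
\]

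It remains to relate $\Ex_{H_{i-1}}[f]$ back to $\Ex_x[f(x)]$: conditioned on $y \ne 0^k$ the vectors $(y \cdot b_1, \ldots, y \cdot b_{i-1})$ are jointly uniform in $\{0,1\}^{i-1}$, so $H_{i-1}$ and $\distr_m$ agree on this event, and the two distributions can only differ through the slice $y = 0^k$, which carries probability $2^{-k}$. Averaging the one-step Fourier bound over $b_1, \ldots, b_{i-1}$ and summing over $i$ then delivers the claimed $2^{-k}(m-k)^2 \Ex_x[f(x)]$ bound. I expect the main technical subtlety to be exactly this final reconciliation between $\Ex_{H_{i-1}}[f]$ and $\Ex_x[f(x)]$ (equivalently, the proper accounting of the $y = 0^k$ contribution), since it is the only place where the hybrid and target distributions genuinely disagree; the Cauchy--Schwarz and per-step Fourier steps are otherwise routine applications of the tools already developed.
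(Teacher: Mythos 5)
Your outline is essentially the paper's own argument: the paper uses the same column-by-column hybrid (its $\distr_{M,j}$ fixes the generated coordinates starting from the tail of the string rather than right after the seed, an immaterial relabeling), the same Cauchy--Schwarz step $(\sum_j a_j)^2\le(m-k)\sum_j a_j^2$, and the same per-step Parseval bound. The only cosmetic difference is that the paper applies the Fourier argument directly to the Boolean function $f\circ E_{v^{(j)}}$ on the larger domain $\{0,1\}^{m-j}$ instead of first averaging out the still-uniform coordinates, so it never needs your (correct) observation that Lemma~\ref{lm:fourier} survives for $[0,1]$-valued functions.

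The step you defer, however, is a genuine gap and not a routine reconciliation: replacing $\Ex[f(H_{i-1})]$ by $O\!\left(\Ex_{x\sim\distr_m}[f(x)]\right)$ is impossible in general. The $y=0^k$ slice contributes up to an additive $2^{-k}$ to $\Ex[f(H_{i-1})]$, and this can dominate $\Ex_{\distr_m}[f]$; what your argument actually yields after summing is the weaker bound $2^{-k}(m-k)^2\left(\Ex_{x\sim\distr_m}[f(x)]+2^{-k}\right)$. Indeed the inequality as stated is false: take $f=\mathds{1}[x=0^m]$. Then $f(\distr_M)$ assigns probability exactly $2^{-k}$ to the outcome $1$ for every $M$ (since $x=0^k$ forces $x^{T}M=0$), so the left-hand side is at least $(2^{-k}-2^{-m})^2\approx 2^{-2k}$, while the right-hand side is $(m-k)^2\,2^{-(k+m)}$, which is smaller as soon as $2^{m-k}\gg(m-k)^2$ (e.g.\ $m-k\ge 13$), well inside the regime $m\le 2^{k/20}$ used in Theorem~\ref{theo:PRG}. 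So no accounting of the $y=0^k$ slice can recover the stated bound; note the paper's proof quietly makes the same move, asserting at the end of Claim~\ref{claim:hybrid} that $\Ex_{v_1,\dotsc,v_j}\left[\Ex_{x\sim\distr_{m-j}}[f(E_{v^{(j)}}(x))]\right]=\Ex_{x\sim\distr_m}[f(x)]$, which fails on exactly that slice. The harmless fix is to prove the lemma with $\Ex_{\distr_m}[f]+2^{-k}$ on the right (equivalently, with the hybrid means $\Ex_M\Ex_{\distr_{M,j}}[f]$); this costs only constants in the application, since in Lemma~\ref{lm:fourier4} the relevant densities satisfy $N_D/2^m\ge 2^{-k/2}\gg 2^{-k}$, and with that corrected right-hand side your plan does go through.
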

\begin{proof}
	Let $M \sim \distr_{k \times (m - k)}$. Let $v_1,v_2,\dotsc,v_{m-k}$ be all the $m - k$ columns of $M$, such that $v_1$ is the last column and $v_{m-k}$ is the first. Clearly, $u_i$'s are i.i.d. samples from $\distr_{k}$.
	
	We are going to prove this lemma via a hybrid argument, for a fixed $M$, let $\distr_{M,j}$ be the uniform distribution on the following set
	\[
	\{ (x,x^{(k)} \cdot v_{j},x^{(k)} \cdot v_{j-1},\dotsc,x^{(k)} \cdot v_1) : x \in \{0,1\}^{m-j} \},
	\]
	where $x^{(k)}$ denotes the first $k$ bits of string $x$. That is, for $x \sim \distr_{M,j}$, the first $m - j$ bits are completely random, while the last $j$ bits are generated according to $M$.	By definition, it is easy to see that $\distr_{M,0} = \distr_{m}$, and $\distr_{M,m-k} = \distr_{M}$. 
	
	The following claim is the central ingredient of our hybrid argument. 
	\begin{claim}\label{claim:hybrid}
		For $0 \le j < m - k$, we have
		\[
		\Ex_{M \sim \distr_{k \times (m-k)}} \| f(\distr_{M,j}) - f(\distr_{M,j+1}) \|^2 \le 2^{-k} \cdot \Ex_{x \sim \distr_{m}}[f(x)].
		\]
	\end{claim}
	
	Before proving Claim~\ref{claim:hybrid}, we show it implies our lemma.
	
	First, for $k$ reals $a_1,a_2,\dotsc,a_k$, we have $\|a\|_1 \le \sqrt{k} \cdot \|a\|_2 $, and consequently
	\begin{equation}
	\left( \sum_{i=1}^{k} a_i \right)^2 \le k \cdot \sum_{i=1}^{k} a_i^2. \label{eq:ell-1-and-ell-2}
	\end{equation}
	
	So we have
	\begin{align*}
	 \Ex_{M \sim \distr_{k \times (m-k)}} \left\| f(\distr_{m}) - f(\distr_{M}) \right\|^2 
   &=\Ex_{M \sim \distr_{k \times (m-k)}} \left\| f(\distr_{M,0}) - f(\distr_{M,m-k}) \right\|^2\\
   &=\Ex_{M \sim \distr_{k \times (m-k)}} (m-k) \cdot \sum_{j=0}^{m-k-1} \left\| f(\distr_{M,j}) - f(\distr_{M,j+1}) \right\|^2 \tag{by \eqref{eq:ell-1-and-ell-2}}\\
   &= (m-k) \cdot \sum_{j=0}^{m-k-1} \cdot \Ex_{M \sim \distr_{k \times (m-k)}} \left\| f(\distr_{M,j}) - f(\distr_{M,j+1}) \right\|^2\\
   &\le 2^{-k} \cdot (m-k)^2 \cdot \Ex_{x \sim \distr_{m}}[f(x)] \tag{by Claim~\ref{claim:hybrid}}.
	\end{align*}

	Finally, we prove Claim~\ref{claim:hybrid}.
	\begin{proofof}{Claim~\ref{claim:hybrid}}\renewcommand{\qedsymbol}{}
		\newcommand{\firstvj}{{v^{(j)}}}
		First, note that $v_{j+2},\dotsc,v_{m-k}$ are not involved in the inequality in the claim. Suppose we fix $v_{1},v_{2},\dotsc,v_{j}$ first. We use $\firstvj$ to denote this vector sequence.
		
		Now, we define the extension function $E_{\firstvj} : \{0,1\}^{m-j} \to \{0,1\}^{m}$ as follows
		\[
		E_{\firstvj}(x) := (x,x^{(k)} \cdot v_j, x^{(k)} \cdot v_{j-1},\dotsc,x^{(k)} \cdot v_{1}).
		\]
		That is, extending the vector $x$ as if $\firstvj$ is the last $j$ columns of the matrix $M$.
		
		We also define $g_\firstvj : \{0,1\}^{m-j} \to \{0,1\}$ by composing $E_{\firstvj}$ and $f$:
		\[
		g_\firstvj(x) := f(E_\firstvj(x)).
		\]
		
		Let $k' = m -j - 1$, and $\distr_{[v_{j+1}]}$ be the uniform distribution on the set $ \{ (x,x^{(k)} \cdot v_{j+1}) : x \in \{0,1\}^{k'} \}$. By a similar proof of Lemma~\ref{lm:fourier}, we have
		\[
		\sum_{v_{j+1} \in \{0,1\}^{k}} \| g_\firstvj(\distr_{[v_{j+1}]}) - g_\firstvj(\distr_{k'+1}) \|^2 \le \Ex_{x \sim \distr_{k'+1}} [g_{\firstvj}(x)],
		\]
		or equivalently,
		\begin{equation}
		\Ex_{v_{j+1} \sim \distr_k} \| g_{\firstvj}(\distr_{[v_{j+1}]}) - g_{\firstvj}(\distr_{k'+1}) \|^2 \le 2^{-k} \cdot \Ex_{x \sim \distr_{k'+1}} [g_{\firstvj}(x)].\label{eq:for-one-fixed-firstvj}
		\end{equation}
		
		Averaging over all $v_{1},v_{2},\dotsc,v_{j}$ from $\distr_{k}$, from the definition of $\distr_{M,j}$ and $\distr_{M,j+1}$, the left side of~\eqref{eq:for-one-fixed-firstvj} becomes
		\[
		\Ex_{v_1,v_2,\dotsc,v_{j} \sim \distr_k} \left[ \Ex_{v_{j+1} \sim \distr_k} \| g_\firstvj(\distr_{[v_{j+1}]}) - g_\firstvj(\distr_{m-j}) \|^2 \right] = \Ex_{M \sim \distr_{k \times (m-k)}} \| f(\distr_{M,j}) - f(\distr_{M,j+1}) \|^2,
		\]
		and the right side becomes
		\[
		\Ex_{v_1,v_2,\dotsc,v_{j} \sim \distr_k} \left[ 2^{-k} \cdot \Ex_{x \sim \distr_{m-j}} [g_{\firstvj}(x)] \right] = 2^{-k} \cdot \Ex_{x \sim \distr_{m}}[f(x)],
		\]
		which completes the proof.
	\end{proofof}
\end{proof}

Now we move to prove Lemma~\ref{lm:fourier4}.

\begin{proofof}{Lemma~\ref{lm:fourier4}}
	The following proof are quite similar to the proof of Lemma~\ref{lm:fourier2}. Let $g$ be the following function
	\[
	g(x) := \begin{cases}
	f(x) \quad &x \in D\\
	0 \quad &x \notin D.
	\end{cases}
	\]
	Let $N_D := |D|$, and for $M \in \{0,1\}^{k \times (m-k)}$, let $D_{M}$ be the support set of $\distr_{[M]}$, and $N_M := |D \cap D_{M}|$. That is, $N_D$ is the support size of distribution $\distr_{m,D}$, while $N_M$ is the support size of $\distr_{M,D}$.
	
	We need the following claim, whose proof is deferred until we prove the lemma.
	
	\begin{claim}~\label{claim:close2}
		Let $M \sim \distr_{k \times (m-k)}$, with probability $1 - 2^{-k/8}$, we have
		\[
		\left| N_M / N_D - 2^{-(m-k)} \right| \le 2^{-k/8} \cdot 2^{-(m-k)}.
		\]
	\end{claim}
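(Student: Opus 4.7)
The plan is to mimic the structure of the single-secret analogue (Claim~\ref{claim:close}), replacing the use of Lemma~\ref{lm:fourier} with the matrix version Lemma~\ref{lm:fourier3}, and then converting the resulting second-moment bound into a high-probability statement via Markov's inequality.

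First I would introduce the indicator function $I : \{0,1\}^m \to \{0,1\}$ of $D$. By the definitions of $\distr_m$ and $\distr_M$,
\[
\Ex_{x \sim \distr_m}[I(x)] = \frac{N_D}{2^m}, \qquad \Ex_{x \sim \distr_M}[I(x)] = \frac{N_M}{2^k},
\]
so
\[
\bigl\| I(\distr_m) - I(\distr_M) \bigr\| = \left| \frac{N_M}{2^k} - \frac{N_D}{2^m} \right| = \frac{N_D}{2^k} \cdot \left| \frac{N_M}{N_D} - 2^{-(m-k)} \right|.
\]
Applying Lemma~\ref{lm:fourier3} to $I$ and using $\Ex_{x \sim \distr_m}[I(x)] = N_D/2^m \le 1$ yields
\[
\Ex_{M \sim \distr_{k \times (m-k)}} \left[ \bigl\| I(\distr_m) - I(\distr_M) \bigr\|^2 \right] \le 2^{-k} \cdot (m-k)^2 \cdot \frac{N_D}{2^m}.
\]
Substituting the identity above and dividing by $(N_D/2^k)^2$ gives
\[
\Ex_{M} \left[ \left| \frac{N_M}{N_D} - 2^{-(m-k)} \right|^2 \right] \le \frac{(m-k)^2 \cdot 2^{k}}{2^{m} \cdot N_D}.
\]

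Next I would apply the hypothesis $N_D = |D| \ge 2^{m - k/2}$, which gives
\[
\Ex_{M} \left[ \left| \frac{N_M}{N_D} - 2^{-(m-k)} \right|^2 \right] \le (m-k)^2 \cdot 2^{3k/2 - 2m}.
\]
By Markov's inequality, with probability at least $1 - 2^{-k/8}$ over $M \sim \distr_{k \times (m-k)}$,
\[
\left| \frac{N_M}{N_D} - 2^{-(m-k)} \right|^2 \le 2^{k/8} \cdot (m-k)^2 \cdot 2^{3k/2 - 2m}.
\]
Taking square roots and using the assumption $m \le 2^{k/20}$ (so $(m-k) \le m \le 2^{k/20}$), we get
\[
\left| \frac{N_M}{N_D} - 2^{-(m-k)} \right| \le 2^{k/16} \cdot 2^{k/20} \cdot 2^{3k/4 - m} = 2^{-(m-k)} \cdot 2^{k/16 + k/20 + 3k/4 - 2k}.
\]
A straightforward calculation shows $k/16 + k/20 + 3k/4 - 2k < -k/8$ (and the crude slacks from Markov and the $m \le 2^{k/20}$ bound leave plenty of room), so this is bounded by $2^{-k/8} \cdot 2^{-(m-k)}$, which is exactly the desired inequality.

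No step is particularly delicate; the only thing to be careful about is the bookkeeping in the final exponent, where one must verify that the Markov slack ($2^{k/8}$ inside the square, i.e. $2^{k/16}$ after the square root), the $(m-k)^2 \le 2^{k/10}$ bound (squared to $2^{k/20}$ after the root), and the $2^{3k/2 - 2m}$ factor combine to give a quantity smaller than $2^{-k/8} \cdot 2^{-(m-k)}$. Since $1/16 + 1/20 + 3/4 = 69/80 < 7/8$, the inequality holds with slack, and the claim follows.
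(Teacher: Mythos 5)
Your proposal is correct and follows essentially the same route as the paper's own proof: indicator function of $D$, Lemma~\ref{lm:fourier3}, rescaling by $\left(2^k/N_D\right)^2$, then the bounds $N_D \ge 2^{m-k/2}$ and $m \le 2^{k/20}$ followed by Markov's inequality and a square root. The only blemish is a typo in your penultimate display: since $2^{3k/4-m} = 2^{-(m-k)}\cdot 2^{-k/4}$, the residual exponent is $k/16+k/20+3k/4-k = -11k/80$ (not $\cdots-2k$); your final verification $1/16+1/20+3/4 = 69/80 < 7/8$ is exactly the check needed for this corrected exponent, so the argument stands, though the slack is only $k/80$ rather than ``plenty of room.''
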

	
	By Lemma~\ref{lm:fourier}, we have
	\[
	\Ex_{M \sim \distr_{k \times (m-k)}} \| g(\distr_{M}) - g(\distr_{m}) \|^2 \le m^2 \cdot 2^{-k} \cdot \Ex_{x \sim \distr_{m}}[g(x)] \le m^2 \cdot 2^{-k} \cdot N_D / 2^{m}.
	\]
	Equivalently,
	\begin{equation}
	\Ex_{M \sim \distr_{k \times (m-k)}} \left[ \left| \Ex_{x\sim\distr_{M}}[g(x)] - \Ex_{x\sim\distr_{m}}[g(x)] \right|^2 \right] \le m^2 \cdot 2^{-k} \cdot N_D / 2^{m}.
	\label{eq:bound-for-g-2}
	\end{equation}
	
	To make use of the above bound \eqref{eq:bound-for-g-2}, we now relate $g(\distr_{M})$ and $g(\distr_{m})$ to $f(\distr_{M,D})$ and $f(\distr_{m,D})$. From the definition of $g$, we have
	\begin{align}
	\Ex_{x \sim \distr_{M,D}}[f(x)] &= \Ex_{x\sim\distr_{M}}[g(x)] \cdot \frac{2^{k}}{N_M} \tag{the support size of $\distr_M$ is $2^k$} \\
	                                &= \Ex_{x\sim\distr_{M}}[g(x)] \cdot \frac{2^{k}}{N_D / 2^{m-k}} \cdot \frac{N_D}{2^{m-k} N_M} \label{eq:lst-line-2},
	\end{align}
	and
	\begin{align*}
	\Ex_{x \sim \distr_{m,D}}[f(x)] &= \Ex_{x\sim\distr_{m}}[g(x)] \cdot \frac{2^m}{N_D} \\
	                                &= \Ex_{x\sim\distr_{m}}[g(x)] \cdot \frac{2^k}{N_D / 2^{m-k}}.
	\end{align*}
	
	That is, $\Ex_{x \sim \distr_{M,D}}[f(x)]$ and $\Ex_{x \sim \distr_{m,D}}[f(x)]$ are $ \Ex_{x\sim\distr_{M}}[g(x)]$ and $\Ex_{x\sim\distr_{m}}[g(x)]$ scaled by a factor of $\frac{2^k}{N_D / 2^{m-k}}$, except for another $\frac{N_D}{2^{m-k} N_M}$ factor in \eqref{eq:lst-line-2}, which is very close to $1$ for most $M$'s by Claim~\ref{claim:close2}.
	
	We first ignore the $\frac{2^k}{N_D / 2^{m-k}}$ factor in \eqref{eq:lst-line-2}, and define
	\[
	F_{M} := \Ex_{x\sim\distr_{M}}[g(x)] \cdot \frac{2^{k}}{N_D / 2^{m-k}},
	\]
	and
	\[
	F_{m} := \Ex_{x \sim \distr_{m,D}}[f(x)] = \Ex_{x\sim\distr_{m}}[g(x)] \cdot \frac{2^k}{N_D / 2^{m-k}}.
	\]
	
	Scaling each side of \eqref{eq:bound-for-g-2} by $\left( \frac{2^k}{N_D / 2^{m-k}} \right)^2$, we have
	\begin{align*}
	\Ex_{M \sim \distr_{k \times (m-k)}} \left[| F_{M} - F_{m} |^2\right] &\le  m^2 \cdot 2^{-k} \cdot N_D / 2^{m} \cdot \left( \frac{2^k}{N_D / 2^{m-k}} \right)^2 \\
															 &\le 2^{k/10} \cdot 2^{m-k} / N_D \tag{$m \le 2^{k/20}$}\\
															 &\le 2^{k/10} \cdot 2^{-k/2} \tag{$N_D = |D| \ge 2^{m-k/2}$}.
	\end{align*}
	
	That is, by Markov's inequality, when $M \sim \distr_{k \times (m-k)}$, with probability $1-2^{-k/8}$, we have 
	\[
	| F_{M} - F_{k+1} |^2 \le 2^{k/10} \cdot 2^{-k/2} \cdot 2^{k/8} \le 2^{-k/4},
	\]
	which means $|F_{M} - F_{k+1}| < 2^{-k/8}$.
	
	Now, by Claim~\ref{claim:close2}, when $M \sim \distr_{k \times (m-k)}$, with probability $1 - 2^{-k/8}$, we have 
	\begin{align*}
			   &|N_M / N_D - 2^{-(m-k)}| < 2^{-k/8} \cdot 2^{-(m-k)} \\
	\Rightarrow&\left|\frac{N_M \cdot 2^{m-k}}{N_D} - 1\right| < 2^{-k/8} \\
	\Rightarrow&\left|\frac{N_D}{N_M \cdot 2^{m-k}} - 1\right| < 2 \cdot 2^{-k/8}.\\
	\end{align*}
	
	Putting everything together, with probability $1 - 2 \cdot 2^{-k/8}$, we have
	\begin{align*}
	\| f(\distr_{M,D}) - f(\distr_{m,D}) \| &= \left| \Ex_{x \sim \distr_{M,D}}[f(x)] - \Ex_{x \sim \distr_{m,D}}[f(x)] \right|\\
	&= \left| F_{M} \cdot \frac{N_D}{2^{m-k} N_M} - F_{m} \right|\\
	&\le \left| F_{M} - F_{m} \right| + F_{M} \cdot \left|\frac{N_D}{2^{m-k} N_M} - 1\right|\\
	&\le 3 \cdot 2^{-k/8}.
	\end{align*}
	
	That last line follows from that $F_{M} = \Ex_{x\sim\distr_{M}}[g(x)] \cdot \frac{2^{k}}{N_D / 2^{m-k}} \le \frac{N_D}{2^{m}} \cdot \frac{2^m}{N_D} = 1$.
	
	Therefore, we have
	\[
	\Ex_{M \sim \distr_{k\times(m-k)}} \| f(\distr_{M,D}) - f(\distr_{m,D}) \| \le 2 \cdot 2^{-k/8} + 3 \cdot 2^{-k/8} \le 2^{-k/9}.
	\]
\end{proofof}

Finally, we prove Claim~\ref{claim:close2}.

\begin{proofof}{Claim~\ref{claim:close2}}
	Let $I$ be the indicator function for set $D$:
	\[
	I(x) := \begin{cases}
	1 \quad &x \in D\\
	0 \quad &x \notin D.
	\end{cases}
	\]
	
	By Lemma~\ref{lm:fourier3}, we have
	\begin{equation}
	\Ex_{M \sim \distr_{k \times (m-k)}} \left[\left| \Ex_{x \sim \distr_{M}}[I(x)] - \Ex_{x \sim \distr_{m}}[I(x)] \right|^2\right] \le 2^{-k} \cdot m^2 \cdot \Ex_{x \sim \distr_m}[I(x)]
	\label{eq:bound-for-I-2}
	\end{equation}
	
	Note that $\Ex_{x \sim \distr_{M}}[I(x)] = N_M / 2^k$ and $\Ex_{x \sim \distr_{m}}[I(x)] = N_D / 2^m$. Plugging these in \eqref{eq:bound-for-I-2}, we have
	\[
		\Ex_{M \sim \distr_{k \times (m-k)}} \left| N_M / 2^{k} - N_D / 2^{m} \right|^2 \le 2^{-k} \cdot m^2 \cdot N_D / 2^{m}.
	\]
	
	Scaling both sides by $\left(\frac{2^{k}}{N_D} \right)^2$, we have	
	\begin{align*}
	\Ex_{M \sim \distr_{k \times (m-k)}} \left[ \left| N_M / N_D - 2^{-(m-k)} \right|^2 \right] &\le 2^{-k} \cdot m^2 \cdot N_D / 2^{m} \cdot \left(\frac{2^{k}}{N_D} \right)^2\\
																				 &\le 2^{-(m-k)} \cdot m^2 / N_D\\
																				 &\le 2^{-(m-k)} \cdot m^2 \cdot 2^{-(m - k + k/2)} \tag{$N_D = |D| \ge 2^{m-k/2}$}\\
																				 &\le 2^{-2(m-k) - k/2} \cdot 2^{k/10} \tag{$m \le 2^{k/20}$}.\\
	\end{align*}
	
	By Markov's inequality, when $M \sim \distr_{k \times (m-k)}$, with probability $1 - 2^{-k/8}$, we have 
	\[
	\left| N_M / N_D - 2^{-(m-k)} \right|^2 \le 2^{-2(m-k) - k/2} \cdot 2^{k/10} \cdot 2^{k/8} \le 2^{-2(m-k) - k /4},
	\] 	
	which is equivalent to
	\[
	\left| N_M / N_D - 2^{-(m-k)} \right| \le 2^{-k/8} \cdot 2^{-(m-k)}.
	\]
\end{proofof}

\section{A Matching Lower bound}\label{sec:lowerbound}
In this section, we prove that our pseudo-random generator is optimal up to constant factors. That is, we prove that the seed length is optimal. We show that any pseudo-random generator with a seed length of size $s$ can be broken within $O(s)$ rounds (note that our pseudo-random generator is secure up to $\Omega(s)$ rounds when the output of the PRG is of length $n$ for each processor).


\begin{theo}[Seed Length Lower Bound]
Let $n, m$, and $k$ be three integers. Suppose that there are $n$ processors in the Broadcast Congested Clique model. Furthermore, suppose there is a pseudo-random generator which when each processor starts with a seed of size $k$, gives each node a pseudo-random string of size $m$. Then there is an $O(k)$ round Broadcast Congested Clique protocol that can break this PRG.
\end{theo}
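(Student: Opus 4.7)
The plan is a direct counting/entropy argument. Since the output of every processor is a deterministic function of the concatenation of the $n$ seeds (of total length $nk$) together with the public transcript of the PRG protocol itself—which is also determined by those seeds, as the PRG has no randomness beyond the seeds—the joint distribution of all $n$ output strings is supported on a set $T \subseteq \{0,1\}^{nm}$ with $|T| \le 2^{nk}$. In particular, if we look only at the first $k+1$ bits of each processor's output, these lie in a set $S \subseteq \{0,1\}^{n(k+1)}$ with $|S| \le 2^{nk}$, which is an exponentially small fraction of all possible $n(k+1)$-bit strings.

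I would then use the following attacker $\Pi'$. In the first $k+1$ rounds, each processor broadcasts the next bit of its own input. After these rounds the transcript $\tau \in \{0,1\}^{n(k+1)}$ records the first $k+1$ output bits of every processor. Every processor now uses its unbounded local computation to enumerate all $2^{nk}$ possible seed assignments, simulate the PRG protocol on each one to build the set $S$, and accept iff $\tau \in S$ (if desired, one processor can broadcast the one-bit answer in a final round). Under the PRG input distribution $\tau \in S$ deterministically. Under the truly uniform input distribution $\tau$ is uniform on $\{0,1\}^{n(k+1)}$, so $\Pr[\tau \in S] \le |S|/2^{n(k+1)} \le 2^{-n}$. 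The distinguishing advantage is therefore at least $1 - 2^{-n}$, which comfortably exceeds the $1/n$ threshold in the definition of a $\BCAST(1)$ PRG, and the attack uses only $k+1 = O(k)$ rounds of $\BCAST(1)$ communication.

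The only mild subtlety to watch is that the PRG protocol can be interactive: each processor's output may depend on broadcasts sent by other processors during the PRG's construction, not just on its own seed. This is not an actual obstacle because the entire construction (and hence each processor's output) is a deterministic function of the $nk$ seed bits, which justifies both the bound $|S| \le 2^{nk}$ and the brute-force simulation step. Note that no structural property of the PRG is used, so the argument works uniformly against every candidate construction and matches, up to constant factors, the $\Theta(k)$ seed length of Theorem~\ref{theo:PRG-formal}.
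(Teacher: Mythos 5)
Your proposal is correct and follows essentially the same counting argument as the paper: broadcast the first $k+1$ pseudo-random bits of each processor, observe that the pseudo-random case admits at most $2^{nk}$ possible transcripts while the uniform case has $2^{n(k+1)}$ equally likely ones, and accept iff the observed transcript is consistent with some seed, giving advantage $1 - 2^{-n}$ in $O(k)$ rounds. The remark about interactivity of the PRG protocol being harmless (since everything is a deterministic function of the $nk$ seed bits) is a correct and slightly more explicit justification of the bound the paper uses implicitly.
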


\begin{proof}
Consider the following $k+1$-round protocol: each processor broadcasts its first $k+1$ pseudo-random bits. In the case that these bits are pseudo-random, we know that since $nk$ random bits were used as a seed to construct these strings, the transcript of the first round must be one of $2^{nk}$ options. However, in the truly random case, there are $2^{n(k+1)}$ options. So, consider the algorithm that outputs $1$ if the transcript is one of the $2^{nk}$ options consistent with the pseudo-random generator, and otherwise outputs a $0$. Then if the pseudo-random generator was used, then the probability of outputting a $1$ is $1$. In the truly random case, the probability of outputting a $1$ is $\frac{2^{nk}}{2^{n(k+1)}} = \frac{1}{2^n}$. Hence, this algorithm distinguished between the truly random and the pseudo-random case with all but an exponentially small probability.
\end{proof}


\section{Discussion}




Our paper leaves some problems open. A main open problem is whether it is possible to improve the planted clique lower bound to show that if the clique is of size $k = \Theta(n^{1/2 - \epsilon})$, the planted clique problem still requires a number of rounds polynomial in $n$.

It would be interesting to extend the framework to work for undirected graphs as well. This causes the rows of the input matrix not to be independent (instead, each pair of rows contain one shared bit). Our current proofs rely on the rows of the input being independent, but we believe it may be possible to extend the framework to also work when the rows exhibit a small amount of dependence.

There are many problems in the $\BCAST$ model that may be interesting to try to prove lower bounds for using the techniques in this paper. These include counting triangles (or $K_4$s) in random graphs, constructing an MST on a complete graph with random weights to the edges, finding communities in a graph sampled from the stochastic block model, the ``planted Hamiltonian cycle'' problem (or, determining whether there is a Hamiltonian cycle in a random graph where the probability of an edge being included is chosen properly so that the probability of such a cycle existing is some constant), graph connectivity, finding the diameter of a random graph (the average degree must be chosen to be low enough so that the diameter is not $2$ with high probability), and APSP on a complete graph with random weight assignments. There are many possibilities.

The uniform distribution is not necessarily the most natural distribution to consider in the broadcast congested clique model. Our techniques are more general and could hopefully be used to prove lower bounds for other distributions as well. It would be interesting to consider other input distributions which may be ``natural" for the problem at hand.

\bibliographystyle{alpha}
\bibliography{team}
\appendix

\section{An Analogue of Newman's Theorem in \BCAST(1)}\label{newman}

In this appendix we adapt Newman's technique from \cite{newman1991private} to show that in the computationally unbounded setting, every randomized $k$ rounds $\BCAST(1)$ protocols in which there are $n$ processors, each with $m$ input bits and outputs $k$ bits at the end, can be simulated with only $O(k \cdot n + \log m)$ public random bits (this has not been observed before this work in the context of the broadcast congested clique). We note that Newman's approach can be adapted to the Unicast Congested Clique model (where each vertex may send different messages to different nodes, instead of broadcasting the same message to all other nodes).

\newcommand{\Pnew}{P_{\sf new}}

Let $\vec{x} = (x_1,x_2,\dotsc,x_n) \in \left(\{0,1\}^{m}\right)^n$ be an input, we use
$\distrP(P,\vec{x})$ to denote the joint distribution of the transcript and the concatenation of all processors' output bits of the $k$-round $\BCAST(1)$ protocol $P$ running on input $\vec{x}$, that is, $\distrP(P,\vec{x})$ is a distribution on $\{0,1\}^{2 k n}$.

We say a protocol $\Pnew$ $\epsilon$-simulates another protocol $P$, if for all possible input $\vec{x}$, we have $\| \distrP(P,\vec{x}) - \distrP(\Pnew,\vec{x}) \| < \epsilon$.

\begin{theo}\label{theo:newman-BCAST}
	Let $P$ be a randomized $\BCAST(1)$ protocol with $n$ processors, each with $m$ input bits and outputs $k$ bits at the end. For all $\epsilon > 0$, there is an equivalent randomized $\BCAST(1)$ protocol $\Pnew$ $\epsilon$-simulating $P$ with only $O(k \cdot n + \log(m) + \log\epsilon^{-1})$ public random bits.
\end{theo}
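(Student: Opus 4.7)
The plan is to adapt Newman's classical sampling argument \cite{newman1991private} to the $\BCAST(1)$ setting. I would view $P$ as a deterministic function of the input $\vec{x}$ and a single random tape $r$ collecting the private randomness of all $n$ processors, and write $y(\vec{x},r) \in \{0,1\}^{2kn}$ for the transcript-and-output string produced, so that $\mu_{\vec{x}}(y) := \Pr_r[y(\vec{x},r) = y]$ is exactly the distribution $\distrP(P,\vec{x})$.

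First I would sample $T$ tapes $r_1,\dots,r_T$ i.i.d.\ uniformly and define $\Pnew$ to consume $\lceil \log T \rceil$ public random bits in order to pick an index $i \in [T]$, after which every processor runs its part of $P$ on the (now publicly announced) tape $r_i$. Writing $\hat{\mu}_{\vec{x}}$ for the resulting empirical distribution on $\{0,1\}^{2kn}$, the goal is to pick $T$ so that with positive probability over the choice of tapes one has $\|\hat{\mu}_{\vec{x}} - \mu_{\vec{x}}\| \le \epsilon$ simultaneously for \emph{every} input $\vec{x}$.

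The main per-input concentration step uses a heavy/light split of the outcome space. Fix a threshold $\tau \asymp \epsilon/2^{2kn}$ and call $y$ heavy for $\vec{x}$ if $\mu_{\vec{x}}(y) \ge \tau$. Heavy outcomes number at most $1/\tau$, and a multiplicative Chernoff bound controls each $|\hat{\mu}_{\vec{x}}(y) - \mu_{\vec{x}}(y)|$ within $\epsilon \cdot \mu_{\vec{x}}(y)$, so summing gives an $O(\epsilon)$ contribution to the total variation. Light outcomes carry $\mu_{\vec{x}}$-mass at most $2^{2kn}\tau = O(\epsilon)$ by construction, and an additive Chernoff bound applied to the single Bernoulli event ``the drawn tape lands on a light outcome for $\vec{x}$'' shows that $\hat{\mu}_{\vec{x}}$ assigns light outcomes similar total mass, except with failure probability $\exp(-\Omega(T\epsilon))$.

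Finally I would union bound over the $2^{nm}$ possible inputs: demanding per-input failure probability below $2^{-nm-1}$ forces $T\tau\epsilon^2 = \Omega(nm + \log(1/\tau) + \log\epsilon^{-1})$, yielding $T = 2^{2kn}\cdot\operatorname{poly}(n,m,\epsilon^{-1})$ and hence $\log T = O(kn + \log(nm) + \log\epsilon^{-1}) = O(kn + \log m + \log\epsilon^{-1})$, with the $\log n$ term absorbed into $kn$. A probabilistic argument then produces a single realization of $r_1,\dots,r_T$ that works for all inputs simultaneously. The hard part is merely bookkeeping: one must choose the threshold $\tau$ so that the three contributions to the total variation (the $\mu_{\vec{x}}$-mass of light outcomes, the $\hat{\mu}_{\vec{x}}$-mass of light outcomes, and the multiplicative deviations on heavy outcomes) balance to $O(\epsilon)$ while the union-bound budget over $2^{nm}$ inputs still leaves $\log T$ at the claimed bound, which is standard but requires care with the exponents.
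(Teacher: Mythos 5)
Your proposal is correct, and its skeleton is exactly the paper's (and Newman's): sample $T$ full random tapes, hard-wire them into $\Pnew$, spend $\lceil \log T\rceil$ public bits to select one, and argue by Chernoff plus a union bound over all $2^{nm}$ inputs that some fixed choice of tapes makes the empirical transcript-and-output distribution $\epsilon$-close to $\distrP(P,\vec{x})$ for every $\vec{x}$ simultaneously; your treatment of private randomness (collecting all processors' tapes into one string, each processor reading only its own portion of the selected tape) is equivalent to the paper's ``WLOG $P$ is public-coin'' step. Where you diverge is in the key concentration step. The paper never touches individual outcomes: it rewrites the total variation distance as the supremum over Boolean tests $f:\{0,1\}^{2kn}\to\{0,1\}$ of $\bigl|\Ex_{p\sim\distrP(P,\vec{x})}[f(p)]-\Ex_{p\sim\distrP(\Pnew,\vec{x})}[f(p)]\bigr|$, applies a single additive Chernoff bound per pair (input, test), and union bounds over all $2^{2^{2kn}}$ tests and $2^{nm}$ inputs, which forces $T=\Theta\bigl(\epsilon^{-2}(nm+2^{2kn})\bigr)$. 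You instead bound the total variation distance directly by a heavy/light split of the $2^{2kn}$ outcomes at threshold $\tau\asymp\epsilon/2^{2kn}$, using multiplicative Chernoff on the at most $1/\tau$ heavy atoms and an additive bound on the light mass, and union bound over inputs and heavy atoms; this yields a somewhat larger $T$ (an extra $1/\epsilon$ and $\operatorname{poly}(n,m)$ factor), but after taking logarithms both routes give the same seed length $O(kn+\log m+\log\epsilon^{-1})$. The trade-off is that the paper's distinguisher-based union bound is essentially bookkeeping-free (one Chernoff application, one doubly exponential count), while your per-outcome analysis requires the threshold balancing you describe but avoids the doubly exponential union bound and makes the dependence on the outcome space more transparent. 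Both are complete proofs of the stated theorem.
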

\begin{proof}
	In the following we are just going to mimic the proof of Newman's theorem.
	
	Without loss of generality we can assume $P$ is a public coin protocol. Suppose it makes use of at most $N$ public coins, where $N$ can be arbitrary large.
	
	Fix an input $x \in \left(\{0,1\}^{m}\right)^n$. Note that $\| \distrP(P,\vec{x}) - \distrP(\Pnew,\vec{x}) \| < \epsilon$ is equivalent to that for all function $f : \{0,1\}^{2 k n} \to \{0,1\}$,
	\[
	\left| \Ex_{p \sim \distrP(P,\vec{x})}[ f(p) ] - \Ex_{p \sim \distrP(\Pnew,\vec{x})}[ f(p) ] \right| < \epsilon.
	\]
	
	We then fix a function $f$. Suppose we draw a public random string $w \sim \distr_{N}$, and we use $P_{w}$ to denote protocol $P$ with public random string setting to $w$ and $P_{w}(\vec{x})$ to denote the concatenation of its transcript and all processor' output bits on input $\vec{x}$.
	
	Now, suppose we pick $T$ $w_1,w_2,\dotsc,w_{T}$ uniform random samples from $\distr_{N}$, by a simple Chernoff bound, we have
	
	\[
	\Pr\left[ \left| \frac{1}{T} \cdot \sum_{i=1}^{T} f(P_{w_i}(\vec{x})) - \Ex_{p \sim \distrP(\Pnew,\vec{x})}[ f(p) ] \right| > \epsilon \right] < \exp(-\Omega(\epsilon^2 \cdot T)).
	\]
	
	Setting
	\[
	T = \Theta(\epsilon^{-2} \cdot \left( nm + 2^{2kn} \right)),
	\]
	it follows
	\[
	\Pr\left[ \left| \frac{1}{T} \cdot \sum_{i=1}^{T} f(P_{w_i}(\vec{x})) - \Ex_{p \sim \distrP(\Pnew,\vec{x})}[ f(p) ] \right| > \epsilon \right] < \frac{1}{10 \cdot 2^{nm} \cdot 2^{2^{2kn}}}.
	\]
	
	Since there are at most $2^{2^{2kn}}$ functions and $2^{nm}$ input bits, by a simple union bound, we have with probability at least $0.9$ over our $T$ samples, $\left| \frac{1}{T} \cdot \sum_{i=1}^{T} f(P_{w_i}(\vec{x})) - \Ex_{p \sim \distrP(\Pnew,\vec{x})}[ f(p) ] \right| < \epsilon$ for all input $\vec{x}$ and function $f : \{0,1\}^{2kn} \to \{0,1\}$.
	
	So we can just pick $T$ samples $w_1,w_2,\dotsc,w_T$ satisfying the above condition, and define $\Pnew$ as the protocol that makes use of $\log T = O(k n + \log m + \log \epsilon^{-1})$ coins to select a random index $i \in [T]$, and act according to $P_{w_i}$. It $\epsilon$-simulates $P$ by the above discussions.
\end{proof}

\begin{rem}
	We remark that in the worst case, at least $\Omega(k \cdot n)$ bits are required to $\epsilon$-simulate a $k$-round \BCAST(1) protocol $P$ where each processor outputs $k$ bits. Since if all processors output $k$ uniform random bits, the total entropy of $\distrP(P,\vec{x})$ on any input $\vec{x}$ is at least $k \cdot n$.
\end{rem}
\section{Algorithm for Planted Clique in $\BCAST(1)$}

In this section we give an algorithm for finding planted clique in $\BCAST(1)$.

\begin{theo}\label{theo:plant-clique-algo}
	Let $n$ be an integer and $ \omega(\log^2 n) \le k \le n$. Given an input from $\distrA_k$, there is an $O(n/k \cdot \polylog(n))$ round $\BCAST(1)$ protocol such that at the end of the protocol, with probability at least $1 - 1/n^2$, all processors know the hidden clique $C$.
\end{theo}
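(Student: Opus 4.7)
The plan is to reduce the problem to a much smaller planted-clique instance by sampling a random seed set $R$ of size $\Theta(n\log n/k)$, broadcasting the adjacency of every processor to $R$, and then using the model's unbounded local computation to recover the hidden clique $C$ from a tiny directed clique found inside the induced subgraph $G[R]$.

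Concretely, let $c$ and $\alpha$ be suitable constants with $c < \alpha$, and set $p := \alpha \log n / k$. First I would have every processor independently flip a coin with bias $p$ and broadcast the outcome in a single round, forming $R$. A Chernoff bound together with the hypothesis $k = \omega(\log^2 n)$ (which makes $p \ll 1$) gives $|R \cap C| \in [c \log n, 2\alpha \log n]$ and $|R| = \Theta(n \log n / k)$ with probability at least $1 - n^{-3}$. Next I would have each processor $v$ broadcast its $|R|$-bit out-adjacency to $R$ in $|R|$ rounds; after this phase, every processor knows the full matrix $A_{[n], R}$, and in particular the induced sub-digraph $G[R]$. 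Finally, using only local computation, every processor enumerates all $\binom{|R|}{s}$ subsets of $R$ of size $s := c \log n$, picks the lexicographically smallest $S$ that induces a directed $s$-clique in $G[R]$, and outputs $V(S) := \{v \in [n] : A_{v,r} = 1 \text{ for all } r \in S\}$. All processors run the same deterministic rule on the same public data and therefore output the same set.

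The hard part is the correctness analysis, which splits into two estimates. First I would show that with high probability no ``false'' $s$-clique lives in $G[R]$, where $S \subseteq R$ is called false if $|S \setminus C| \geq 1$: writing $s_1 := |S \cap C|$ and $s_2 := |S \setminus C| \geq 1$, the $s_2(2s_1 + s_2 - 1)$ directed edges of $S$ incident to $S \setminus C$ are independent fair coins, so $\Pr[S \text{ is a directed clique}] = 2^{-s_2(2 s_1 + s_2 - 1)}$. Using $\binom{a}{b} \leq (e a/b)^b$, the expected number of false $s$-cliques in $G[R]$ is at most
\[
\sum_{s_2 = 1}^{s} \binom{|R \cap C|}{s - s_2}\binom{|R \setminus C|}{s_2}\, 2^{-s_2(2 s - s_2 - 1)} \;\leq\; n^{-\Omega(c)},
\]
which is bounded by $n^{-3}$ once the internal constants $\alpha$ and $c$ are balanced (the $s_2 = 1$ term is dominant, and the binding condition works out to roughly $\log_2(e\alpha/c) < 2 - O(1/c)$, i.e., $\alpha$ not too much larger than $c$). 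A Markov bound then guarantees $S \subseteq R \cap C \subseteq C$. Second, given any $S \subseteq C$ of size $s$, every $v \in C$ is automatically in $V(S)$, while every $v \notin C$ lies in $V(S)$ independently with probability $2^{-s} = n^{-c}$ (the bits $A_{v, r}$ for $r \in S$ are i.i.d.\ fair coins); a union bound over the at most $n$ non-clique vertices gives $V(S) = C$ with probability at least $1 - n^{1-c} \geq 1 - n^{-2}$ for $c \geq 3$.

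The main obstacle is exactly this false-clique estimate: since $|R \cap C|$ and $s$ are both $\Theta(\log n)$, the two binomial factors can each be as large as $n^{O(1)}$, so the exponential decay $2^{-\Omega(s^2)}$ from the $s_2 \approx 1$ term has to beat them with significant slack, and the constants hiding in $|R \cap C|$ and $s$ must be tuned consistently. The hypothesis $k = \omega(\log^2 n)$ provides the slack to make everything fit: it ensures $p \ll 1$, $|R| \ll n$, and a strong enough Chernoff concentration of $|R \cap C|$. The round complexity is dominated by the broadcast phase and equals $|R| + O(1) = O((n/k)\polylog n)$, as required.
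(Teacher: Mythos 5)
Your overall strategy (sparse random seed set $R$, publish everyone's adjacency to $R$, find a small clique inside $G[R]$, then extend by a common-neighborhood test) is a legitimate alternative to the paper's proof, but as written it has two genuine gaps. The first is the central false-clique estimate. You only pin $|R\cap C|$ down to the interval $[c\log n,\,2\alpha\log n]$, and with $|R\cap C|$ as large as $2\alpha\log n$ while $s=c\log n$ with $c<\alpha$, the dominant $s_2=1$ term is of order $\binom{2\alpha\log n}{c\log n-1}\cdot|R\setminus C|\cdot 2^{-(2s-2)} \approx n^{\,2\alpha H(c/2\alpha)+1-2c}$, and since the binary entropy satisfies $H(x)>2x$ for all $x<1/2$, this exponent is positive: the expected number of false $s$-cliques is then polynomially \emph{large}, not $n^{-\Omega(c)}$ (e.g.\ already for $k=\log^3 n$, where $|R\setminus C|=n^{1-o(1)}$). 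So the displayed inequality does not follow from what you proved; your own ``binding condition'' $\log_2(e\alpha/c)<2$ implicitly assumes $|R\cap C|\approx\alpha\log n$, which is a much tighter concentration than the one you stated, and it then forces $1<\alpha/c<4/e\approx 1.47$ together with large enough $\alpha,c$ to make the Chernoff tails themselves $n^{-3}$. The constants can indeed be balanced, but this tuning is exactly the delicate step, and the proof as written does not carry it out correctly.

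The second gap is the claim that every $v\notin C$ lies in $V(S)$ ``independently with probability $2^{-s}$.'' This is fine for $v\notin R$, but false as stated for the up to $\Theta(n\log n/k)$ vertices $v\in R\setminus C$: the set $S$ is selected as a function of $G[R]$, which contains $v$'s own edges to $R\cap C$, so independence fails. One has to patch this, e.g.\ by a union bound over all pairs $(v,T)$ with $v\in R\setminus C$ and $T$ an $s$-subset of $R\cap C$ (which imposes the stricter condition $c-\alpha H(c/\alpha)\gg 1$ and further constrains the constants), or by noting that conditioned on ``no false clique'' the set $S$ is determined by $R\cap C$ alone and invoking a Harris/FKG-type argument to show the conditioning only decreases the probability that $v$ sees all of $S$. (There is also the trivial slip that with $A_{v,v}=0$ a vertex $v\in S$ fails your test $A_{v,r}=1$ for all $r\in S$; define $V(S):=S\cup\{v: A_{v,r}=1\ \forall r\in S\setminus\{v\}\}$.) For comparison, the paper's proof avoids both issues by sampling at rate $\log^2 n/k$, so that $\Theta(\log^2 n)$ clique vertices are sampled; it takes the \emph{largest} clique $C_{\sf active}$ of the sampled subgraph, which can contain at most $O(\log n)$ spurious vertices because a random graph has no clique of size $10\log n$, and then extends by the robust test ``adjacent to a $9/10$ fraction of $C_{\sf active}$,'' which tolerates the spurious vertices and requires no balancing of constants and no exact-independence claim.
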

\begin{proof} 
	Let $p = \frac{1}{k} \cdot \log^2 n$.
	
	\paragraph*{Algorithm.}
	The algorithm is very simple. 
	
	\begin{itemize}
		\item At the first round of the protocol, each processor decides to stay active with probability $p$, and broadcasts whether it is active to everyone else. 
		
		\item Let $N_{\sf active}$ be the number of active processors, if $N_{\sf active} > 2 \cdot n \cdot p $, all processors just terminate. 
		
		\item Each active processors broadcast whether it has an edge to each other active processor, which takes $O(n \cdot p) = O(n/k \cdot \polylog(n))$ rounds (i.e., all information about the subrgraph induced by the active processors is broadcasted).
		
		\item Now everyone knows the induced subgraph $G_{\sf active}$ consisting of all active processors. Let the largest clique in $G_{\sf active}$ be $C_{\sf active}$.
		If $|C_{\sf active}| < \frac{1}{2} \cdot \log^2 n$, all processors terminate.
		
		\item Every processor (including the non-active ones) checks whether it is connected to at least a $9/10$ fractions of vertices in $C_{\sf active}$, and if it is, it broadcasts with a message saying it is in the clique $C$. (if it is already in $C_{\sf active}$, then it also says that.)
	\end{itemize}

	\paragraph*{Analysis.} Intuitively, the algorithm works because a random graph doesn't contain a clique of size $10 \log n$ with high probability. And in the hidden clique case, if we pick each vertex with probability $p$, then in expectation we would pick $p \cdot k = \log^2 n$ vertices in $C$, and therefore $|C_{\sf active}| \ge \frac{1}{2} \cdot \log^2 n$ with high probability, while in a random graph the largest clique is of size $\Theta(\log n)$ with high probability.
	
	Let $X_i$ be the random variable indicating whether processor $i$ is active. And let $Y_i$ be the random variable indicating whether processor $i$ is both in the clique and active.
	
	Note that $X_i$'s are i.i.d., by the multiplicative Chernoff bound, we have
	\[
	\Pr\left[N_{\sf active} = \sum_{i=1}^{n} X_i > (1 + \delta) \cdot p \cdot n\right] \le e^{ - \frac{\delta \cdot p \cdot n}{3}}.
	\]
	
	Setting $\delta = 1$, we have with high probability, $N_{\sf active} \le 2 \cdot p \cdot n$.
	
	Note that although $Y_i$'s are not independent, they are negatively associated, and we have the following by another multiplicative Chernoff bound,
	\[
		\Pr\left[\sum_{i=1}^{n} Y_i < (1 - \delta) \cdot p \cdot k\right] \le e^{ - \frac{\delta^2 \cdot p \cdot k}{2}}.
	\]
	
	Set $\delta = 0.5$. With high probability, there are more than $\frac{1}{2} \cdot p \cdot k = \frac{1}{2} \cdot \log^2 n$ active vertices in $C$.
	
	Finally, since with high probability, a random graph doesn't contain a clique of size larger than $10 \log n$. We can conclude that with high probability, at least $\frac{1}{2} \log^2 n - 10\log n$ vertices in $C_{\sf active}$ are actually in $C$. And it is easy to see that the last step of the algorithm identifies the clique $C$ correctly with high probability.
\end{proof}



\end{document}